\newcommand{\negpar}[1][-1em]{%
  \ifvmode\else\par\fi
  {\parindent=#1\leavevmode}\ignorespaces
}
\crefname{section}{Section}{Sections}
\crefname{appendix}{Appendix}{Appendices}
\crefname{figure}{Figure}{Figures}
\crefname{equation}{Equation}{Equations}
\newtheorem{theorem}{Theorem}[section]
\crefname{theorem}{Theorem}{Theorems}
\theoremstyle{definition}
\newtheorem{definition}[theorem]{Definition}
\crefname{definition}{Definition}{Definitions}
\newtheorem{corollary}[theorem]{Corollary}
\crefname{corollary}{Corollary}{Corollaries}
\crefname{proposition}{Proposition}{Propositions}
\newtheorem{lemma}[theorem]{Lemma}
\crefname{lemma}{Lemma}{Lemmas}
\newtheorem{claim}[theorem]{Claim}
\crefname{claim}{Claim}{Claims}
\newtheoremstyle{myremark}     {10pt}{10pt}{}{}{\scshape}{.}{.5em}{}
\theoremstyle{myremark}
\newtheorem{remark}{Remark}[section]
\Crefname{remark}{Remark}{Remarks}
\newcommand{\poly}{\ensuremath{\mathsf{poly}}\xspace}
\newcommand{\negl}{\ensuremath{\mathsf{negl}}\xspace}
\newcommand{\crs}{\ensuremath{\mathsf{crs}}}
\newcommand{\aux}{\ensuremath{\mathsf{aux}}\xspace}
\newcommand{\Gen}{\ensuremath{\mathsf{Gen}}\xspace}
\newcommand{\NoteGen}{\ensuremath{\mathsf{NoteGen}}\xspace}
\newcommand{\tensor}{\otimes}
\newcommand\urand{\stackrel{\mathclap{\mbox{\text{\tiny\$}}}}{\gets}}
\newcommand{\Ch}{\mathsf{Ch}}
\newcommand{\NP}{\mathsf{NP}}
  \newcommand{\cA}{\ensuremath{{\mathcal A}}\xspace}
  \newcommand{\cB}{\ensuremath{{\mathcal B}}\xspace}
  \newcommand{\cD}{\ensuremath{{\mathcal D}}\xspace}
  \newcommand{\cE}{\ensuremath{{\mathcal E}}\xspace}
  \newcommand{\cI}{\ensuremath{{\mathcal I}}\xspace}
  \newcommand{\cJ}{\ensuremath{{\mathcal J}}\xspace}
  \newcommand{\cM}{\ensuremath{{\mathcal M}}\xspace}
  \newcommand{\cO}{\ensuremath{{\mathcal O}}\xspace}
  \newcommand{\cP}{\ensuremath{{\mathcal P}}\xspace}
  \newcommand{\cR}{\ensuremath{{\mathcal R}}\xspace}
  \newcommand{\cS}{\ensuremath{{\mathcal S}}\xspace}
  \newcommand{\cW}{\ensuremath{{\mathcal W}}\xspace}
  \newcommand{\cX}{\ensuremath{{\mathcal X}}\xspace}
  \newcommand{\cY}{\ensuremath{{\mathcal Y}}\xspace}
\newcommand{\zo}{\ensuremath{{\{0,1\}}}\xspace}
\newcommand{\state}{\ensuremath{\mathsf{st}}}
\newcommand{\st}{\: : \:}
\newcommand{\cL}{\ensuremath{{\mathcal L}}\xspace}
\newcommand{\Dec}{\mathsf{Dec}}
\newcommand{\Enc}{\mathsf{Enc}}
\newcommand{\sP}{\mathsf{P}}
\newcommand{\sV}{\mathsf{V}}
\newcommand{\Sim}{\mathsf{Sim}}
\newcommand{\Com}{\mathsf{Com}}
\newcommand{\td}{\mathsf{td}}
\newcommand{\bbN}{\mathbb{N}}
\newcommand{\commentout}[1]{}
\newcommand{\Ver}{\ensuremath{\mathsf{Ver}}\xspace}
\newcommand{\Prove}{\ensuremath{\mathsf{Prove}}\xspace}
\newcommand{\Verify}{\ensuremath{\mathsf{Verify}}\xspace}
\newcommand{\Sign}{\ensuremath{\mathsf{Sign}}\xspace}
\newcommand{\Setup}{\ensuremath{\mathsf{Setup}}\xspace}
\newcommand{\sk}{\ensuremath{\mathsf{sk}}\xspace}
\newcommand{\pk}{\ensuremath{\mathsf{pk}}\xspace}
\newcommand{\ct}{\ensuremath{\mathsf{ct}}\xspace}
\newcommand{\id}{\mathsf{id}}
\newcommand{\Ext}{\ensuremath{\mathsf{Ext}}}
\newcommand{\Revoke}{\ensuremath{\mathsf{Revoke}}}
\newcommand{\revnotice}{\ensuremath{\mathsf{revnotice}}}
\newcommand{\VerRevoke}{\ensuremath{\mathsf{VerRevoke}}}
\newcommand{\Issue}{\ensuremath{\mathsf{Issue}}}
\newcommand{\cred}{\ensuremath{\mathsf{cred}}}
\newcommand{\access}{\ensuremath{\mathsf{access}}}
\newcommand{\nym}{\ensuremath{\mathsf{nym}}}
\newcommand{\IGen}{\ensuremath{\mathsf{IssuerKeyGen}}}
\newcommand{\VerCred}{\ensuremath{\mathsf{VerifyCred}}}
\begin{document}

\title{Unclonable Non-Interactive Zero-Knowledge}
\author{Ruta Jawale\thanks{University of Illinois at Urbana-Champaign, USA. Email:\{jawale2,dakshita\}@illinois.edu} \and Dakshita Khurana$^{\ast}$}
\date{}
\maketitle 

\begin{abstract}
A non-interactive ZK (NIZK) proof enables verification of NP statements without revealing secrets about them. However, an adversary that obtains a NIZK proof may be able to clone this proof and distribute arbitrarily many copies of it to various entities: this is inevitable for any proof that takes the form of a classical string. In this paper, we ask whether it is possible to rely on quantum information in order to build NIZK proof systems that are impossible to clone.

We define and construct {\em unclonable non-interactive zero-knowledge arguments (of knowledge)} for NP, addressing a question first posed by Aaronson (CCC 2009). Besides satisfying the zero-knowledge and argument of knowledge properties, these proofs additionally satisfy unclonability. Very roughly, this ensures that no adversary can split an honestly generated proof of membership of an instance $x$ in an NP language $\mathcal{L}$ and distribute copies to multiple entities that all obtain accepting proofs of membership of $x$ in $\mathcal{L}$. Our result has applications to {\em unclonable signatures of knowledge}, which we define and construct in this work; these \emph{non-interactively} prevent replay attacks.
\end{abstract}

\newpage
\tableofcontents

\newpage
\section{Introduction}

Zero-knowledge (ZK)~\cite{GMR89}
proofs 
allow a prover to convince a verifier about the truth of an (NP) statement, without revealing secrets about it.
These 
are among the most widely used cryptographic primitives, with a rich history of study.

\paragraph{Enhancing Zero-knowledge.}
ZK proofs for NP are typically defined via the simulation paradigm.
A simulator is a polynomial-time algorithm that mimics the interaction of an adversarial verifier with an honest prover, given only the statement, i.e., $x \in \cL$, for an instance $x$ of an $\mathsf{NP}$ language $\cL$.
A protocol satisfies zero-knowledge if it admits a simulator that generates a view for the verifier, which is indistinguishable from the real view generated by an honest prover.
This captures the intuition that any information obtained by a verifier upon observing an honestly generated proof, could have been generated by the verifier ``on its own'' by running the simulator.

Despite being widely useful and popular, there are desirable properties of proof systems that (standard) simulation-based security does not capture.
For example, consider (distributions over) instances $x$ of an NP language $\cL$ where it is hard to find an NP witness $w$ corresponding to a given instance $x$.
In an ``ideal'' world, given just the description of one such NP statement $x \in \cL$, it is difficult for an adversary to find an NP witness $w$, and therefore to output {\em any} proofs of membership of $x \in \cL$.
And yet, upon obtaining a {\em single proof} of membership of $x \in \cL$, it may suddenly become feasible for an adversary to make many copies of this proof, thereby generating {\em several} correct proofs of membership of $x \in \cL$.

Unfortunately, this attack is inevitable for classical non-interactive proofs: given any proof string, an adversary can always make multiple copies of it.
And yet, there is hope to prevent such an attack quantumly, by relying on the {\em no-cloning} principle.

Indeed, a recent series of exciting works have combined cryptography with the no-cloning principle to develop quantum money~\cite{Wie83,AaronsonC13,FGH+12,Zhandry19,Kan18}, quantum tokens for digital signatures~\cite{BS16}, quantum copy-protection~\cite{Aar09,AL21,ALL+21,CLLZ21}, unclonable encryption~\cite{Got03,BL20,AK21,MST21,AKLLZ22}, unclonable decryption~\cite{GZ20}, one-out-of-many unclonable security~\cite{KN23}, and more.
In this work, we combine zero-knowledge and unclonability to address a question first posed by Aaronson~\cite{Aar09}:
\begin{center}
    \emph{Can we construct unclonable quantum proofs?\\ How do these proofs relate to quantum money or copy-protection?}
\end{center}

\subsection{Our Results}
We define and construct unclonable non-interactive zero-knowledge argument of knowledge (NIZKAoK). We obtain a construction in the common reference string (CRS) model, as well as one in the quantum(-accessible) random oracle model (QROM).
The CRS model allows a trusted third-party to set up a structured string that is provided to both the prover and verifier.
On the other hand, the QROM allows both parties quantum access to a truly random function $\cO$.

In what follows, we describe our contributions in more detail.

\subsubsection{Definitional Contributions}

Before discussing how we formalize the concept of unclonability for NIZKs, it will be helpful to define hard distributions over NP instance-witness pairs.

\paragraph{Hard Distributions over Instance-Witness Pairs.}
Informally, an efficiently samplable distribution over instance-witness pairs of a language $\cL$ is a ``hard'' distribution if given an instance sampled randomly from this distribution, it is hard to find a witness.
Then, unclonable security requires that no adversary given an instance $x$ sampled randomly from the distribution, together with an honestly generated proof, can output {\em two accepting proofs} of membership of $x \in \cL$.

More specifically, a hard distribution $(\cX, \cW)$
over $R_{\cL}$
satisfies the following:
for any polynomial-sized (quantum) circuit family $\{C_\lambda\}_{\lambda \in \mathbb{N}}$,
\vspace{-2mm}
\begin{equation*}
  \Pr_{(x,w) \leftarrow (\cX_\lambda,\cW_\lambda)}[C_\lambda(x) \in R_{\cL}(x)] \leq \mathsf{negl}(\lambda).
\end{equation*}

For the sake of simplifying our subsequent discussions and definitions, let us fix a $\NP$ language $\cL$ with corresponding relation $\cR$.
Let $(\cX, \cW)$ be some hard distribution over $\cR$.

\paragraph{A Weaker Definition: Unclonable Security.}
For NIZKs satisfying standard completeness, soundness and ZK, we define a simple, natural variant of unclonable security as follows. 
Informally, a proof system satisfies unclonable security if, given an honest proof for an instance and witness pair $(x, w)$ sampled from a hard distribution $(\cX, \cW)$, no adversary can produce two proofs that verify with respect to $x$ except with negligible probability.
\vspace{-2mm}
\begin{definition} (Unclonable Security of NIZK).
\label{def:uncnizk}
A NIZK proof  $(\mathsf{Setup},\mathsf{Prove},\mathsf{Verify})$ satisfies unclonable security if for every language $\mathcal{L}$ and every hard distribution $(\mathcal{X},\mathcal{W})$ over $R_{\mathcal{L}}$, for every poly-sized quantum circuit family $\{C_\lambda\}_{\lambda \in \mathbb{N}}$,
\vspace{-4mm}
$$\Pr_{(x,w) \leftarrow (\mathcal{X}_\lambda,\mathcal{W}_\lambda)}\Bigg[
\mathsf{Verify}(\mathsf{crs},x,\pi_1) = 1 \bigwedge 
\mathsf{Verify}(\mathsf{crs},x,\pi_2) = 1
\Bigg|
\substack{(\mathsf{crs}, \td) \leftarrow \mathsf{Setup}(1^\lambda)\\
\pi \leftarrow \mathsf{Prove}(\mathsf{crs},x,w)\\
\pi_1, \pi_2 \leftarrow C_\lambda(x, \pi)
}
\Bigg]
\leq \mathsf{negl}(\lambda).
\vspace{-1mm}
$$
\end{definition}

In the definition above, we aim to capture the intuition that one of the two proofs output by the adversary can be the honest proof they received, but the adversary cannot output any other correct proof for the same statement. Of course, such a proof is easy to generate if the adversary is able to find the witness $w$ for $x$, which is exactly why we require hardness of the distribution $(\cX,\cW)$ to make the definition non-trivial.

We also remark that unclonable security of proofs {\em necessitates} that the proof $\pi$ keep hidden any witnesses $w$ certifying membership of $x$ in $\cL$, as otherwise an adversary can always clone the proof $\pi$ by generating (from scratch) another proof for $x$ given the witness $w$.

\paragraph{A Stronger Definition: Unclonable Extractability.}
We can further strengthen the definition above to 
require that any adversary generating two (or more) accepting proofs of membership of $x \in \cL$ given a single proof, must have generated one of the two proofs ``from scratch'' and must therefore ``know'' a valid witness $w$ for $x$.
This will remove the need to refer to hard languages.

In more detail, we will say that a proof system satisfies {\em unclonable extractability} if, 
from any adversary $\cA$ that on input a single proof of membership of $x \in \cL$ outputs two proofs for $x$,
then we can extract a valid witness $w$ from $\cA$ for at least one of these statements with high probability.
Our (still, simplified) definition of unclonable extractability is as follows.

\begin{definition}[Unclonable Extractability.]
    A proof $(\mathsf{Setup},\mathsf{Prove},\mathsf{Verify})$ satisfies unclonable security 
    there exists a QPT extractor $\mathcal{E}$ which is an oracle-aided circuit such that 
    for every language $\cL$ with corresponding relation $\cR_\cL$ and 
    for every non-uniform polynomial-time quantum adversary $\mathcal{A}$,
    for every instance-witness pair $(x,w) \in \cR_\cL$ 
    and $\lambda = \lambda(|x|)$,
    such that there is a polynomial $p(\cdot)$ satisfying:
    \begin{equation*}
    \Pr\Bigg[
    \mathsf{Verify}(\mathsf{crs},x,\pi_1) = 1 \bigwedge 
    \mathsf{Verify}(\mathsf{crs},x,\pi_2) = 1
    \Bigg|
    \substack{(\mathsf{crs}, \td) \leftarrow \mathsf{Setup}(1^\lambda)\\
    \pi \leftarrow \mathsf{Prove}(\mathsf{crs},x,w)\\
    \pi_1, \pi_2 \leftarrow \cA_\lambda(\crs, x, \pi, z)
    }
    \Bigg]
    \geq \frac{1}{p(\lambda)},
    \end{equation*} 
    there is also a polynomial $q(\cdot)$ such that
    \begin{equation*}
        \Pr[(x,w_\cA) \in \cR_\cL| w_\cA \leftarrow \mathcal{E}^{\cA}(x)] \geq \frac{1}{q(\lambda)}.
    \end{equation*}
\end{definition}

In fact, in the technical sections, we further generalize this definition to consider a setting where the adversary obtains an even larger number (say $k-1$) input proofs on instances $x_1, \ldots, x_{k-1}$, and outputs $k$ or more proofs. Then we require the extraction of an NP witness corresponding to any proofs that attempt to ``clone'' honestly generated proofs (i.e. the adversary outputs two or more proofs w.r.t. the same instance $x_i \in \{x_1, \ldots, x_{k-1}\}$). All our theorem statements hold w.r.t. this general definition.
Finally, we also consider definitions and constructions in the quantum-accessible random oracle model (QROM); these are natural generalizations of the definitions above, so we do not discuss them here.

We also show that the latter definition of unclonable extractability implies the former, i.e. unclonable security. Informally, this follows because the extractor guaranteed by the definition of extractability is able to obtain a witness $w$ for $x$ from any adversary, which contradicts hardness of the distribution $(\cX,\cW)$.
We refer the reader to \cref{app:defs-reduct} for a formal proof of this claim.

Moreover, we can generically boost the unclonable-extractor’s success probability from $1/q(\lambda)$ to $1 - \negl(\lambda)$ with respect to a security parameter $\lambda$. For details, see \cref{sec:unc-defs-crs} and \cref{sec:unc-defs-qro}.

\subsubsection{Realizations of Unclonable NIZK, and Relationship with Quantum Money}
We obtain realizations of unclonable NIZKs in both the common reference string (CRS) and the quantum random oracle (QRO) models, assuming public-key quantum money mini-scheme and other (post-quantum) standard assumptions. We summarize these results below. 

\begin{theorem}[Informal]
    Assuming public-key quantum money mini-scheme, public-key encryption, perfectly binding and computationally hiding commitments, and adaptively sound NIZK arguments for $\NP$, there exists an unclonable-extractable NIZK argument of knowledge scheme in the CRS model.
\end{theorem}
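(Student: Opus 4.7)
The plan is to define the NIZK so that each proof embeds (i) a public-key quantum money banknote that is impossible to clone and (ii) a PKE ciphertext of the witness that gives a straight-line extraction trapdoor, tied together by a NIZK proof of correct encryption. Concretely, the $\crs$ contains quantum money public parameters, a PKE public key $\pk$, and a NIZK common reference string $\crs_{\NIZK}$; the trapdoor is the PKE secret key $\sk$. To prove $(x,w) \in \cR_\cL$, the prover samples a fresh banknote with serial $s$, forms $\ct \leftarrow \Enc(\pk, w)$, and produces a NIZK proof $\pi$ for the statement ``$\exists (w, r)$ with $\ct = \Enc(\pk, w; r)$ and $(x, w) \in \cR_\cL$,'' where $s$ is glued into the statement header so that changing the banknote requires producing a fresh NIZK. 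The full proof consists of the banknote, $s$, $\ct$, and $\pi$, and verification runs both the quantum money verifier and the NIZK verifier on $(x, \ct, s)$. The perfectly binding, computationally hiding commitment enters as a standard subroutine of the NIZK.

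Completeness is immediate. Zero-knowledge follows by the usual two-step hybrid: first replace $\pi$ by a simulated NIZK proof, then replace $\ct$ by $\Enc(\pk, 0)$; these hops are indistinguishable by NIZK zero-knowledge and IND-CPA of PKE respectively. The single-proof argument-of-knowledge property is the textbook PKE trick: the knowledge extractor holds $\sk$, decrypts $\ct$, and adaptive NIZK soundness ensures the recovered plaintext is a valid witness.

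For unclonable extractability, suppose a QPT adversary $\cA$ on input an honest proof for $x$ outputs two accepting proofs with serials $s_1, s_2$, ciphertexts $\ct_1, \ct_2$, and NIZKs $\pi_1, \pi_2$ with probability at least $1/p(\lambda)$. The extractor $\cE^\cA(x)$ samples its own $\crs$ together with $\sk$ and the NIZK simulation trapdoor, uses the latter to fake an honest-looking proof (with $\ct = \Enc(\pk, 0)$), runs $\cA$, and then decrypts $\ct_i$ for a well-chosen $i$. A two-hop hybrid (ZK of NIZK, IND-CPA of PKE) shows that $\cA$'s success probability remains $\geq 1/p(\lambda) - \negl(\lambda)$ in this simulated world. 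Crucially, by quantum money unforgeability, the probability that $s_1 = s_2 = s$ in this world is negligible: otherwise a mini-scheme forger, given a single challenge banknote, can run exactly the same simulated experiment (all other parts are classical and generated without any knowledge of $w$) and output the two collisions. Hence with probability $\geq 1/p(\lambda) - \negl(\lambda)$ some $s_i \neq s$; for that $i$, (simulation-)soundness of the NIZK forces $(x, \ct_i, s_i)$ to be a true statement, so $\ct_i$ encrypts a valid witness, which $\cE$ recovers via $\sk$. The generalization to the $(k-1)$-input, $k$-output definition follows by pigeonhole over the output statements.

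The main obstacle, I expect, is executing the entire argument straight-line as required by the oracle-aided quantum extractor, while simultaneously needing NIZK soundness to hold for the adversary's proofs in a world where the adversary's own input proof was produced by the NIZK simulator. This forces an upgrade from plain adaptive soundness to (same-statement) simulation soundness for the NIZK, which is obtainable from the stated assumptions via standard transformations combining adaptively sound NIZK, perfectly binding commitment, and PKE. A secondary subtlety is ensuring that the quantum money reduction can plant its challenge banknote into the simulated proof without needing any other quantum state: the construction is designed precisely so that everything other than the banknote is classical and computable without $w$, making the embedding genuinely straight-line.
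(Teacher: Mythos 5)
Your high-level strategy (embed a quantum-money banknote and a PKE ciphertext of the witness, tie them together with a NIZK, split on whether the forged serial numbers collide with the honest one) is the same as the paper's. The construction and the extraction argument differ in one meaningful way: you bind the serial number $s$ to the NIZK statement as a tag and rely entirely on simulation soundness of the NIZK to force the adversary's fresh-serial proof to be well-formed. The paper instead places a perfectly-binding commitment $c$ in the CRS and proves the disjunction
\[
\bigl(\exists w,u: ct = \Enc_\pk(w;u) \wedge (x,w)\in R_\cL\bigr) \;\vee\; \bigl(\exists r: c = \Com(s;r)\bigr).
\]
This OR-trapdoor serves a concrete purpose in the technical-overview version of the argument: the reduction that plants its challenge banknote can set $c = \Com(s;t)$ in advance and use $t$ as the alternate witness, so it produces a \emph{real} NIZK proof with the honest prover rather than invoking a NIZK simulator. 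That lets the overview get away with plain adaptive soundness of the underlying NIZK; the adversary's fresh-serial proof is then forced into the encryption branch simply because no opening of $c$ to $s_b\neq s$ exists (perfect binding). Your approach, by contrast, must invoke the NIZK simulator to remove the witness, which is why you correctly flag the need for an upgrade to simulation soundness.

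In the end the two routes converge: the paper's formal construction (\cref{fig:unizk-crs}) also instantiates the inner NIZK with a simulation-extractable one (built exactly as you suggest, by wrapping a simulation-sound NIZK around a PKE ciphertext of the witness, \cref{fig:simext-nizk}), so both your construction and the paper's rely on the same derived assumption. What you gain is a somewhat leaner construction with the commitment appearing only inside the simulation-soundness compiler; what the paper's commitment OR-trapdoor buys is an alternative proof route in which the hybrid that plants the challenge banknote never needs the NIZK simulator, and a second place (scenario two of the simulation-extractability proof) to absorb slack via commitment hiding and perfect binding. Your case analysis (both forged serials equal $s$ breaks quantum money; otherwise some $s_i\neq s$ yields a fresh statement and PKE decryption recovers a witness) matches the paper's two scenarios precisely, and your observations about straight-line extraction and the need to keep everything except the banknote classical are both correct and are indeed the points the paper handles. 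The only caveat is that ``(same-statement) simulation soundness'' in your writeup should just be ordinary simulation soundness for statements outside the query set; the adversary's fresh-serial proof is a fresh statement exactly because $s_i$ differs.
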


Adaptively sound NIZK arguments for $\NP$ exist assuming the polynomial quantum hardness of LWE \cite{PS19}.

\begin{theorem}[Informal]
    Assuming public-key quantum money mini-scheme and honest verifier zero-knowledge arguments of knowledge sigma protocols for $\NP$, there exists an unclonable-extractable NIZK argument of knowledge scheme in the QROM.
\end{theorem}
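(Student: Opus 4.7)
The plan is to construct the NIZK by applying the Fiat-Shamir transform to the HVZK sigma protocol while binding each proof to a fresh quantum money banknote. Concretely, $\Prove(x,w)$ would first sample a banknote $|\$\rangle$ with serial number $s$ from the quantum money mini-scheme, then run the sigma protocol prover on $(x,w)$ to obtain a first message $\alpha$, derive the challenge via the random oracle as $\beta = \cO(x, s, \alpha)$, compute the sigma-protocol response $\gamma$ using $w$, and output $\pi = (|\$\rangle, \alpha, \gamma)$. Verification first runs the money verifier to recover $s$, recomputes $\beta = \cO(x,s,\alpha)$, and then checks that the sigma-protocol transcript $(\alpha,\beta,\gamma)$ is accepting for $x$. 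Embedding $s$ inside the hashed input is what ties the classical Fiat-Shamir transcript to the unclonable object.

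Completeness follows immediately from sigma-protocol and mini-scheme completeness. Zero knowledge follows by invoking the HVZK sigma simulator: the QROM simulator would obliviously sample a banknote (which the mini-scheme allows), run the HVZK simulator to obtain a simulated transcript $(\alpha, \beta, \gamma)$, program the oracle at the single point $(x,s,\alpha)$ to output $\beta$, and output $\pi$. Standard adaptive soundness and argument of knowledge follow from the now-standard QROM Fiat-Shamir results (\emph{e.g.,} measure-and-reprogram / online extractability frameworks of Don--Fehr--Majenz--Schaffner and Liu--Zhandry), applied essentially as a black box to the sigma protocol, with the only syntactic change being that the hash input now includes the money serial.

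The core new argument is unclonable extractability. Given an adversary $\cA$ that, on input an honest proof with serial $s^\ast$, outputs two accepting proofs $\pi_1 = (|\$_1\rangle, \alpha_1, \gamma_1)$ and $\pi_2 = (|\$_2\rangle, \alpha_2, \gamma_2)$ for the same $x$, I would argue that the unforgeability of the mini-scheme prevents both output banknotes from carrying serial $s^\ast$: otherwise $\cA$ directly cloned the honest banknote. Hence at least one of the two proofs, say $\pi_2$, carries a serial $s' \neq s^\ast$, and its committed hash input $(x, s', \alpha_2)$ was never touched by the honest prover. On this `fresh' sub-transcript, the QROM Fiat-Shamir knowledge extractor can be invoked to pull a witness $w$ for $x$ out of $\cA$. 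The extractor $\cE$ would first run $\cA$, identify (by inspecting the two banknote serials in the output) the proof with the novel serial, and then apply the sigma-protocol extractor to that proof.

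The main obstacle will be composing the quantum money reduction with the QROM Fiat-Shamir extraction without destroying success probability or disturbing quantum state. The Fiat-Shamir extractor in the QROM typically measures and reprograms the adversary's queries, which interacts delicately with the banknote states held by $\cA$, and the money unforgeability reduction in turn wants to embed a challenge banknote while still simulating the honest proof. I would address this by a sequence of hybrids: first replace the honest proof by a simulated one (using ZK), so the reduction to quantum money holds a fresh banknote externally; next, apply the measure-and-reprogram machinery only to the $\pi_i$ whose serial was identified as novel, so the measurement is logically disjoint from the `cloned-serial' subspace that quantum money security is invoked on. A careful accounting using measure-and-reprogram (with the $k-1$ input / $k$ output generalization discussed later in the paper) should give extraction with only a polynomial loss, after which the extractor success can be amplified to $1-\negl(\lambda)$ as in the CRS case.
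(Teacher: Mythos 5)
Your proposal matches the paper's construction (Figure~\ref{fig:unizk-rom}) and its unclonable-extractability argument essentially exactly: bind the Fiat--Shamir challenge to the banknote serial, split on whether both forged proofs reuse the honest serial (reduce to money unforgeability) or at least one carries a fresh serial (hybrid to simulated proofs via ZK, then invoke the QROM knowledge extractor on the fresh-serial proof, exactly as in Claims~\ref{claim:mult-sim-qro} and~\ref{claim:mult-sim-ext-qro}). The only cosmetic differences are that the paper explicitly carries $(s,\beta)$ in the proof string and its extractor guesses the target index $j^*$ uniformly rather than reading off which output serial is novel, but neither changes the substance of the argument.
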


\paragraph{Is Quantum Money {\em necessary} for Unclonable NIZKs?}
Our work builds unclonable NIZKs for NP by relying on any (public-key) quantum money scheme (mini-scheme), in conjuction with other assumptions such as NIZKs for NP.
Since constructions of public-key quantum money mini-scheme are only known based on post-quantum indistinguishability obfuscation~\cite{AaronsonC13,Zhandry19a}, it is natural to wonder whether 
the reliance on quantum money is inherent.
We show that this is indeed the case, by proving that unclonable NIZKs
in fact imply public-key quantum money mini-scheme.

\begin{theorem}[Informal]
    \label{thm:unizk-qmoney-intro}
    Unclonable NIZK arguments for NP imply public-key quantum money mini-scheme.
\end{theorem}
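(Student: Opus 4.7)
The plan is to observe that an unclonable NIZK for NP essentially already \emph{is} a public-key quantum money mini-scheme: the NIZK proof (necessarily a quantum state, since a classical string is trivially clonable) serves as the banknote, and the CRS paired with the statement serves as the serial number. Unclonable security of the NIZK then translates directly into the unforgeability property required of the mini-scheme.

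In more detail, let $(\Setup, \Prove, \Verify)$ be an unclonable NIZK for NP, and fix any hard NP distribution $(\cX, \cW)$ over a relation $\cR_\cL$. The construction of the mini-scheme $(\NoteGen, \Ver)$ would be as follows: $\NoteGen(1^\lambda)$ runs $(\crs, \td) \leftarrow \Setup(1^\lambda)$, samples $(x, w) \leftarrow (\cX_\lambda, \cW_\lambda)$, computes $\pi \leftarrow \Prove(\crs, x, w)$, and outputs serial number $s := (\crs, x)$ together with banknote state $\rho := \pi$. Verification $\Ver(s, \sigma)$ parses $s = (\crs, x)$ and returns $\Verify(\crs, x, \sigma)$.

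Correctness of the mini-scheme follows immediately from completeness of the NIZK. For unforgeability, suppose there is a QPT adversary $\cA$ that, on input $(s, \rho) = ((\crs, x), \pi)$, outputs $(\sigma_1, \sigma_2)$ with $\Ver(s, \sigma_1) = \Ver(s, \sigma_2) = 1$ with non-negligible probability. Then $\cA$ is exactly an adversary that, given an honestly generated proof for the random instance $x$, outputs two accepting proofs for $x$ with respect to the same $\crs$. Since $(\cX, \cW)$ is a hard distribution, this directly contradicts unclonable security of the NIZK as formulated in \cref{def:uncnizk}, by setting $\pi_1 := \sigma_1$ and $\pi_2 := \sigma_2$.

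The main obstacle requiring care is ensuring that a hard NP distribution $(\cX, \cW)$ exists in the first place: without such a distribution, unclonable security is vacuous and the construction above would not yield a meaningfully unforgeable mini-scheme. I would resolve this by observing that unclonable NIZK for NP is interesting only when some hard-on-average NP language exists, so we may simply fix $(\cX, \cW)$ to be any such distribution; alternatively, one-way functions (already implicit in any computationally sound NIZK argument for NP) yield a hard-on-average NP relation, such as membership in the range of a one-way function. Once this is settled, the reduction from quantum-money unforgeability to NIZK unclonable security is essentially syntactic, and the theorem follows.
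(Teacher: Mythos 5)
Your proposal is correct and matches the paper's own construction and proof essentially verbatim: the banknote is the quantum proof $\pi$, the serial number is $(\crs, x)$ for $(x,w)$ drawn from a hard distribution, correctness follows from completeness, and unforgeability reduces syntactically to unclonable security (\cref{def:uncnizk-alt}) for that hard distribution. Your added remark about where the hard distribution comes from is reasonable but unnecessary relative to the paper, whose formal statement (\cref{thm:unizk-implies-qmoney}) simply takes the hard distribution $(\cX,\cW)$ as a hypothesis.
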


\subsubsection{Applications}

\noindent \textbf{Unclonable Signatures of Knowledge.}
A (classical) signature scheme asserts that a message $m$ has been signed on behalf of a public key $\pk$.
However, in order for this signature to be authenticated, the public key $\pk$ must be proven trustworthy through a certification chain rooted at a trusted public key $\mathsf{PK}$. However, as \cite{CL06} argue, this reveals too much information; it should be sufficient for the recipient to only know that \emph{there exists} a public key $\pk$ with a chain of trust from $\mathsf{PK}$. To solve this problem, \cite{CL06} propose \emph{signatures of knowledge} which allow a signer to sign \emph{on behalf of an instance $x$ of an $\NP$-hard language} without revealing its corresponding witness $w$. Such signatures provide an anonymity guarantee by hiding the $\pk$ of the sender.

While this is ideal for many applications, anonymity presents the following downside: a receiver cannot determine whether they were the intended recipient of this signature. 
In particular, anonymous signatures are more susceptible to \emph{replay attacks}.
Replay attacks are a form of passive attack whereby an adversary observes a signature and retains a copy. The adversary then leverages this signature, either at a later point in time or to a different party, to impersonate the original signer. The privacy and financial consequences of replay attacks are steep. They can lead to data breach attacks which cost millions of dollars annually and world-wide~\cite{IBM}.

In this work, we construct a signature of knowledge scheme which is the first \emph{non-interactive} signature in the CRS model that is \emph{naturally secure against replay attacks}. 
Non-interactive, replay attack secure signatures have seen a lot of recent interest including a line of works in the bounded quantum storage model \cite{BS23b} and the quantum random oracle model \cite{BS23a}. 
Our construction is in the CRS model and relies on the quantum average-case hardness of $\NP$ problems, plausible cryptographic assumptions, and the axioms of quantum mechanics.
We accomplish this by defining \emph{unclonable signatures of knowledge}: if an adversary, given a signature of a message $m$ with respect to an instance $x$, can produce two signatures for $m$ which verify with respect to the same instance $x$, then our extractor is able to extract a witness for $x$.

\begin{theorem}[Informal]
    Assuming public-key quantum money mini-scheme, public-key encryption, perfectly binding and computationally hiding commitments, and simulation-sound NIZK arguments for $\NP$, there exists an unclonable-extractable signature of knowledge in the CRS model.
\end{theorem}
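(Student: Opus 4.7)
The plan is to lift the unclonable-extractable NIZKAoK from the earlier CRS theorem into a signature of knowledge by binding each message $m$ into the statement being proved. Fixing an $\NP$ language $\cL$, let $\cL'$ be the language $\{(x,m) : x \in \cL\}$ equipped with the same witness relation as $\cL$ (i.e.\ $m$ plays the role of a tag). The signature of knowledge scheme is then $\Setup$ = NIZKAoK setup, $\Sign(\crs,m,x,w) = \Prove(\crs,(x,m),w)$, and $\Ver(\crs,m,x,\sigma) = \Verify(\crs,(x,m),\sigma)$. Correctness and the simulatability/anonymity requirement are immediate from completeness and zero-knowledge of the underlying NIZKAoK.

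The reason the NIZK assumption is strengthened from adaptively sound to simulation-sound in the theorem hypothesis is to obtain unforgeability under chosen-message attack. The first substantive step is to re-instantiate the earlier unclonable NIZKAoK construction with a simulation-sound NIZK in place of the adaptively sound one, thereby upgrading it to an \emph{unclonable simulation-extractable} NIZKAoK. The extractability analysis from the previous theorem carries over; simulation soundness is precisely what lets the extractor still recover a witness from a forged proof even when the adversary has seen polynomially many simulated proofs on statements of its choice. This step uses the strengthened NIZK assumption and nothing else beyond what is already used in the earlier theorem.

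With an unclonable simulation-extractable NIZKAoK in hand, the two security properties of the signature of knowledge follow in parallel. For unforgeability/extractability under chosen-message attacks, the reduction replies to each signing query $(m_i, x_i)$ using the NIZK simulator on statement $(x_i, m_i) \in \cL'$; then on a forgery $(m^*, x^*, \sigma^*)$ with $(m^*, x^*) \notin \{(m_i, x_i)\}$ it invokes simulation-extractability to recover a witness $w^*$ for $x^*$. For unclonability, any quantum adversary that, from a single honest signature on $(m,x)$, produces two accepting signatures on the same $(m,x)$ is by construction a cloner of a single NIZKAoK on $(x,m) \in \cL'$; the unclonable extractor from the earlier theorem then outputs a witness for $x$, establishing unclonable extractability of the signature of knowledge.

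The main obstacle will be making unclonable extractability coexist cleanly with simulation-extractability inside the chosen-message forgery and cloning reductions. Concretely, the cloning adversary's ``honest challenge'' signature must be produced honestly (so that the unclonable extractor is applicable), while all other chosen-message queries on other $(m_i, x_i)$ are answered by the simulator; symmetrically, in the forgery reduction, every query is simulated, and extraction is run on the forgery. Bridging these two worlds requires a careful hybrid argument that swaps honest proofs for simulated ones one at a time, relying on the zero-knowledge of the NIZKAoK at each hop. The reason this goes through against quantum adversaries is that the NIZK simulator and extractor in the CRS model are non-rewinding (they operate by programming a trapdoor into the CRS and decrypting via the simulation trapdoor), and the unclonable extractor from the previous theorem is already engineered to work on quantum cloning adversaries; thus composing them across hybrids preserves extraction probability up to negligible loss, yielding the required polynomial-probability witness extractor.
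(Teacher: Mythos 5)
Your proposal is essentially correct and matches the paper's approach: the signature of knowledge is obtained by tagging the message $m$ into the NIZK instance via a language $\cL_\Pi = \{(x,m) : \exists w.\ (x,w)\in\cR\}$ (so $m$ does not affect the witness relation), the NIZK is first upgraded to simulation-extractable from simulation-sound NIZK plus public-key encryption, then to unclonable-extractable using quantum money and commitments, and each of the three signature properties (simulation, extraction, unclonable extractability) reduces directly to the corresponding NIZK property. Your chain of reductions and the reason simulation-soundness is needed (to make extraction survive simulated-proof queries) is exactly right.

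One clarification on the ``main obstacle'' you identify. You anticipate a tension between the unclonability reduction (where the challenge signature must be honestly generated) and the chosen-message/simulation-extraction reduction (where all queries are simulated), and propose a hybrid argument to bridge them. In the paper this tension does not arise, because the unclonable-extractability game for signatures of knowledge (\cref{def:usign}) is \emph{stated without a signing or simulator oracle}: the adversary is handed $k-1$ honestly generated signatures for hardwired instance-witness-message triples and must output $k$ accepting signatures where some $(x,m)$ is over-represented. The reduction therefore forwards the $k-1$ input signatures to the NIZK's $(k-1)$-to-$k$ unclonability challenger verbatim and forwards the adversary's $k$ outputs back; no honest-to-simulated hybrid is needed at the signature layer. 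The hybrid you describe does appear, but one level down, inside the proof that the NIZK itself is unclonable-extractable (Scenario Two of \cref{thm:main-crs} via \cref{claim:mult-sim-crs}), which is already established in the earlier theorem and is invoked here as a black box. So your proposal over-engineers the final step; the separation of the simulation/extraction game from the unclonability game in the definition is precisely what lets the paper's reduction be direct.
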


Our construction involves showing that an existing compiler can be augmented using unclonable NIZKs to construct unclonable signatures of knowledge. 
The authors of \cite{CL06} construct signatures of knowledge from CPA secure dense cryptosystems~\cite{SP92,SCP00} and simulation-sound NIZKs for $\NP$~\cite{Sahai99,SCOPS01}.
Signatures of knowledge are signature schemes in the CRS model for which we associate an instance $x$ in a language $\cL$. This signature is simulatable, so there exists a simulator which can create valid signatures without knowledge of a witness for $x$. Additionally, the signature is extractable which means there is an extractor which is given a trapdoor for the CRS and a signature, and is able to produce a witness for $x$.
We show that, by switching the simulation-sound NIZKs for unclonable simulation-extractable NIZKs (and slightly modifying the compiler), we can construct unclonable signatures of knowledge.\\

\noindent \textbf{Relationship with Revocation.}
A recent exciting line of work obtains \emph{certified deletion} for time-lock puzzles~\cite{EC:Unruh14}, non-local games~\cite{PhysRevA.97.032324}, information-theoretic proofs of deletion with partial security~\cite{CRW20}, encryption schemes~\cite{10.1007/978-3-030-64381-2_4,BK22},  device-independent security of one-time pad encryption with certified deletion~\cite{https://doi.org/10.48550/arxiv.2011.12704}, public-key encryption with certified deletion~\cite{10.1007/978-3-030-92062-3_21}, commitments and zero-knowledge with certified everlasting hiding~\cite{cryptoeprint:2021:1315}, and fully-homomorphic encryption with
certified deletion~\cite{cryptoeprint:2022:295,BK22,BKP23,BGGKMRR23,APV23}.
While certified everlasting deletion of secrets has been explored in the context of {\em interactive} zero-knowledge proofs~\cite{cryptoeprint:2021:1315}, there are no existing proposals for {\em non-interactive} ZK satisfying variants of certified deletion. Our work provides a pathway to building such proofs.

In this work, we construct a quantum \emph{revocable/unclonable anonymous credentials} protocol in which the issuer of credentials uses a pseudonym to anonymize themselves, receivers of credentials do not require any trusted setup, and the issuer has the ability to remove access from other users. Our work follows a line of work on (classical) revocation for anonymous credentials schemes using NIZK~\cite{BCCKLS09,CKS10,AN11}.

In particular, our construction involves noting that NIZK proof systems that are unclonable can also be viewed as supporting a form of certified deletion/revocation, where in order to delete, an adversary must simply return the entire proof. In other words, the (quantum) certificate of deletion is the proof itself, and this certificate can be verified by running the NIZK verification procedure on the proof.
The unclonability guarantee implies that an adversary cannot keep with itself or later have the ability to generate {\em another proof} for the same instance $x$. 
In the other direction, in order to offer certifiable deletion, a NIZK must necessarily be unclonable.
To see why, note that if there was an adversary who could clone the NIZK, we could use this adversary to obtain two copies, and provably delete one of them. Even though the challenger for the certifiable deletion game would be convinced that its proof was deleted, we would still be left with another correct proof.

\subsection{Related Works}

This work was built upon the foundations of and novel concepts introduced by prior literature. 
We will briefly touch upon some notable such results in this section.\\

\noindent\textbf{Unclonable Encryptions.}
\emph{Unclonable encryption}~\cite{Got03,BL20,AK21,MST21,AKLLZ22} imagines an interaction between three parties in which one party receives a quantum ciphertext and splits this ciphertext in some manner between the two remaining parties. At some later point, the key of the encryption scheme is revealed, yet both parties should not be able to simultaneouly recover the underlying message. 
While our proof systems share the ideology of unclonability, we do not have a similar game-based definition of security. This is mainly due to proof systems offering more structure which can take advantage of to express unclonability in terms of simulators and extractors.\\

\noindent\textbf{Signature Tokens.}
Prior work~\cite{BDS17} defines and constructs \emph{signature tokens} which are signatures which involve a quantum signing token which can only be used once before it becomes inert. The setting they consider is where a client wishes to delegate the signing process to a server, but does not wish the server to be able to sign more than one message. They rely on quantum money~\cite{AaronsonC13} and the no-cloning principle to ensure the signature can only be computed once.
For our unclonable signatures of knowledge result, we focus on the setting where a client wishes to authenticate themselves to a server and wants to prevent an adversary from simultaneously, or later, masquerading as them.
\\

\noindent\textbf{One-shot Signatures.}
The authors of~\cite{AGKZ20} introduce the notion of \emph{one-shot signatures} which extend the concept of signature tokens to a scenario where the client and server only exchange classical information to create a one-use quantum signature token.
They show that these signatures can be plausibly constructed in the CRS model from post-quantum indistinguishability obfuscation.
Unless additional measures for security, which we discussed in our applications section, are employed, classical communication can be easily copied and replayed at a later point.
In contrast, we prevent an adversary from simultaneously, or later, authenticating with the client's identity.\\

\noindent\textbf{Post-quantum Fiat-Shamir.}
Our QROM results are heavily inspired by the recent post-quantum Fiat-Shamir result~\cite{LiuZ19} which proves the post-quantum security of NIZKs in the compressed quantum(-accessible) random oracle model (compressed QROM).
These classical NIZKs are the result of applying Fiat-Shamir 
to post-quantum sigma protocols which are HVZKAoKs.
We further extend, and crucially rely upon, their novel proof techniques to prove extractability (for AoK) and programmability (for ZK) to achieve extractability and programmability for some protocols which output quantum proofs.

\subsection{Concurrent Works}


\noindent \textbf{Unclonable Commitments and Proofs.} A recent, concurrent work~\cite{GMR23} defines and constructs unclonable commitments and interactive unclonable proofs. They additionally construct commitments in the QROM that are unclonable with respect to any verification procedures, and they show that it is impossible to have  (interactive) proofs with the same properties.
The authors also observe a similar relationship between non-interactive unclonable proofs and public-key quantum money via unclonable commitments. They also briefly mention a connection between unclonable commitments and unclonable credentials.

In contrast, we define unclonable-extractable proofs which we construct in the \emph{non-interactive} setting in \emph{both} the crs model and the QROM. We also show a relationship between non-interactive unclonable-extractable proofs and quantum money in \emph{both} the crs model and the QROM. Our work also \emph{formalizes} the relationship between unclonable-extractable proofs and unclonable \emph{anonymous} credentials.
\section{Technical Overview}
\label{sec:tech}

In this section, we give a high-level overview of our construction and the techniques underlying our main results. 

\subsection{Unclonable Extractable NIZKs in the CRS Model}

Our construction assumes the existence of public-key encryption, classical bit commitments where honestly generated commitment strings are perfectly binding, along with
\begin{itemize}
\item {\em Public-key quantum money mini-scheme} (which is known assuming post-quantum $i\mathcal{O}$ and injective OWFs~\cite{Zhandry19a}).
At a high level, public-key quantum money mini-scheme consists of two algorithms: $\mathsf{Gen}$ and $\mathsf{Ver}$. $\mathsf{Gen}$ on input a security parameter, outputs a (possibly mixed-state) quantum banknote $\rho_\$$ along with a classical serial number $s$. 
$\mathsf{Ver}$ is public, takes a quantum money banknote, and
outputs either a classical serial number $s$, or $\bot$ indicating that its input is an invalid
banknote. The security guarantee is that no efficient adversary given an honest banknote $\rho_\$$ can output two notes $\rho_{\$, 0}$ and $\rho_{\$, 1}$ that both pass the verification and have serial numbers equal to that of $\rho_\$$. 
\item {\em Post-quantum NIZKs for NP}, which are known assuming the post-quantum hardness of LWE. These satisfy (besides completeness) (1) soundness, i.e., no efficient prover can generate accepting proofs for false NP statements, and (2) zero-knowledge, i.e., the verifier obtains no information from an honestly generated proof beyond what it could have generated on {\em its own} given the NP statement itself.
\end{itemize}

\noindent{\bf Construction.} Given these primitives, the algorithms $(\mathsf{Setup}, \mathsf{Prove}, \mathsf{Verify})$ of the unclonable extractable NIZK are as follows.

\vspace{2mm}
\noindent {\underline{\textsc{Setup}}$(1^\lambda)$}:
The setup algorithm 
samples a public key $\mathsf{pk}$ of a public-key encryption, the common reference string $\mathsf{crs}$ of a classical (post-quantum) NIZK for NP, along with a perfectly binding, computationally hiding classical commitment to $0^\lambda$ with uniform randomness $t$, i.e. $c = \mathsf{Com}(0^\lambda;t)$.
It outputs $(\mathsf{pk}, \mathsf{crs}, c)$.

\vspace{2mm}
\noindent {\underline{\textsc{Prove}}}:
Given the CRS $(\mathsf{pk}, \mathsf{crs}, c)$,
instance $x$ and witness $w$, output $(\rho_\$,s,ct,\pi)$ where
\begin{itemize}
\item The state $\rho_\$ \leftarrow \mathsf{Gen}$ is generated as a quantum banknote  with associated serial number $s$.
\item The ciphertext $ct = \mathsf{Enc}_{\mathsf{pk}}(w;u)$ is an encryption of the witness $w$ with randomness $u$.
\item The proof string $\pi$ is a (post-quantum) NIZK for the following statement using witness $(w, u)$: \vspace{-2mm}
$$
\textsf{EITHER }\left(\exists w, u: ct = \mathsf{Enc}_\pk(w;u) \wedge R_L(x,w) =1\right)\textsf{ OR }\left(\exists r: c = \mathsf{Com}(s;r)\right),\vspace{-1mm}$$
~~where we recall that $\pk$ and $c$ were a part of the CRS output by the Setup algorithm.
\end{itemize}
\noindent {\underline{\textsc{Verify}}}:
Given CRS $(\mathsf{pk}, \mathsf{crs}, c)$, instance $x$ and proof $(\rho_\$, s, ct, \pi)$, 
check that (1) $\mathsf{Ver}(\rho_\$)$ outputs $s$ and (2) $\pi$ is an accepting NIZK argument of the statement above.

\vspace{2mm}
\noindent \textbf{Analysis.}
Completeness, soundness/argument of knowledge and ZK for this construction follow relatively easily, so we focus on unclonable extractability in this overview.
Recall that unclonable extractability requires that no adversary, given an honestly generated proof for $x\in \mathcal{L}$, can split this into {\em two accepting proofs} for $x \in \mathcal{L}$ (as long as it is hard to find a witness for $x$).
Towards a contradiction, suppose an adversary splits a proof into 2 accepting proofs $(\rho_{\$, 0}, s_1, ct_1, \pi_1)$, $(\rho_{\$, 1}, s_2, ct_2, \pi_2)$. Then,
\begin{itemize}
\item If $s_1 = s_2 = s$, the adversary given one bank note with serial number $s$ generated two valid banknotes $\rho_{\$, 0}$ and $\rho_{\$, 1}$ that both have the same serial number $s$. This contradicts the security of quantum money.
\item Otherwise, there is a $b \in \{1, 2\}$ such that $s_b \neq s$. Then, consider an indistinguishable hybrid where the adversary obtains a simulated proof generated {\em without witness $w$} as follows: (1) sample quantum banknote $\rho_\$$ with serial number $s$, (2) sample public key $\mathsf{pk}$ along with secret key $\mathsf{sk}$, (3) generate $c = \mathsf{Com}(s;t)$, $ct = \mathsf{Enc}_{\mathsf{pk}}(0;u)$, (4) generate proof $\pi$ using witness $t$ (since $c= \mathsf{Com}(s;t)$) instead of using witness $w$. 
Send common reference string $(\mathsf{pk},\mathsf{crs},c)$ and proof $(\rho_\$,s,ct,\pi)$ to the adversary.
Now, the proof that the adversary generates with $s_b \neq s$ {\em must} contain $\mathsf{ct}_b = \mathsf{Enc}_{\mathsf{pk}}(w;u)$, since $c$ being generated as a commitment to $s \neq s_b$ along with the perfect binding property implies that $(\not\exists r: c = \mathsf{Com}(s_b;r))$.
That is, given instance $x$, the adversary can be used to compute a witness $w$ for $x$ by decrypting ciphertext $\mathsf{ct}_b$, thereby contradicting the hardness of the distribution.
\end{itemize}

Our technical construction in \cref{sec:unizk-crs}, while conceptually the same, is formalized slightly differently. It uses NIZKs with an enhanced simulation-extraction property, which can be generically constructed from NIZK (see \cref{sec:simext-nizk-crs}).
Having constructed unclonable extractable arguments in the CRS model, in the next section, we analyze a construction of unclonable extractable arguments in the QROM.

\subsection{Unclonable Extractable NIZK in the QROM}

We now turn our attention to the QRO setting 
in which we demonstrate a protocol which is provably unclonable. 
Our construction assumes the existence of public-key quantum money mini-scheme and a \emph{post-quantum sigma protocol for NP}. 
A sigma protocol $(\sP, \sV)$ is an interactive three-message honest-verifier protocol: the prover sends a commitment message, the verifier sends a uniformly random challenge, and the prover replies by opening its commitment at the locations specified by the random challenge. 

\vspace{0.2cm}

\noindent \textbf{Construction.} The algorithms $(\textsc{Prove}, \textsc{Verify})$ of the unclonable extractable NIZK in the QROM are as follows.\\

\noindent {\underline{\textsc{Prove}}}:
Given an instance $x$ and witness $w$, output $(\rho_\$,s,\alpha,\beta,\gamma)$ where
\begin{itemize}
    \item The quantum banknote $\rho_\$$ is generated alongside associated serial number $s$.
    \item $\sP$ is run to compute the sigma protocol's commitment message as $\alpha$ given $(x, w)$ as input.
    \item The random oracle is queried on input $(\alpha, s, x)$ in order to obtain a challenge $\beta$.
    \item $\sP$ is run, given as input $(x, w, \alpha, \beta)$ and its previous internal state, to compute the sigma protocol's commitment openings as $\gamma$.
\end{itemize}
\noindent {\underline{\textsc{Verify}}}:
Given instance $x$ and proof $(\rho_\$, s, \alpha, \beta, \gamma)$, 
check that (1) the quantum money verifier accepts $(\rho_\$, s)$, (2) the random oracle on input $(\alpha, s, x)$ outputs $\beta$, and (3) $\sV$ accepts the transcript $(\alpha, \beta, \gamma)$ with respect to $x$.\\

\noindent \textbf{Analysis.}
Since the completeness, argument of knowledge and zero-knowledge properties are easy to show, we focus on unclonable extractability. Suppose an adversary was able to provide two accepting proofs $\pi_1 = (\rho_{\$, 0}, s_1, \alpha_1, \beta_1, \gamma_1)$ and $\pi_2 = (\rho_{\$, 1}, s_2, \alpha_2, \beta_2, \gamma_2)$ for an instance $x$ for which it received an honestly generated proof $\pi = (\rho_\$, s, \alpha, \beta, \gamma)$.
Then,
\begin{itemize}
    \item Suppose $s_1 = s_2 = s$. In this case, the adversary given one bank note with serial number $s$ generated two valid banknotes $\rho_{\$, 0}$ and $\rho_{\$, 1}$ that both have the same serial number $s$. This contradicts the security of quantum money.
    
    \item Otherwise, there is a $b \in [1, 2]$ such that $s_b \ne s$. 
    By the zero-knowledge property of the underlying HVZK sigma protocol, this event also occurs when the proof $\pi$ that the adversary is given is replaced with a simulated proof. Specifically, we build a reduction that locally programs the random oracle at location $(\alpha, s, x)$ in order to generate a simulated proof for the adversary. 
    Since the adversary's own proof for $s_b \neq s$ is generated by making a distinct query $(\alpha_b, s_b, x) \neq (\alpha, s, x)$, the programming on $(\alpha, s, x)$ does not affect the knowledge extractor for the adversary's proof, which simply rewinds the (quantum) random oracle to extract a witness for $x$, following~\cite{LiuZ19}. This allows us to obtain a contradiction, showing that our protocol must be unclonable.
\end{itemize}

\subsection{Unclonable NIZKs imply Quantum Money Mini-Scheme}

Finally, we discuss why unclonable NIZKs satisfying even the weaker definition of unclonable security (i.e., w.r.t. hard distributions) imply public-key quantum money mini-scheme. Given an unclonable NIZK, we build a public-key quantum money mini-scheme as follows.\\

\noindent \textbf{Construction.}
Let $(\cX, \cW)$ be a hard distribution over a language $\cL \in \NP$. Let $\Pi = (\Setup, \Prove, \Verify)$ be an unclonable NIZK protocol for $\cL$.

\vspace{2mm}
\noindent {\underline{\textsc{Gen}}$(1^\lambda)$}:
Sample $(x, w) \leftarrow (\cX, \cW)$, $\crs \leftarrow \Setup(1^\lambda, x)$, and an unclonable NIZK proof $\pi$ as $\Prove(\crs, x, w)$. Output a (possibly mixed-state) quantum banknote $\rho_\$ = \pi$, and associated serial number $s = (\crs, x)$.

\vspace{1mm}
\noindent {\underline{\textsc{Ver}}$(\rho_\$,s)$}:
Given a (possibly mixed-state) quantum banknote $\rho_\$$ and a classical serial number $s$ as input, parse $\rho_\$ = \pi$ and $s = (\crs, x)$, and output the result of $\Verify(\crs, x, \pi)$.\\

\noindent \textbf{Analysis.}
The correctness of the quantum money scheme follows from the completeness of the unclonable NIZK $\Pi$. We will now argue that this quantum money scheme is unforgeable. Suppose an adversary $\cA$ given a quantum banknote and classical serial number $(\rho_\$, s)$ was able to output two banknotes $(\rho_{\$, 0}, \rho_{\$, 1})$ both of which are accepted with respect to $s$. 
We can use $\cA$ to define a reduction to the uncloneability of our NIZK $\Pi$ as follows:
\begin{itemize}
    \item The NIZK uncloneability challenger outputs a hard instance-witness pair $(x, w)$, a common reference string $\crs$, and an unclonable NIZK $\pi$ to the reduction.
    
    \item The reduction outputs a banknote $(\rho_\$,s)$ to the adversary, where $\rho_\$ = \pi$ and $s = (\crs, x)$. It receives two quantum banknotes $(\rho_{\$, 0}, \rho_{\$, 1})$ from $\cA$, and finally outputs two proofs $(\pi_0, \pi_1)$ where $\pi_0 = \rho_{\$, 0}$ and $\pi_1 = \rho_{\$, 1}$.
\end{itemize}
If $\cA$ succeeds in breaking unforgeability, then the quantum money verifier accepts both banknotes $(\rho_{\$, 0} = \pi_0, \rho_{\$, 1} = \pi_1)$, with respect to the same serial number $s = (\crs, x)$. By syntax of the verification algorithm, this essentially means that both {\em proofs} $(\pi_0, \pi_1)$ are accepting proofs for membership of the same instance $x \in \cL$, w.r.t. $\crs$, leading to a break in the unclonability of NIZK.

\subsection{Unclonable Signatures of Knowledge}

Informally, a signature of knowledge has the following property: if an adversary, given a signature of a message $m$ with respect to an instance $x$, can produce two signatures for $m$ which verify with respect to the same instance $x$, then the adversary {\em must know} (and our extractor will be able to extract) a witness for $x$.

We obtain unclonable signatures of knowledge assuming the existence of an unclonable extractable \emph{simulation-extractable} NIZK for $\NP$.
Simulation-extractability states that an adversary which is provided any number of simulated proofs for instance and witness pairs of their choosing, cannot produce an accepting proof $\pi$ for an instance $x$ which they have not queried before and where extraction fails to find an accepting witness $w$. Our unclonable extractable NIZK for $\NP$ in the CRS model can, with some extra work, be upgraded to simulation-extractable.

We informally describe the construction of signatures of knowledge from such a NIZK below.

\vspace{0.2cm}

\noindent \textbf{Construction.} 
Let $(\Setup, \sP, \sV)$ be non-interactive simulation-extractable, adaptive multi-theorem computational zero-knowledge, unclonable-extractable protocol for $\NP$.
Let $\cR$ be the $\NP$ relation corresponding to $\cL$.

\vspace{1mm}
\noindent {\underline{\textsc{Setup}}}:
The setup algorithm samples a 
common reference string $\crs$ of an unclonable-extractable simulation-extractable NIZK for $\NP$.
It outputs $\mathsf{crs}$.

\vspace{1mm}
\noindent {\underline{\textsc{Sign}}}:
Given the CRS $\crs$, instance $x$, witness $w$, and message $m$, output signature $\pi$ where
\begin{itemize}
    \item The proof string $\pi$ is an unclonable-extractable simulation-extractable NIZK with tag $m$ using witness $w$ of the following statement: \vspace{-2mm}
    $$\vspace{-2mm}
    \left(\exists w: (x, w) \in \cR\right).$$
\end{itemize}
\vspace{1mm}
\noindent {\underline{\textsc{Verify}}}:
Given CRS $\crs$, instance $x$, message $m$, and signature $\pi$, check that $\pi$ is an accepting NIZK proof with tag $m$ of the statement above.

\noindent \textbf{Analysis.}
The simulatability (extractability) property follows from the zero-knowledge (resp. simulation-extractability) properties of the NIZK. 
Suppose an adversary $\cA$ given a signature $\sigma$ was able to forge two signatures $\sigma_1 = \pi_1$ and $\sigma_2 = \pi_2$, and, yet, our extractor was to fail to extract a witness $w$ from $\cA$. Then,
\begin{itemize}
    \item Either both proofs $\pi_1$ and $\pi_2$ are accepting proofs for membership of the same instance w.r.t. $\crs$. However, this contradicts the unclonability of the NIZK.

    \item Otherwise there exists a proof $\pi_i$ (where $i \in \{1,2\}$) for an instance which $\cA$ has not previously seen a proof for. We can switch to a hybrid where our signatures contain simulated proofs for the NIZK. But now, we have that the verifier accepts a proof for an instance which $\cA$ has not seen a simulated proof for and, yet, we cannot extract a witness from $\cA$. This contradicts the simulation extractability of the NIZK.
\end{itemize}

\paragraph{Roadmap.} In \cref{sec:crscons}, we define and construct unclonable NIZKs in the CRS model, and in Section \ref{sec:qro}, in the QROM. Along the way, we also show that unclonable NIZKs imply quantum money (in the CRS and QRO model respectively). Later, we show how to define and construct unclonable signatures of knowledge from unclonable NIZKs in the CRS model.

\section{Preliminaries}

\subsection{Post-Quantum Commitments and Encryption}

\begin{definition}[Post-Quantum Commitments]
\label{def:com}
$\Com$ is a post-quantum commitment scheme if it has the following syntax and properties.

\noindent \textbf{Syntax.}
\begin{itemize}
    \item $c \gets \Com(m; r)$: The polynomial-time algorithm $\Com$ on input a message $m$ and randomness $r \in \zo^{r(\lambda)}$ outputs commitment a $c$.
\end{itemize}

\noindent \textbf{Properties.}
\begin{itemize}
    \item \textbf{Perfectly Binding}: For every $\lambda \in \bbN^+$ and every $m, m', r, r'$ such that $m \ne m'$,
    \begin{equation*}
        \Com(m; r) \ne \Com(m'; r').
    \end{equation*}

    \item \textbf{Computational Hiding}: There exists a negligible function $\negl(\cdot)$ for every polynomial-size quantum circuit $\cD$, every sufficiently large $\lambda \in \bbN^+$, and every $m, m'$,
    \begin{equation*}
        \left\vert \Pr_{\substack{r \urand \zo^{r(\lambda)}, \: c \gets \Com(m; r)}}[\cD(c) = 1] -  \Pr_{\substack{r \urand \zo^{r(\lambda)}, \: c' \gets \Com(m'; r)}}[\cD(c') = 1] \right\vert \le \negl(\lambda).
    \end{equation*}
\end{itemize}
\end{definition}

\begin{theorem}[Post-Quantum Commitment]~\cite{LombardiS19}
    \label{thm:com}
    Assuming the polynomial quantum hardness of LWE, there exists a non-interactive commitment with perfect binding and computational hiding (\cref{def:com}).
\end{theorem}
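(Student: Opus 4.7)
Since this is a cited result rather than a new construction, the plan is to sketch how one naturally builds a non-interactive, perfectly binding, computationally hiding commitment from LWE, and then invoke the specific instantiation of~\cite{LombardiS19}. The starting point is an LWE-based public-key encryption scheme such as (dual) Regev: the public key is $(A, b = A^\top s + e)$ for uniform $A$, short secret $s$, and short noise $e$, and encryption of a bit $m$ with short randomness $r$ produces a ciphertext of the form $(Ar, \, b^\top r + m \lfloor q/2 \rfloor)$. To obtain a commitment I would set $\Com(m;r) := (\pk, \Enc_{\pk}(m;r))$, sampling $\pk$ deterministically from a prefix of $r$, so that the entire commitment is a single string determined by $(m,r)$. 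The computational hiding property of \cref{def:com} then reduces directly to the semantic security of the PKE, which itself reduces to decisional LWE with post-quantum parameters via a standard hybrid that replaces the LWE samples inside the ciphertext with uniform.

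The central difficulty, and the main obstacle of the proof, is upgrading statistical binding (which essentially holds for free, since a uniformly random matrix $A$ yields an injective encryption map with overwhelming probability) to \emph{perfect} binding as required by \cref{def:com}. The plan is to follow the structural approach of~\cite{LombardiS19}: augment the commitment string with enough structure so that, for \emph{every} string in the commitment space, at most one message $m$ admits any valid opening $(r)$. Concretely, one can enforce this by requiring the ciphertext's second component, after subtracting $b^\top r$, to lie in disjoint intervals for $m=0$ and $m=1$ and declaring any commitment violating this structure to be invalid, or equivalently by using a dual-mode / gadget-based variant where perfect binding can be read off from the algebra of the LWE equation. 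The delicate point is simultaneously keeping the encryption pseudorandom (for hiding) and rigid enough that no adversarially chosen commitment string can be consistent with two distinct plaintexts.

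Once perfect binding has been arranged in this manner, computational hiding follows by a short hybrid: replace the LWE-derived component of the ciphertext with a uniformly random string (indistinguishable under LWE), at which point the commitment statistically hides $m$. The post-quantum polynomial hardness of LWE is used in this hybrid to ensure that the distinguisher may be a polynomial-size \emph{quantum} circuit, matching the hiding requirement in \cref{def:com}. I would defer to~\cite{LombardiS19} for the precise choice of modulus, noise distribution, and structural check that make perfect binding hold unconditionally while preserving the LWE reduction for hiding.
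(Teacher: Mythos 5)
The paper does not prove this theorem --- it is stated with a citation to~\cite{LombardiS19} and invoked as a black box, so there is no in-paper argument against which to compare your sketch.

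That said, your sketch is in the right spirit but slightly misidentifies the mechanism that gets you from statistical to perfect binding. The cleanest route (and the one underlying~\cite{LombardiS19}) is not to impose an ad hoc ``disjoint intervals'' validity check on the ciphertext, but to observe that any public-key encryption scheme with \emph{perfect} decryption correctness (for every $(\pk,\sk)$ in the support of $\Gen$, every message $m$, and every encryption randomness $r$, $\Dec(\sk,\Enc(\pk,m;r)) = m$) immediately yields a perfectly binding non-interactive commitment via $\Com(m;(r_1,r_2)) := (\pk, \Enc(\pk, m; r_2))$ where $(\pk,\sk) = \Gen(1^\lambda;r_1)$: perfect correctness forces $\Enc(\pk,\cdot;\cdot)$ to be injective in the message, so no commitment string can open to two distinct $m$'s. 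Hiding then reduces to IND-CPA exactly as you describe. The LWE-specific work is in arranging parameters (modulus-to-noise ratio, rounding) so that decryption never errs, rather than in tacking a structural filter onto Regev ciphertexts --- the latter is harder to square with hiding, since a distinguisher could in principle probe the validity predicate. You correctly flag that post-quantum polynomial hardness of LWE is what makes the hiding reduction go through against quantum distinguishers, which is the only part of the statement that is not purely structural. As a sketch for a cited result, this is acceptable; for a full proof you would want to state the perfect-correctness-implies-perfect-binding lemma explicitly and then instantiate with a concrete LWE-based perfectly correct PKE.
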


\begin{definition}[Post-Quantum Public-Key Encryption]
\label{def:enc}
$(\Gen, \Enc, \Dec)$ is a post-quantum public-key encryption scheme if it has the following syntax and properties.

\noindent \textbf{Syntax.}
\begin{itemize}
    \item $(\pk, \sk) \gets \Gen(1^\lambda)$: The polynomial-time algorithm $\Gen$ on input security parameter $1^\lambda$ outputs a public key $\pk$ and a secret key $\sk$.
    \item $c \gets \Enc(\pk, m; r)$: The polynomial-time algorithm $\Enc$ on input a public key $\pk$, message $m$ and randomness $r \in \zo^{r(\lambda)}$ outputs a ciphertext $c$.
    \item $m \gets \Dec(\sk, c)$: The polynomial-time algorithm $\Dec$ on input a secret key $\sk$ and a ciphertext $c$ outputs a message $m$.
\end{itemize}

\noindent \textbf{Properties.}
\begin{itemize}
    \item \textbf{Perfect Correctness}: For every $\lambda \in \bbN^+$ and every $m, r$,
    \begin{equation*}
        \Pr_{\substack{(\pk, \sk) \gets \Gen(1^\lambda)}}[\Dec(\sk, \Enc(\pk, m; r)) = m] = 1.
    \end{equation*}

    \item \textbf{Indistinguishability under Chosen-Plaintext (IND-CPA) Secure}: There exists a negligible function $\negl(\cdot)$ such that for every polynomial-size quantum circuit $\cA = (\cA_0, \cA_1)$ and every sufficiently large $\lambda \in \bbN^+$
    \begin{equation*}
        \left\vert \Pr_{\substack{(\pk, \sk) \gets \Gen(1^\lambda) \\  (m_0, m_1, \zeta) \gets \cA_0(1^\lambda, \pk) \\ c \gets \Enc(\pk, m_0)}}[\cA_1(1^\lambda, c, \zeta) = 1] - \Pr_{\substack{(\pk, \sk) \gets \Gen(1^\lambda) \\  (m_0, m_1, \zeta) \gets \cA_0(1^\lambda, \pk) \\ c \gets \Enc(\pk, m_1)}}[\cA_1(1^\lambda, c, \zeta) = 1]\right\vert \le \negl(\lambda).
    \end{equation*}
\end{itemize}
\end{definition}

\subsection{Sigma protocols}

\begin{definition}[Post-Quantum Sigma Protocol for $\NP$]~\cite{LiuZ19}
\label{def:sigma}
Let $\NP$ relation $\cR$ with corresponding language $\cL$ be given such that they can be indexed by a security parameter $\lambda \in \bbN$.

$\Pi = (\sP = (\sP.\Com, \sP.\Prove), \sV = (\sV.\Ch, \sV.\Ver))$ is a post-quantum sigma protocol if it has the following syntax and properties.

\noindent \textbf{Syntax.} The input $1^\lambda$ is left out when it is clear from context.
\begin{itemize}
    \item $(\alpha, \state) \gets \sP.\Com(1^\lambda, x, w)$: The probabilistic polynomial-size circuit $\sP.\Com$ on input an instance and witness pair $(x, w) \in \cL_\lambda$ outputs a commitment $\alpha$ and an internal prover state $\state$.
    \item $\beta \gets \sV.\Ch(1^\lambda, x, \alpha)$: The probabilistic polynomial-size circuit $\sV.\Ch$ on input an instance $x$ outputs a uniformly random challenge $\beta$.
    \item $\gamma \gets \sP.\Prove(1^\lambda, x, w, \state, \beta)$: The probabilistic polynomial-size circuit $\sP.\Prove$ on input an instance and witness pair $(x, w) \in \cL_\lambda$, an internal prover state $\state$, and a challenge $\beta$ outputs the partial opening (to $\alpha$ as indicated by $\beta$) $\gamma$.
    \item $\sV.\Ver(1^\lambda, x, \alpha, \beta, \gamma) \in \zo$: The probabilistic polynomial-size circuit $\sV.\Ver$ on input an instance $x$, a commitment $\alpha$, a challenge $\beta$, and a partial opening $\gamma$ outputs $1$ iff $\gamma$ is a valid opening to $\alpha$ at locations indicated by $\beta$.
\end{itemize}

\noindent \textbf{Properties.}
\begin{itemize}
    \item {\bf Perfect Completeness.}
    For every $\lambda \in \bbN$ and every $(x, w) \in \cR_\lambda$,
    \begin{equation*}
        \Pr_{\substack{(\alpha, \state) \gets \sP.\Com(x, w) \\ \beta \gets \sV.\Ch(x, \alpha) \\ \gamma \gets \sP.\Prove(x, w, \state, \beta)}}[\sV.\Verify( x, \alpha, \beta, \gamma) = 1] = 1
    \end{equation*}

    \item {\bf Computational Honest-Verifier Zero-Knowledge with Quantum Simulator.}
    There exists a quantum polynomial-size circuit $\Sim$ and a negligible function $\negl(\cdot)$ such that
    for every polynomial-size quantum circuit $\cD$, every sufficiently large $\lambda \in \bbN$, and every $(x, w) \in \cR_\lambda$,
    \begin{equation*}
        \left\vert \Pr_{\substack{ (\alpha, \state) \gets \sP.\Com(x, w) \\ \beta \gets \sV.\Ch( x, \alpha) \\ \gamma \gets \sP.\Prove(x, w, \state, \beta)}}[\cD(x, \alpha, \beta, \gamma) = 1] - \Pr_{\substack{(\alpha, \beta, \gamma) \gets \Sim(1^\lambda, x)}}[\cD(x, \alpha, \beta, \gamma) = 1]\right\vert \le \negl(\lambda).
    \end{equation*}

    \item {\bf Argument of Knowledge with Quantum Extractor.}
    There exists an oracle-aided quantum polynomial-size circuit $\Ext$, a constant $c$, a polynomial $p(\cdot)$, and negligible functions $\negl_0(\cdot)$, $\negl_1(\cdot)$ such that for every polynomial-size quantum circuit  $\cA = (\cA_0, \cA_1)$ where
    \begin{itemize}
        \item $\cA_0(x)$ is a unitary $U_x$ followed by a measurement and
        \item $\cA_1(x, \ket{\state}, \beta)$ is a unitary $V_{x,\beta}$ onto the state $\ket{\state}$ followed by a measurement,
    \end{itemize}
    and every $x$ with associated $\lambda \in \bbN$ satisfying
    \begin{equation*}
        \Pr_{\substack{(\alpha, \ket{\state}) \gets \cA_0(x) \\ \beta \gets \zo^\lambda \\ \gamma \gets \cA_1(x, \ket{\state}, \beta)}}[\sV.\Ver(x, \alpha, \beta, \gamma) = 1] \ge \negl_0(\lambda)
    \end{equation*}
    we have
    \begin{align*}
        &\Pr[(x, \Ext^{\cA(x)}(x)) \in \cR_\lambda]  \ge \frac{1}{p(\lambda)} \cdot \left( \Pr_{\substack{(\alpha, \ket{\state}) \gets \cA_0(x) \\ \beta \gets \zo^\lambda \\ \gamma \gets \cA_1(x, \ket{\state}, \beta)}}[\sV.\Ver(x, \alpha, \beta, \gamma) = 1] - \negl_0(\lambda)\right)^c - \negl_1(\lambda).
    \end{align*}
    When we say $\Ext$ has oracle access to $\cA(x)$, we mean that $\Ext$ has oracle access to both unitaries $U_x, V_{x,\beta}$ and their inverses $U_x^\dagger, V_{x, \beta}^\dagger$.

    \item {\bf Unpredictable Commitment.}
    There exists a negligible function $\negl(\cdot)$ such that for every sufficiently large $\lambda \in \bbN$ and every $(x, w) \in \cR_\lambda$,
    \begin{equation*}
        \Pr_{\substack{ (\alpha, \state) \gets \sP.\Com(x, w) \\ (\alpha', \state') \gets \sP.\Com(x, w)}}[\alpha = \alpha'] \le \negl(\lambda).
    \end{equation*}
\end{itemize}

We note that the unpredictable commitment property in the definition above may appear to be an unusual requirement, but this property is w.l.o.g. for post quantum sigma protocols as shown in~\cite{LiuZ19}.
In particular, any sigma protocol which does not have unpredictable commitments, can be modified into one that does: the prover can append a random string $r$ to the end of their commitment message $\alpha$, and the verifier can ignore this appended string $r$ when they perform their checks.
\end{definition}

\subsection{NIZKs in the CRS model}

We consider the common reference string model.

\begin{definition}[Post-Quantum (Quantum) NIZK for $\NP$ in the CRS Model]
\label{def:nizk-crs}
Let $\NP$ relation $\cR$ with corresponding language $\cL$ be given such that they can be indexed by a security parameter $\lambda \in \bbN$.

$\Pi = (\Setup, \sP, \sV)$ is a non-interactive post-quantum (quantum) zero-knowledge argument for $\NP$ in the CRS model if it has the following syntax and properties.

\noindent \textbf{Syntax.}
The input $1^\lambda$ is left out when it is clear from context.
\begin{itemize}
    \item $(\crs, \td) \gets \Setup(1^\lambda)$: The probabilistic polynomial-size circuit $\Setup$ on input $1^\lambda$ outputs a common reference string $\crs$ and a trapdoor $\td$.
    \item $\pi \gets \sP(1^\lambda, \crs, x, w)$: The probabilistic (quantum) polynomial-size circuit $\sP$ on input a common reference string $\crs$ and instance and witness pair $(x, w) \in \cR_\lambda$, outputs a proof $\pi$.
    \item $\sV(1^\lambda, \crs, x, \pi) \in \zo$: The probabilistic (quantum) polynomial-size circuit $\sV$ on input a common reference string $\crs$, an instance $x$, and a proof $\pi$ outputs $1$ iff $\pi$ is a valid proof for $x$.
\end{itemize}

\noindent \textbf{Properties.}
\begin{itemize}
    \item {\bf Perfect Completeness.}
    For every $\lambda \in \bbN$ and every $(x, w) \in \cR_\lambda$,
    \begin{equation*}
        \Pr_{\substack{(\crs, \td) \gets \Setup(1^\lambda) \\ \pi \gets \sP(\crs, x, w)}}[\sV(\crs, x, \pi) = 1] = 1.
    \end{equation*}

    \item {\bf Adaptive Computational Soundness.}
    There exists a negligible function $\negl(\cdot)$ such that for every polynomial-size quantum circuit $\cA$ and every sufficiently large $\lambda \in \bbN$,
    \begin{equation*}
        \Pr_{\substack{(\crs, \td) \gets \Setup(1^\lambda) \\ (x, \pi) \gets \cA(\crs)}}[\sV(x, \crs, \pi) = 1 \: \land \: x \not\in \cL_\lambda]  \le \negl(\lambda).
    \end{equation*}

    \item {\bf Adaptive Computational Zero-Knowledge.}
    There exists a probabilistic (quantum) polynomial-size circuit $\Sim = (\Sim_0, \Sim_1)$ and a negligible function $\negl(\cdot)$ such that for every polynomial-size quantum circuit $\cA$, every polynomial-size quantum circuit $\cD$, and every sufficiently large $\lambda \in \bbN$,
    \begin{equation*}
        \left\vert \Pr_{\substack{(\crs, \td) \gets \Setup(1^\lambda) \\ (x, w, \zeta) \gets \cA(\crs) \\ \pi \gets \sP(\crs, x, w)}}[\cD(\crs, x, \pi, \zeta) = 1] - \Pr_{\substack{(\crs, \td) \gets \Sim_0(1^\lambda) \\ (x, w, \zeta) \gets \cA(\crs) \\ \pi \gets \Sim_1(\crs, \td, x)}}[\cD(\crs, x, \pi, \zeta) = 1]\right\vert \le \negl(\lambda).
    \end{equation*}
\end{itemize}
\end{definition}

\begin{theorem}[Post-Quantum NIZK argument for $\NP$ in the CRS Model]~\cite{PS19}
    \label{thm:nizk-crs}
    Assuming the polynomial quantum hardness of LWE, there exists a non-interactive adaptively computationally sound, adaptively computationally zero-knowledge argument for $\NP$ in the common reference string model (\cref{def:nizk-crs}).
\end{theorem}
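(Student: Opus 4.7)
The plan is to follow the Peikert--Shiehian blueprint, which realizes NIZK for $\NP$ in the CRS model by applying the Fiat--Shamir transform to a suitable sigma protocol, instantiating the Fiat--Shamir hash with a correlation-intractable hash family built from LWE. The CRS will simply consist of the hash key, together with any auxiliary parameters needed for zero-knowledge. First I would recall the general paradigm: if $\Pi = (\sP, \sV)$ is a three-message public-coin honest-verifier zero-knowledge sigma protocol for an $\NP$ language $\cL$ whose bad-challenge function $\beta^{\star}(x,\alpha)$ (the unique challenge, if any, that admits an accepting response when $x \notin \cL$) is efficiently computable given a trapdoor for the commitment phase, then replacing $\beta$ with $\beta = H_{\ck}(x,\alpha)$ yields a non-interactive argument whose soundness reduces to the correlation-intractability of $H$ for the relation $R = \{((x,\alpha),\beta) : \beta = \beta^{\star}(x,\alpha)\}$.

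Second, I would construct a CI-hash family for the relation class produced by a trapdoor sigma protocol for $\NP$. Following the Peikert--Shiehian approach, one uses an LWE-based FHE scheme together with a lossy trapdoor mode: the hash key hides an encryption of a circuit computing $\beta^{\star}$, and hashing is done by homomorphically evaluating this circuit on the input and then extracting a short key-switching ciphertext. The correlation-intractability reduction crosses to a mode where the encrypted circuit determines the output, so producing a colliding pair would let one break LWE. This step would be instantiated for a concrete $\NP$-complete language such as graph Hamiltonicity, using Blum's sigma protocol composed in parallel $\lambda$ times and with commitments replaced by an extractable primitive (e.g., lossy/extractable commitments from LWE) so that the bad-challenge function is efficiently computable given the trapdoor.

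Third, I would argue the required properties. Perfect completeness is inherited from the sigma protocol. For adaptive computational soundness, fix a purported cheating prover that wins with non-negligible probability on some adaptively chosen $x \notin \cL$; then on all but negligibly many sampled $\crs$, the output $(\alpha,\beta,\gamma)$ satisfies $\beta = \beta^{\star}(x,\alpha)$, which violates correlation-intractability of $H$ and hence the underlying LWE assumption. For adaptive computational zero-knowledge, set $\Sim_0$ to sample $\crs$ together with a trapdoor permitting programming of the hash output on one point (concretely, by switching to a puncturable/programmable mode of the CI hash, or by combining with the standard Feige--Lapidot--Shamir trick on a hidden bits variant); then $\Sim_1$ runs the HVZK simulator for the sigma protocol to obtain $(\alpha,\beta,\gamma)$ and programs $H_{\ck}(x,\alpha)=\beta$. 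Indistinguishability of the simulated CRS and the real CRS follows from a standard hybrid argument based on the security of the hash key, and indistinguishability of the transcript from HVZK of the sigma protocol.

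The main obstacle is the construction and analysis of the CI hash family from LWE, since the relation $R$ induced by $\beta^{\star}$ is not arbitrary but must be captured by a class of efficiently computable functions that the CI construction supports; one must therefore carefully engineer the trapdoor sigma protocol so that $\beta^{\star}$ has small enough complexity to fit inside the class of relations for which the LWE-based CI family is secure, and show that the resulting CRS supports both the soundness reduction (lossy hash mode) and the ZK reduction (programmable hash mode) simultaneously. Once this is in place, adaptive soundness and adaptive ZK follow from standard hybrid arguments, giving the claimed NIZK argument under polynomial quantum hardness of LWE.
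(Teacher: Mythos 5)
The paper does not prove \cref{thm:nizk-crs}; the statement is imported directly from the cited work of Peikert and Shiehian~\cite{PS19} and serves only as a building block for the constructions that follow. Your proposal is a faithful reconstruction of that cited blueprint---Fiat--Shamir applied to a trapdoor sigma protocol, with the challenge hash instantiated by an LWE-based correlation-intractable family whose key is sampled in a mode supporting both the lossy (soundness) and programmable (zero-knowledge) reductions---so your route is essentially \emph{the} proof of the cited result, not a genuinely different one, and the minor hand-waving in how the CI hash is made programmable for simulation does not affect the soundness of the sketch.
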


\begin{definition}[Post-Quantum (Quantum) Simulation-Sound NIZK for $\NP$ in CRS Model]
\label{def:simsound-nizk-crs}
Let $\NP$ relation $\cR$ with corresponding language $\cL$ be given such that they can be indexed by a security parameter $\lambda \in \bbN$.

$\Pi = (\Setup, \sP, \sV)$ is a post-quantum (quantum) non-interactive simulation-sound, adaptive multi-theorem computational zero-knowledge protocol for $\NP$ in the CRS model if it has the following syntax and properties.

\begin{itemize}
    \item $\Pi$ is a post-quantum (quantum) non-interactive zero-knowledge argument for $\NP$ in the CRS model (\cref{def:nizk-crs}).

    \item {\bf Adaptive Multi-Theorem Computational Zero-Knowledge.}~\cite{FLS90}
    There exists a probabilistic (quantum) polynomial-size circuit $\Sim = (\Sim_0, \Sim_1)$~\footnote{$\Sim_1$ ignores the second term (a witness $w$) in the queries it receives from $\cA$.} and a negligible function $\negl(\cdot)$ such that for every polynomial-size quantum circuit $\cA$, every polynomial-size quantum circuit $\cD$, and every sufficiently large $\lambda \in \bbN$,
    \begin{equation*}
        \left\vert \Pr_{\substack{(\crs, \td) \gets \Setup(1^\lambda)}}[\cA^{\sP(\crs, \cdot, \cdot)}(\crs) = 1] - \Pr_{\substack{(\crs, \td) \gets \Sim_0(1^\lambda)}}[\cA^{\Sim_1(\crs, \td, \cdot)}(\crs) = 1]\right\vert \le \negl(\lambda).
    \end{equation*}

    \item {\bf Simulation Soundness.}~\cite{Sahai99,SCOPS01}
    Let $\Sim = (\Sim_0, \Sim_1)$ be the simulator given by the adaptive multi-theorem computational zero-knowledge property. There exists a negligible function $\negl(\cdot)$ such that for every oracle-aided polynomial-size quantum circuit $\cA$ and every sufficiently large $\lambda \in \bbN$,
    \begin{equation*}
        \Pr_{\substack{(\crs, \td) \gets \Sim_0(1^\lambda) \\ (x, \pi) \gets \cA^{\Sim_1(\crs, \td, \cdot)}(\crs)}}[\sV(\crs, x, \pi) = 1 \wedge x \not\in Q \wedge x \not\in \cL] \le \negl(\lambda),
    \end{equation*}
    where $Q$ is the list of queries from $\cA$ to $\Sim_1$.
\end{itemize}
\end{definition}

\begin{remark}
    In \cref{def:simsound-nizk-crs}, adaptive multi-theorem computational zero-knowledge implies adaptive computational zero-knowledge.
\end{remark}

\begin{remark}
    As defined in \cref{def:simsound-nizk-crs}, a simulation-sound zero-knowledge protocol has adaptive computational soundness (\cref{def:nizk-crs}). 
\end{remark}

\begin{theorem}[Simulation Sound Compiler]~\cite{SCOPS01}
    \label{thm:simsound-compiler}
    Given one-way functions and a single-theorem NIZK proof system for $\NP$, then there exists a non-interactive simulation sound, adaptively multi-theorem computationally zero-knowledge proof for $\NP$ in the common reference string model (\cref{def:simsound-nizk-crs}).
\end{theorem}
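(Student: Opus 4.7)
This theorem is the classical compiler of Sahai~\cite{Sahai99} and De Santis et al.~\cite{SCOPS01}; the plan is a two-step reduction. First, I would apply the Feige--Lapidot--Shamir bootstrap (which uses OWFs for an auxiliary commitment) to lift the given single-theorem NIZK to an adaptive multi-theorem NIZK $\Pi_0 = (\Setup_0, \sP_0, \sV_0)$ for $\NP$ in the CRS model. Second, I would wrap $\Pi_0$ with a strongly unforgeable one-time signature scheme $(\mathsf{OTS.Gen}, \mathsf{OTS.Sign}, \mathsf{OTS.Ver})$ (from OWFs via Lamport/Rompel) together with a perfectly binding, computationally hiding commitment $\Com$, in order to add simulation soundness.

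Concretely, I would define $\Pi = (\Setup, \sP, \sV)$ as follows. $\Setup(1^\lambda)$ samples $\crs_0 \gets \Setup_0(1^\lambda)$ and $c \gets \Com(0^\lambda)$, publishing $\crs = (\crs_0, c)$. To prove $(x, w) \in \cR_\lambda$, the prover samples $(\vk, \sk) \gets \mathsf{OTS.Gen}(1^\lambda)$, runs $\sP_0$ on the OR statement
\[
\phi_{x,\vk} \;\equiv\; \bigl((x, w) \in \cR\bigr) \;\vee\; \bigl(\exists r \colon c = \Com(\vk; r)\bigr)
\]
using $w$ as witness, then signs $\sigma \gets \mathsf{OTS.Sign}(\sk, (x, \vk, \pi_0))$ and outputs $(\vk, \pi_0, \sigma)$. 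Verification checks both the OT signature and the NIZK proof.

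For adaptive multi-theorem ZK, the simulator leaves $c$ unchanged, samples $(\vk, \sk)$ honestly per query, and invokes the ZK simulator of $\Pi_0$ on $\phi_{x, \vk}$; indistinguishability is inherited from the multi-theorem ZK of $\Pi_0$. For simulation soundness, suppose $\cA$ produces a fresh accepting $(\vk^*, \pi_0^*, \sigma^*)$ on $x^* \notin \cL$ after seeing simulated proofs with OT keys $\vk_1, \ldots, \vk_t$. If $\vk^* = \vk_i$ for some $i$, then since $x^* \neq x_i$ the tuple $(x^*, \vk^*, \pi_0^*)$ was never signed by $\sk_i$, so $\sigma^*$ is a forgery, contradicting strong one-time unforgeability. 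Otherwise $\vk^* \notin \{\vk_i\}$: switch $c$ via computational hiding to $\Com(\vk^\dagger)$ for a fresh $\vk^\dagger$ unknown to $\cA$; then $\vk^* = \vk^\dagger$ has negligible probability, and perfect binding forces the right branch of $\phi_{x^*, \vk^*}$ to be false, so $\pi_0^*$ must accept a false $\NP$ statement, contradicting the adaptive soundness of $\Pi_0$.

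The main obstacle I anticipate is invoking soundness of $\Pi_0$ in the final step when $\cA$ has already been fed many simulated $\Pi_0$ proofs (whose indistinguishability from honest ones was used to set up the hybrid). The resolution passes through an intermediate hybrid in which each simulated $\pi_0$ is replaced by an \emph{honest} $\Pi_0$ proof using an opening of $c$ to the per-query $\vk_i$ as a right-branch witness; enabling this across many queries is precisely why the classical compiler relies on an equivocable/trapdoor commitment or the FLS hidden-bits structure, carefully arranged so that the hiding of $c$ and knowledge of its opening never have to co-occur. The detailed bookkeeping of these hybrids, together with the observation that the simulated CRS distribution remains computationally indistinguishable from the real one (so soundness of $\Pi_0$ transfers), constitutes the bulk of the technical work in~\cite{Sahai99,SCOPS01}.
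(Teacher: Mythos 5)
The paper cites this result from~\cite{SCOPS01} without reproducing a proof, so there is no internal argument to compare against; judging your outline on its own, the skeleton (FLS lift, then OTS plus an OR-wrapper) is right and you correctly flag the step that does not close, but that step does indeed fail for the construction you wrote down, so this is not yet a proof. With $c = \Com(0^\lambda)$ perfectly binding, your simulator has no witness for either branch of $\phi_{x,\mathsf{vk}}$ and so must invoke the ZK \emph{simulator} of $\Pi_0$. Consequently, throughout the simulation-soundness experiment the $\Pi_0$-CRS and all $\Pi_0$-proofs are simulated, and the adaptive soundness of a merely-ZK $\Pi_0$ gives you nothing in that world --- that would require $\Pi_0$ to already be simulation sound, which is circular. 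Your proposed bridge, replacing each simulated $\pi_0$ by an honest $\Pi_0$ proof of the right branch via an opening of $c$ to the per-query $\mathsf{vk}_i$, cannot be executed: a perfectly binding $c$ opens to at most one string, and the simulator needs openings to distinct fresh $\mathsf{vk}_1, \ldots, \mathsf{vk}_t$. Reusing a single committed $\mathsf{vk}^\star$ is no escape either, since then one OTS key signs $t$ distinct tuples and Case~1 of your soundness argument collapses.

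The classical compiler resolves this by choosing the right branch so that the simulator has a \emph{genuine} right-branch witness for every fresh $\mathsf{vk}_i$ it samples, so the honest $\Pi_0$ prover and an honest $\Pi_0$-CRS are used throughout and adaptive soundness of $\Pi_0$ applies unconditionally. One standard realization: the simulated CRS carries a commitment $c$ to a $\PRF$ seed $s$; the per-proof OTS key is derived as $(\mathsf{vk}, \mathsf{sk}) \gets \mathsf{OTS.Gen}(1^\lambda; \PRF_s(x))$; and the right branch asserts ``$\exists (s', r)$ such that $c = \Com(s'; r)$ and $\mathsf{vk}$ is the key derived from $s'$ and $x$.'' The simulator proves this branch with $(s, r)$; zero-knowledge reduces to commitment hiding, $\PRF$ security, and witness indistinguishability of $\Pi_0$ (never its ZK simulator); and in your Case~2 the adversary's $\mathsf{vk}^*$, after replacing $\PRF_s$ by a random function, must be an unpredictable high-entropy string, a contradiction. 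You gesture at exactly this machinery, but with an opens-to-$\mathsf{vk}$ right branch and a binding $\Com(0^\lambda)$ there is no hybrid chain that lands in a world where $\Pi_0$'s soundness is usable --- that is the missing idea.
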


\begin{corollary}[Post-Quantum Simulation Sound NIZK for $\NP$]
    \label{cor:simsound-nizk}
    Assuming the polynomial quantum hardness of LWE, there exists a post-quantum non-interactive simulation sound, adaptively multi-theorem computationally zero-knowledge proof for $\NP$ in the common reference string model (\cref{def:simsound-nizk-crs}).
\end{corollary}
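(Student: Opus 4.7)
The plan is to derive \cref{cor:simsound-nizk} as a direct composition of the two theorems stated immediately before it, namely \cref{thm:nizk-crs} (post-quantum NIZK arguments for $\NP$ from LWE) and \cref{thm:simsound-compiler} (the simulation-sound compiler from one-way functions plus a single-theorem NIZK). The only genuinely new step is to verify that the ingredients required by \cref{thm:simsound-compiler} can all be instantiated under the polynomial quantum hardness of LWE, while preserving post-quantum security throughout.

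First I would instantiate the base NIZK: by \cref{thm:nizk-crs}, polynomial quantum hardness of LWE yields an adaptively sound, adaptively computationally zero-knowledge NIZK argument for $\NP$ in the CRS model, and in particular this is a post-quantum single-theorem NIZK, which suffices as input to the compiler. Second, I would obtain the post-quantum one-way function requirement: it is a standard fact that LWE (even its decisional form with polynomial modulus) implies post-quantum one-way functions, since the LWE sampler itself is a post-quantum OWF (pseudorandom generators from LWE give OWFs, and any post-quantum pseudorandom generator is a post-quantum OWF). Third, I would feed these two primitives into \cref{thm:simsound-compiler} to obtain a non-interactive simulation-sound, adaptively multi-theorem computationally zero-knowledge protocol for $\NP$ in the CRS model, matching \cref{def:simsound-nizk-crs}.

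The main subtlety, and therefore the step I would pay most attention to, is checking that \cref{thm:simsound-compiler} as stated, which classically guarantees simulation soundness and multi-theorem zero-knowledge, lifts to the post-quantum setting when instantiated with post-quantum building blocks. Concretely, this requires that (i) the security reductions in the compiler are black-box and straight-line with respect to the adversary, so they go through when $\cA$ is a polynomial-size quantum circuit, and (ii) the underlying primitives (the OWF and the single-theorem NIZK) are used only in ways whose security holds against quantum distinguishers. Both conditions are known to hold for the \cite{SCOPS01} compiler: its reductions invoke the OWF and the underlying NIZK in a non-rewinding, black-box manner. I would note this explicitly, citing that the compiler's reductions are post-quantum sound when instantiated with post-quantum components, which is precisely the setting we have.

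Putting these pieces together yields a scheme $\Pi = (\Setup, \sP, \sV)$ in the CRS model that (a) inherits perfect completeness and adaptive computational soundness from the compiled output, (b) satisfies adaptive multi-theorem computational zero-knowledge against quantum distinguishers (implying adaptive computational zero-knowledge, per the first remark after \cref{def:simsound-nizk-crs}), and (c) satisfies simulation soundness against quantum adversaries, exactly matching \cref{def:simsound-nizk-crs}. Since every assumption is reduced to polynomial quantum hardness of LWE, the corollary follows.
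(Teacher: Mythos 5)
Your proposal matches the paper's own argument: the corollary is proved there exactly by combining \cref{thm:nizk-crs} with \cref{thm:simsound-compiler}, with the LWE-based NIZK serving as the single-theorem input and one-way functions coming from LWE. Your additional discussion of why the compiler's black-box, straight-line reductions lift to quantum adversaries is a sensible elaboration of a point the paper leaves implicit, but the route is the same.
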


\begin{proof}
    This follows from \cref{thm:nizk-crs} and \cref{thm:simsound-compiler}.
\end{proof}

\subsection{NIZKs in the QRO model}

We now consider the quantum random oracle model. For sake of completeness, we briefly outline a definition for a quantum random oracle.

\begin{definition}
    \label{def:qro}
    A quantum random oracle $\cO$ is a random function which support quantum queries and allows for the following accesses:
    \begin{itemize}
        \item \textbf{Query Access.} On input a message, $\cO$ outputs a uniformly random value. This is the usual access provided. When quantum access may be invoked, we denote the oracle as $\ket{\cO}$.

        \item \noindent \textbf{Programmability Access.} Given programmability access, $\cO$ can be set to output a specified value on a specified input. An arbitrary number of distinct points can be programmed.

        \item \textbf{Extractability Access.} Given extractability access, specific queries to $\ket{\cO}$ can be read.
    \end{itemize}
\end{definition}

\begin{definition}[(Quantum) Post-Quantum NIZKAoK for $\NP$ in QROM]~\cite{LiuZ19}
\label{def:nizkpok-qro}
Let $\cO$ be a random oracle. Let $\NP$ relation $\cR$ with corresponding language $\cL$ be given such that they can be indexed by a security parameter $\lambda \in \bbN$.

$\Pi = (\sP, \sV)$ is a (quantum) non-interactive zero-knowledge argument of knowledge protocol with respect to a random oracle if it has the following syntax and properties.

\noindent \textbf{Syntax.}
The input $1^\lambda$ is left out when it is clear from context.
\begin{itemize}
    \item $\pi \gets \sP^\cO(1^\lambda, x, w)$: The random oracle-aided (quantum) probabilistic polynomial-size circuit $\sP$ on input an instance and witness pair $(x, w) \in \cR_\lambda$, outputs a proof $\pi$.
    \item $\sV^\cO(1^\lambda, x, \pi) \in \zo$: The random oracle-aided (quantum) probabilistic polynomial-size circuit $\sV$ on input an instance $x$ and a proof $\pi$, outputs $1$ iff $\pi$ is a valid proof for $x$.
\end{itemize}

\noindent \textbf{Properties.}
\begin{itemize}
    \item {\bf Perfect Completeness.}
    For every $\lambda \in \bbN$ and every $(x, w) \in \cR_\lambda$,
    \begin{equation*}
        \Pr_{\substack{\cO \\ \pi \gets \sP^\cO(x, w)}}[\sV^\cO(x, \pi) = 1] = 1.
    \end{equation*}

    \item {\bf Zero-Knowledge with Quantum Simulator.}
    There exists a quantum polynomial-size circuit $\Sim$ which ignores its second input and a negligible function $\negl(\cdot)$ such that for every oracle-aided polynomial-size quantum circuit $\cD$ which is limited to making queries $(x, \omega) \in \cR_\lambda$ on input $1^\lambda$, and every sufficiently large $\lambda \in \bbN$,
    \begin{equation*}
        \left\vert \Pr[\cD^{\Sim, \ket{\cO_\Sim}}(1^\lambda) = 1] - \Pr_\cO[\cD^{\sP^\cO, \ket{\cO}}(1^\lambda) = 1]\right\vert \le \negl(\lambda)
    \end{equation*}
    where $\Sim$ simulates the random oracle $\ket{\cO_\Sim}$.

    \item {\bf Argument of Knowledge with Quantum Extractor.}
    There exists an oracle-aided quantum polynomial-size circuit extractor $\Ext$ that simulates a random oracle $\ket{\cO_\Ext}$, a constant $c$, a polynomial $p(\cdot)$, and negligible functions $\negl_0(\cdot)$, $\negl_1(\cdot)$ such that for every polynomial-size quantum circuit $\cA$ and every $x$ with associated $\lambda \in \bbN$ satisfying
    \begin{equation*}
        \Pr_{\substack{\cO \\ \pi \gets \cA^{\ket{\cO}}(x)}}[\sV^\cO(x, \pi) = 1] \ge \negl_0(\lambda)
    \end{equation*}
    we have
    \begin{align*}
        &\Pr[(x, \Ext^{\cA^{\ket{\cO_\Ext}}(x)}(x)) \in \cR_\lambda]  \ge \frac{1}{p(\lambda)} \cdot \left( \Pr_{\substack{\cO \\ \pi \gets \cA^{\ket{\cO}}(x)}}[\sV^\cO(x, \pi) = 1] - \negl_0(\lambda)\right)^c - \negl_1(\lambda).
    \end{align*}
\end{itemize}
\end{definition}

\begin{theorem}[NIZKAoK in QROM~\cite{Unruh17,LiuZ19}]
    \label{thm:pq-fs}
    Let $\Pi$ be a post-quantum sigma protocol (\cref{def:sigma}).
    The Fiat-Shamir heuristic applied to $\Pi$ yields a classical post-quantum NIZKAoK in the QROM (\cref{def:nizkpok-qro}).
\end{theorem}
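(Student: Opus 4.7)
The plan is to verify that the standard Fiat--Shamir transform, instantiated with the sigma protocol $\Pi = (\sP.\Com, \sV.\Ch, \sP.\Prove, \sV.\Ver)$, produces a NIZKAoK in the QROM by arguing each of the three properties of \cref{def:nizkpok-qro} separately. Recall the transform: the prover, on input $(x,w)$, computes $(\alpha,\state) \gets \sP.\Com(x,w)$, queries the random oracle $\cO$ at $(x,\alpha)$ to obtain a challenge $\beta$, computes $\gamma \gets \sP.\Prove(x,w,\state,\beta)$, and outputs $\pi = (\alpha,\beta,\gamma)$; the verifier accepts iff $\cO(x,\alpha) = \beta$ and $\sV.\Ver(x,\alpha,\beta,\gamma) = 1$.

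Perfect completeness is immediate: an honest prover sets $\beta = \cO(x,\alpha)$, so verification reduces directly to completeness of $\Pi$. For zero-knowledge, I would construct the simulator $\Sim$ by invoking the HVZK simulator $\Sim_\Pi$ of the sigma protocol on $x$ to obtain $(\alpha,\beta,\gamma)$, and then simulate the random oracle $\ket{\cO_\Sim}$ using the programmability access granted by \cref{def:qro}, programming $\cO_\Sim(x,\alpha) := \beta$ and sampling uniformly elsewhere. The key observation is that the \emph{unpredictable commitment} property of $\Pi$ ensures that for any instance $(x,w) \in \cR_\lambda$ the simulated $\alpha$ collides with a prior distinguisher query with only negligible probability, so the programmed point is statistically hidden; indistinguishability from the real experiment then follows from the HVZK property of $\Pi$ by a standard hybrid over the distinguisher's proof queries, with an O2H-style bound controlling the perturbation caused by local reprogramming.

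For the argument of knowledge property, the plan is to leverage the quantum extractor $\Ext_\Pi$ guaranteed by the sigma protocol, combined with a QROM-compatible rewinding/extraction strategy as in \cite{LiuZ19}. Concretely, $\Ext$ simulates $\ket{\cO_\Ext}$ via the compressed-oracle representation (so random oracle queries can be logged without disturbing the adversary's state beyond the usual quantum accounting), runs $\cA$ to the point where it outputs $\pi = (\alpha,\beta,\gamma)$, measures the compressed oracle database to recover the first-stage state $(\alpha,\ket{\state})$ consistent with a successful query at $(x,\alpha)$, and then invokes $\Ext_\Pi$ on this simulated prover whose first-message circuit is $U_x$ and whose second-message circuit $V_{x,\beta}$ is defined by rerunning $\cA$ with the random oracle reprogrammed at $(x,\alpha)$ to a freshly sampled $\beta$. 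The success probability bound then follows by composing the extraction guarantee of $\Ext_\Pi$ with the $1/p(\lambda)$-loss incurred by the compressed-oracle measurement step.

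The main obstacle is the AoK analysis: naive quantum rewinding collapses the adversary's state, so the reduction must instead rely on the compressed-oracle framework to extract a random-oracle query $(x,\alpha)$ and then simulate two independent second-message executions on ``the same'' prover state---this is precisely the delicate step handled in \cite{Unruh17,LiuZ19}, and appealing to their machinery is what gives the polynomial loss $p(\lambda)$ and constant $c$ appearing in \cref{def:nizkpok-qro}. Once this technical bridge is in place, combining it with the sigma-protocol extractor and the HVZK-plus-programming argument for ZK completes the proof.
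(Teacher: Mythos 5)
The paper does not prove this theorem; it is stated as an imported result from \cite{Unruh17,LiuZ19} with no proof given in the text. Your sketch is a faithful high-level account of the argument developed in those references — completeness is immediate, zero-knowledge follows from the HVZK simulator plus oracle programming with unpredictable commitments guaranteeing the programmed point is fresh, and argument of knowledge relies on the compressed-oracle / measure-and-reprogram machinery of \cite{LiuZ19} — and you correctly identify the quantum no-rewinding obstacle as the technically delicate step whose resolution in the cited works is what supplies the polynomial loss $p(\lambda)$ and the constant $c$ in \cref{def:nizkpok-qro}.
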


\subsection{Quantum Money}

\begin{definition}[Public Key Quantum Money Mini-Scheme]~\cite{AaronsonC13,Zhandry19a}
\label{def:qmoney}
$(\Gen, \Ver)$ is a public key quantum money scheme if it has the following syntax and properties.

\noindent \textbf{Syntax.}
\begin{itemize}
    \item $(\rho_\$, s) \gets \Gen(1^\lambda)$: The quantum polynomial-time algorithm $\Gen$ on input security parameter $1^\lambda$ outputs a (possibly mixed-state) quantum banknote $\rho_\$$ (if pure-state, denoted $\ket{\$}$) along with a classical serial number $s$.

    \item $\Ver(\rho_\$, s) \in \zo$: The quantum polynomial-time algorithm $\Ver$ on input a (possibly mixed-state) quantum banknote $\rho_\$$ (if pure-state, denoted $\ket{\$}$) and a classical serial number $s$ outputs $1$ or $0$.
\end{itemize}

\noindent \textbf{Properties.}
\begin{itemize}
    \item \textbf{Perfect Correctness}: For every $\lambda \in \bbN^+$,
    \begin{equation*}
        \Pr_{(\rho_\$, s) \gets \Gen(1^\lambda)}[\Ver(\rho_\$, s) = 1] = 1.
    \end{equation*}

    \item \textbf{Unforgeable}: There exists a negligible function $\negl(\cdot)$ such that for every sufficiently large $\lambda \in \bbN^+$ and every polynomial-size quantum circuit $\cA$,
    \begin{equation*}
        \Pr_{\substack{(\rho_\$, s) \gets \Gen(1^\lambda) \\ (\rho_{\$, 0}, s_0, \rho_{\$, 1}, s_1) \gets \cA(\rho_\$, s)}}[s_0 = s_1 = s \: \land \: \Ver(\rho_{\$, 0}, s_0) = 1 \: \land \: \Ver(\rho_{\$, 1}, s_1) = 1] \le \negl(\lambda).
    \end{equation*}

    \item \textbf{Unpredictable Serial Numbers}: There exists a negligible function $\negl(\cdot)$ such that for every sufficiently large $\lambda \in \bbN$,
    \begin{equation*}
        \Pr_{\substack{(\rho_\$, s) \gets \Gen(1^\lambda) \\ (\rho_\$', s') \gets \Gen(1^\lambda)}}[s = s'] \le \negl(\lambda).
    \end{equation*}
\end{itemize}
\end{definition}

\begin{remark}[Unpredictable Serial Numbers]
    The unpredictable serial numbers property follows, w.l.o.g., from unforgeability. We will briefly outline the reduction. Say that $\Gen$ produced two quantum banknotes $\rho_\$$ and $\rho_\$'$ which had the same serial number $s$ with noticeable probability. Then an adversary $\cA$ that receives $(\rho_\$, s)$ from $\Gen$ could run $\Gen$ again to produce $(\rho_\$', s)$ with noticeable probability. This means that $\cA$ would have produced two quantum banknotes $\rho_\$$ and $\rho_\$'$ which $\Verify$ would accept with respect to the same serial number that $\cA$ received, $s$. 
\end{remark}

\begin{theorem}[Quantum Money from Subspace Hiding Obfuscation~\cite{AaronsonC13,Zhandry19a}]
\label{thm:qmoney}
If injective one-way functions and post-quantum iO exist, then public-key quantum money exists (\cref{def:qmoney}).
\end{theorem}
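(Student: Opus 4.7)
The plan is to follow the Aaronson--Christiano subspace-state construction as refined by Zhandry. For $\Gen(1^\lambda)$, I would sample a uniformly random subspace $A \subseteq \mathbb{F}_2^\lambda$ of dimension $\lambda/2$, set the banknote to the subspace state $\rho_\$ = |A\rangle\langle A|$ where $|A\rangle = \frac{1}{\sqrt{|A|}} \sum_{v \in A} |v\rangle$, and let the serial number $s$ consist of a pair of iO-obfuscated membership circuits, one for $A$ and one for its dual $A^\perp$. The injective OWFs enter through the subspace-hiding obfuscator construction (they allow unforgeable tagging of the circuit inputs so that the hybrid programs are pairwise indistinguishable under iO). To keep serial numbers well-defined and non-colliding, I would also include a uniformly random tag in $s$; by the remark preceding the theorem, once unforgeability is established, unpredictability of serial numbers follows for free.

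For perfect correctness, $\Ver(\rho_\$, s)$ runs the first obfuscated program coherently on the input register to test membership in $A$, then applies $H^{\otimes \lambda}$ and runs the second obfuscated program to test membership in $A^\perp$, using the identity $H^{\otimes \lambda}|A\rangle = |A^\perp\rangle$. An honest banknote is supported entirely on $A$ before the Hadamard and on $A^\perp$ after, so both projective tests succeed with probability $1$, and the state is returned to $|A\rangle$ by uncomputation.

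Unforgeability is the main obstacle and is proved by a hybrid argument. First I would replace the membership circuits for $A$ and $A^\perp$ with obfuscations of membership circuits for random superspaces $B \supseteq A$ and $B' \supseteq A^\perp$ of slightly larger dimension; indistinguishability of this switch is exactly the subspace-hiding property of iO, which post-quantum iO plus injective OWFs deliver. In this hybrid, the serial number reveals nothing about $A$ beyond $B$ and $B'$, so the adversary's task reduces to: given $|A\rangle$ together with $(B, B')$, produce two copies of a state that projects into $A$ in the standard basis and into $A^\perp$ in the Hadamard basis (relative to $B, B'$). This is the information-theoretic direct-product hardness of subspace states (Aaronson--Christiano, strengthened by Ben-David--Sattath and Zhandry), which bounds the cloning probability by $\mathsf{negl}(\lambda)$ via a monogamy-of-entanglement argument on cosets.

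Since the theorem is cited verbatim to \cite{AaronsonC13,Zhandry19a}, I would present the argument above as a sketch and defer the quantitative direct-product bound and the subspace-hiding iO construction to those references; the only piece I would want to state carefully in-line is the reduction from two accepting banknotes with serial number equal to $s$ to a violation of the direct-product game, since the definitions in \cref{def:qmoney} quantify over the adversary outputting the same serial number as in its challenge.
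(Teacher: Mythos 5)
The paper does not prove this theorem at all -- it imports it directly from \cite{AaronsonC13,Zhandry19a} as a known result, and your sketch (subspace states, iO-based subspace-hiding obfuscation of membership in $A$ and $A^\perp$, direct-product hardness for unforgeability, unpredictable serial numbers via the preceding remark) is an accurate outline of exactly the construction and security argument in those cited works. So your proposal is correct and takes essentially the same route the paper relies on; deferring the quantitative direct-product bound and the shO construction to the references is appropriate here.
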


\begin{definition}[Public Key Quantum Money Mini-Scheme in QROM]
\label{def:qmoney-qro}
$(\Gen, \Ver)$ is a public key quantum money scheme with respect to a quantum random oracle $\cO$ if it has the following syntax and properties.

\noindent \textbf{Syntax.}
\begin{itemize}
    \item $(\rho_\$, s) \gets \Gen^\cO(1^\lambda)$: The random oracle-aided quantum polynomial-time algorithm $\Gen$ on input a security parameter $1^\lambda$ outputs a (possibly mixed-state) quantum banknote $\rho_\$$ (if pure-state, denoted $\ket{\$}$) along with a classical serial number $s$.

    \item $\Ver^\cO(\rho_\$, s) \in \zo$: The random oracle-aided quantum polynomial-time algorithm $\Ver$ on input a (possibly mixed-state) quantum banknote $\rho_\$$ (if pure-state, denoted $\ket{\$}$) and a classical serial number $s$ outputs $1$ or $0$.
\end{itemize}

\noindent \textbf{Properties.}
\begin{itemize}
    \item \textbf{Perfect Correctness}: For every $\lambda \in \bbN^+$,
    \begin{equation*}
        \Pr_{(\rho_\$, s) \gets \Gen^\cO(1^\lambda)}[\Ver^\cO(\rho_\$, s) = 1] = 1.
    \end{equation*}

    \item \textbf{Unforgeable}: There exists a negligible function $\negl(\cdot)$ such that for every sufficiently large $\lambda \in \bbN^+$ and every random oracle-aided polynomial-size quantum circuit $\cA$,
    \begin{equation*}
        \Pr_{\substack{(\rho_\$, s) \gets \Gen^\cO(1^\lambda) \\ (\rho_{\$, 0}, s_0, \rho_{\$, 1}, s_1) \gets \cA^\cO(\rho_\$, s)}}[s_0 = s_1 = s \: \land \: \Ver^\cO(\rho_{\$, 0}, s_0) = 1 \: \land \: \Ver^\cO(\rho_{\$, 1}, s_1) = 1] \le \negl(\lambda).
    \end{equation*}

    \item \textbf{Unpredictable Serial Numbers}: There exists a negligible function $\negl(\cdot)$ such that for every sufficiently large $\lambda \in \bbN$,
    \begin{equation*}
        \Pr_{\substack{(\rho_\$, s) \gets \Gen^\cO(1^\lambda) \\ (\rho_\$', s') \gets \Gen^\cO(1^\lambda)}}[s = s'] \le \negl(\lambda).
    \end{equation*}
\end{itemize}
\end{definition}

The unpredicable serial number property is w.l.o.g., just as above.

\subsection{Quantum Signature of Knowledge}

\begin{definition}[Quantum SimExt-secure Signature~\cite{CL06}]
\label{def:sig}
Let $\NP$ relation $\cR$ with corresponding language $\cL$ be given such that they can be indexed by a security parameter $\lambda \in \bbN$. Let a message space $\cM$ be given such that it can be indexed by a security parameter $\lambda \in \bbN$.

$(\Setup, \Sign, \Verify)$ is a SimExt-secure quantum signature of knowledge of a witness with respect to $\cL$ and $\cM$ if it has the following syntax and properties.

\noindent \textbf{Syntax.}
The input $1^\lambda$ is left out when it is clear from context.
\begin{itemize}
    \item $(\crs, \td) \gets \Setup(1^\lambda)$: The probabilistic polynomial-time algorithm $\Setup$ on input $1^\lambda$ outputs a common reference string $\crs$ and a trapdoor $\td$.
    \item $\sigma \gets \Sign(1^\lambda, \crs, x, w, m)$: The polynomial-time quantum algorithm $\Sign$ on input a common reference string $\crs$, an instance and witness pair $(x, w) \in \cR_\lambda$, and a message $m \in \cM_\lambda$, outputs a signature $\sigma$.
    \item $\Verify(1^\lambda, \crs, x, m, \sigma) \in \zo$: The polynomial-time quantum algorithm $\Verify$ on input a common reference string $\crs$, an instance $x$, a message $m \in \cM_\lambda$, and a signature $\sigma$, outputs $1$ iff $\sigma$ is a valid signature of $m$ with respect to $\crs$, $\cR_\lambda$, and $x$.
\end{itemize}

\noindent \textbf{Properties.}
\begin{itemize}
    \item {\bf Correctness}: For every sufficiently large $\lambda \in \bbN$, every $(x, w) \in \cR_\lambda$, and every $m \in \cM_\lambda$,
    \begin{equation*}
        \Pr_{\substack{(\crs, \td) \gets \Setup(1^\lambda) \\ \sigma \gets \Sign(\crs, x, w, m)}}[\Verify(\crs, x, m, \sigma) = 1] = 1.
    \end{equation*}

    \item {\bf Simulation}: There exists a quantum polynomial-size circuit simulator $\Sim = (\Sim_0, \Sim_1)$, where $\Sim_1$ ignores its second query input (a witness $w$), and a negligible function $\negl(\cdot)$ such that for every polynomial-size quantum circuit $\cA$
    and every sufficiently large $\lambda \in \bbN$,
    \begin{equation*}
        \left\vert \Pr_{(\crs, \td) \gets \Sim_0(1^\lambda)}[\cA^{\Sim_1(\crs, \td, \cdot, \cdot)}(\crs) = 1] - \Pr_{\substack{(\crs, \td) \gets \Setup(1^\lambda)}}[\cA^{\Sign(\crs, \cdot, \cdot, \cdot)}(\crs) = 1] \right\vert \le \negl(\lambda).
    \end{equation*}

    \item {\bf Extraction}: Let $\Sim = (\Sim_0, \Sim_1)$ be the simulator given by the simulation property. There exists a quantum polynomial-size circuit $\Ext$ and a negligible function $\negl(\cdot)$ such that for every oracle-aided polynomial-size quantum circuit $\cA$ 
    and every sufficiently large $\lambda \in \bbN$,
    \begin{align*}
        \Pr_{\substack{(\crs, \td) \gets \Sim_0(1^\lambda) \\ (x, m, \sigma) \gets \cA^{\Sim_1(\crs, \td, \cdot, \cdot)}(\crs) \\ w \gets \Ext(\crs, \td, x, m, \sigma)}}\left[\Verify(\crs, x, m, \sigma) = 1 \wedge (x, m) \not\in Q \wedge (x, w) \not\in \cR_\lambda \right] \le \negl(\lambda)
    \end{align*}
    where $Q$ is the list of queries from $\cA$ to $\Sim_1$.
\end{itemize}
\end{definition}

\section{Unclonable Non-Interactive Zero-Knowledge in the CRS Model}
\label{sec:crscons}

\subsection{Simulation-Extractable NIZK}

\label{sec:simext-nizk-crs}

\begin{definition}[Post-Quantum (Quantum) Simulation-Extractable NIZK for $\NP$ in CRS Model]
\label{def:simext-nizk-crs}
Let $\NP$ relation $\cR$ with corresponding language $\cL$ be given such that they can be indexed by a security parameter $\lambda \in \bbN$.

$\Pi = (\Setup, \sP, \sV)$ is a post-quantum (quantum) non-interactive simulation-extractable zero-knowledge argument for $\NP$ in the CRS model if it has the following syntax and properties.

\begin{itemize}
    \item $\Pi$ is a post-quantum (quantum) non-interactive simulation sound, adaptive multi-theorem computational zero-knowledge argument for $\NP$ in the CRS model (\cref{def:simsound-nizk-crs}).
    
    \item {\bf Simulation Extractability.}
    Let $\Sim = (\Sim_0, \Sim_1)$ be the simulator given by the adaptive multi-theorem computational zero-knowledge property. There exists a (quantum) polynomial-time circuit $\Ext$ and a negligible function $\negl(\cdot)$ such that for every oracle-aided polynomial-size quantum circuit $\cA$ and every $\lambda \in \bbN$,
    \begin{equation*}
        \Pr_{\substack{(\crs, \td) \gets \Sim_0(1^\lambda) \\ (x, \pi) \gets \cA^{\Sim_1(\crs, \td, \cdot)}(\crs) \\ w \gets \Ext(\crs, \td, x, \pi)}}[\sV(\crs, x, \pi) = 1 \wedge x \not\in Q \wedge (x, w) \not\in \cR] \le \negl(\lambda),
    \end{equation*}
    where $Q$ is the list of queries from $\cA$ to $\Sim_1$.
    \end{itemize}
\end{definition}

\begin{remark}
    As defined in \cref{def:simext-nizk-crs}, a simulation-extractable zero-knowledge protocol has simulation soundness~\cite{Sahai99,SCOPS01}, is an argument of knowledge, and has adaptive computational soundness (\cref{def:nizk-crs}).
\end{remark}

\begin{figure}[!ht]
\begin{framed}
\centering
\begin{minipage}{1.0\textwidth}
\begin{center}
    \underline{Simulation-Extractable Non-Interactive ZK for $\cL \in \NP$}
\end{center}

\vspace{2mm}

Let $\Pi = (\Setup, \sP, \sV)$ be a non-interactive simulation sound, adaptively multi-theorem computationally zero-knowledge protocol for $\NP$, and $(\Gen, \Enc, \Dec)$ be a post-quantum perfectly correct, IND-CPA secure encryption scheme. Let $\cR$ be the relation with respect to $\cL \in \NP$.

\vspace{2mm}
\noindent {\underline{\textsc{Setup}}$(1^\lambda)$}:
Compute $(\pk, \sk) \gets \Gen(1^\lambda)$, and $(\crs_\Pi, \td_\Pi) \gets \Pi.\Setup(1^\lambda)$. Output $(\crs = (\pk, \crs_\Pi), \td = (\sk, \td_\Pi))$.

\vspace{1mm}
\noindent {\underline{\textsc{Prove}}$(\crs, x, w)$}:
\begin{itemize}
    \item Compute $\ct = \Enc(\pk, w; r)$ for $r$ sampled uniformly at random.
    \item Let $x_\Pi = (\pk, x, \ct)$ be an instance of the following language $\cL_\Pi$:
    \begin{equation*}
    \hspace{-7mm}
        \{ (\pk, x, \ct) \st \exists (w, r) \st \ct = \Enc(\pk, w; r) \:\wedge\: (x, w) \in \cR \}.
    \end{equation*}
    \item Compute proof $\pi_\Pi \gets \Pi.\sP(\crs_\Pi, x_\Pi, (w, r))$ for language $\cL_\Pi$.
    \item Output $\pi = (\ct, \pi_\Pi)$.
\end{itemize}
\noindent {\underline{\textsc{Verify}}$(\crs, x, \pi)$}:
\begin{itemize}
    \item Output $\Pi.\sV(\crs_\Pi, x_\Pi, \pi_\Pi)$.
\end{itemize}

\end{minipage}
\end{framed}
\caption{Unclonable Non-Interactive Quantum Protocol for $\cL \in \NP$}
\label{fig:simext-nizk}
\end{figure}

\begin{theorem}[Post-Quantum Simulation-Extractable NIZK for $\NP$ in the CRS Model]
    \label{thm:simext-nizk-crs}
    Let $\NP$ relation $\cR$ with corresponding language $\cL$ be given.
    
    Let $\Pi = (\Setup, \sP, \sV)$ be a non-interactive post-quantum simulation sound, adaptively multi-theorem computationally zero-knowledge protocol for $\NP$ (\cref{def:simsound-nizk-crs}). Let $(\Gen, \Enc, \Dec)$ be a post-quantum perfectly correct, IND-CPA secure encryption scheme (\cref{def:enc}).
    
    $(\Setup, \sP, \sV)$ as defined in \cref{fig:simext-nizk} will be a non-interactive post-quantum simulation-extractable, adaptively multi-theorem computationally zero-knowledge argument for $\cL$ in the common reference string model (\cref{def:simext-nizk-crs}).
\end{theorem}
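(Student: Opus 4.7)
The plan is to verify the three required properties of \cref{def:simext-nizk-crs} in turn: completeness, adaptive multi-theorem computational zero-knowledge, and simulation extractability. Completeness is immediate from the perfect correctness of $\Enc$ (so that $(w,r)$ is a valid $\cL_\Pi$-witness for $x_\Pi=(\pk,x,\ct)$) together with the perfect completeness of the underlying NIZK $\Pi$, since the verifier of the compiled scheme is exactly $\Pi.\sV$ on $(\crs_\Pi,x_\Pi,\pi_\Pi)$.

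For adaptive multi-theorem zero-knowledge, I would define the compiled simulator $\Sim=(\Sim_0,\Sim_1)$ as follows. $\Sim_0(1^\lambda)$ runs $\Pi$'s simulator to get $(\crs_\Pi,\td_\Pi)$, freshly samples $(\pk,\sk)\gets\Gen(1^\lambda)$, and returns $\crs=(\pk,\crs_\Pi)$, $\td=(\sk,\td_\Pi)$. On query $(x,w)$, $\Sim_1$ computes $\ct\gets\Enc(\pk,0;r)$, invokes $\Pi$'s simulator on $x_\Pi=(\pk,x,\ct)$ to obtain $\pi_\Pi$, and outputs $(\ct,\pi_\Pi)$ (ignoring $w$). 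Indistinguishability from the real experiment is shown by two hybrids: (i) replace the honestly generated $(\crs_\Pi,\pi_\Pi)$ with the ones from $\Pi$'s multi-theorem simulator while keeping $\ct=\Enc(\pk,w;r)$---this is indistinguishable by the adaptive multi-theorem ZK of $\Pi$, since the reduction can generate $\pk,\sk,\ct$ on its own and forward statements $x_\Pi$ to its oracle; (ii) replace $\ct=\Enc(\pk,w;r)$ with $\ct=\Enc(\pk,0;r)$, indistinguishable by IND-CPA security of the encryption scheme, since from this hybrid onwards neither $\Sim_0$ nor $\Sim_1$ uses $\sk$.

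For simulation extractability, I would define $\Ext(\crs,\td,x,\pi)$ to parse $\td=(\sk,\td_\Pi)$ and $\pi=(\ct,\pi_\Pi)$ and simply return $w\gets\Dec(\sk,\ct)$; note this is a classical polynomial-time circuit. Consider any adversary $\cA$ that, with the $\crs$ generated by $\Sim_0$ and access to $\Sim_1$, produces $(x,\pi)$ with $x\notin Q$. Observe that every query $\Sim_1$ makes to $\Pi$'s simulator is of the form $x_\Pi^{(i)}=(\pk,x_i,\ct_i')$ for the distinct $x_i\in Q$. Since $\cA$'s output satisfies $x\notin Q$, the statement $x_\Pi=(\pk,x,\ct)$ is distinct from every $x_\Pi^{(i)}$ (the middle coordinate differs), so $x_\Pi$ was never queried to $\Pi$'s simulator. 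By the simulation soundness of $\Pi$ (\cref{def:simsound-nizk-crs}) applied to an oracle-aided reduction that forwards $\cA$'s queries through the wrapping $x_i\mapsto(\pk,x_i,\Enc(\pk,0))$, the probability that $\Pi.\sV(\crs_\Pi,x_\Pi,\pi_\Pi)=1$ yet $x_\Pi\notin\cL_\Pi$ is negligible. Hence with all but negligible probability, $\ct$ is an encryption under $\pk$ of some $w^*$ with $(x,w^*)\in\cR$, and by perfect correctness of the encryption scheme $\Dec(\sk,\ct)=w^*$, so $\Ext$ returns a valid witness.

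The main technical subtlety I expect is in the simulation-extractability step: I must carefully track the mapping between the adversary's query set $Q$ (on $\cL$-instances) and the underlying query set $Q_\Pi$ that the base simulator receives (on $\cL_\Pi$-instances), since the adversary never sees $\ct$ during simulation but could try to re-use $x_i\in Q$ while supplying a fresh $\ct$. The definition of $Q$ requires $x\notin Q$, which rules this out, so the injection $x_i\mapsto x_\Pi^{(i)}$ is unambiguous and the reduction to $\Pi$'s simulation soundness goes through cleanly. The remaining work is routine bookkeeping of the hybrid reductions.
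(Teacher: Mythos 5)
Your proposal takes essentially the same route as the paper: the same simulator (sample $\pk,\sk$ fresh, defer to $\Pi$'s simulator on statements $x_\Pi$ with $\ct\gets\Enc(\pk,0)$), the same two-stage hybrid for zero-knowledge (first replace $\Pi$ proofs using $\Pi$'s multi-theorem ZK, then replace ciphertexts using IND-CPA), and the same decryption-based extractor whose analysis reduces to $\Pi$'s simulation soundness via the observation that $x\notin Q$ forces $x_\Pi\notin Q_\Pi$ and perfect correctness of $\Enc$ pins down the unique decryption. Two small imprecisions worth noting: the ciphertext-replacement step should be phrased as a per-query hybrid (the adversary makes polynomially many signing queries, so a single IND-CPA challenge does not cover all of them at once — the paper does this query-by-query), and the remark that ``the adversary never sees $\ct$ during simulation'' is false ($\Sim_1$ outputs $\pi=(\ct,\pi_\Pi)$), though this does not affect your argument since the requirement $x\notin Q$ is what actually guarantees $x_\Pi\notin Q_\Pi$.
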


\begin{proof}
\noindent \textbf{Perfect Completeness.}
Completeness follows from the perfect completeness of $\Pi$.\\

\noindent \textbf{Adaptively Multi-theorem Computationally Zero-Knowledge.}
Let $\Pi.\Sim = (\Pi.\Sim_0, \Pi.\Sim_1)$ be the adaptive multi-theorem computationally zero-knowledge simulator of $\Pi$. We define $\Sim_0$ with oracle access to $\Pi.\Sim_0$ as follows:
\begin{addmargin}[2em]{2em} 
    \noindent \emph{Input}: $1^\lambda$.

    \noindent \textbf{(1)} Compute $(\pk, \sk) \gets \Gen(1^\lambda)$.
    
    \noindent \textbf{(2)} Send $1^\lambda$ to $\Pi.\Sim_0$. Receive $(\crs_\Pi, \td_\Pi)$ from $\Pi.\Sim_0$.

    \noindent \textbf{(3)} Output $(\crs=(\pk, \crs_\Pi), \td=(\sk, \td_\Pi))$.
\end{addmargin}
We define $\Sim_1$ with oracle access to $\Pi.\Sim_1$
as follows:
\begin{addmargin}[2em]{2em} 
    \noindent \emph{Input}: $\crs=(\pk, \crs_\Pi)$, $\td = (\sk, \td_\Pi)$, $x$.

    \noindent \textbf{(1)} Compute $\ct = \Enc(\pk, 0; r)$ for $r$ sampled uniformly at random.
    
    \noindent \textbf{(2)} Define $x_\Pi = (\pk, x, \ct)$.
    
    \noindent \textbf{(3)} Send $(\crs_\Pi, \td_\Pi, x_\Pi)$ to $\Pi.\Sim_1$. Receive $\pi_\Pi$.

    \noindent \textbf{(4)} Output $\pi = (\ct, \pi_\Pi)$.
\end{addmargin}

Let a polynomial $p(\cdot)$ and an oracle-aided polynomial-size quantum circuit $\cA$ be given such that
\begin{equation}
    \label{eqn:simsound-nizk1}
    \left\vert \Pr_{\substack{(\crs, \td) \gets \Setup(1^\lambda)}}[\cA^{\sP(\crs, \cdot, \cdot)}(\crs) = 1] - \Pr_{\substack{(\crs, \td) \gets \Sim_0(1^\lambda)}}[\cA^{\Sim_1(\crs, \td, \cdot)}(\crs) = 1]\right\vert \ge \frac{1}{p(\lambda)}.
\end{equation}
We will first switch the honest proofs for simulated proofs, using the adaptive multi-theorem zero-knowledge of $\Pi$. Later, we will see how we can switch the encryption of a valid witness to an encryption of $0$, by using the security of the encryption scheme.

Towards this end, we define an intermediary circuit $\cB=(\cB_0, \cB_1)$ which encrypts a valid witness, but provides simulated proofs through $\Pi.\Sim_1$.
We define $\cB_0$ to be equivalent to $\Sim_0$. We define $\cB_1$ with oracle access to $\Pi.\Sim_1$ as follows:
\begin{addmargin}[2em]{2em} 
    \noindent \emph{Input}: $\crs=(\pk, \crs_\Pi)$, $\td = (\sk, \td_\Pi)$, $x$, $w$.

    \noindent \textbf{(1)} Compute $\ct = \Enc(\pk, w; r)$ for $r$ sampled uniformly at random.
    
    \noindent \textbf{(2)} Define $x_\Pi = (\pk, x, \ct)$.
    
    \noindent \textbf{(3)} Send $(\crs_\Pi, \td_\Pi, x_\Pi)$ to $\Pi.\Sim_1$. Receive $\pi_\Pi$.

    \noindent \textbf{(4)} Output $\pi = (\ct, \pi_\Pi)$.
\end{addmargin}

\begin{claim}
    \label{claim:simsound-nizk}
    There exists a negligible function $\negl(\cdot)$ such that for every oracle-aided polynomial-size quantum circuit $\cA$,
    \begin{equation*}
        \left\vert \Pr_{\substack{(\crs, \td) \gets \Setup(1^\lambda)}}[\cA^{\sP(\crs, \cdot, \cdot)}(\crs) = 1] - \Pr_{\substack{(\crs, \td) \gets \cB_0(1^\lambda)}}[\cA^{\cB_1(\crs, \td, \cdot, \cdot)}(\crs) = 1]\right\vert \le \negl(\lambda).
    \end{equation*}
\end{claim}

We will later see a proof of \cref{claim:simsound-nizk}. For now, assuming that this claim holds, by \cref{eqn:simsound-nizk}, this claim, and a union bound, there exists a polynomial $p'(\cdot)$ such that
\begin{equation*}
    \label{eqn:simsound-nizk2}
    \left\vert \Pr_{\substack{(\crs, \td) \gets \cB_0(1^\lambda)}}[\cA^{\cB_1(\crs, \td, \cdot, \cdot)}(\crs) = 1] - \Pr_{\substack{(\crs, \td) \gets \Sim_0(1^\lambda)}}[\cA^{\Sim_1(\crs, \td, \cdot)}(\crs) = 1] \right\vert \ge \frac{1}{p'(\lambda)}.
\end{equation*}

We define a series of intermediary hybrids starting from encrypting all real witnesses to encrypting all zeros. The first intermediary hybrid switches the encryption sent in the last query from an encryption of a witness to an encryption of $0$. We continue switching the encryption in the second to last query and so on, until we've switched the first proof that the adversary makes.

Let $q(\cdot)$ be a polynomial denoting the maximum number of queries that $\cA$ makes.
By a union bound and \cref{eqn:simsound-nizk}, there must exist a hybrid indexed by $i$ (where we switch the ciphertext in the $i$th proof from encrypting a witness to encrypting $0$) where $\cA$ first distinguishes between the two ciphertexts with advantage $1/(p'(\lambda) q(\lambda))$. That is,
\begin{equation}
    \label{eqn:simsound-nizk}
    \left\vert \Pr_{\substack{(\crs, \td) \gets \Setup(1^\lambda)}}[\cA^{\Sim_1^{(i+1)}(\crs, \cdot, \cdot)}(\crs) = 1] - \Pr_{\substack{(\crs, \td) \gets \Setup(1^\lambda)}}[\cA^{\Sim_1^{(i)}(\crs, \td, \cdot)}(\crs) = 1]\right\vert \ge \frac{1}{p'(\lambda)q(\lambda)}.
\end{equation}
where $\Sim^{(j)}_1$ is a stateful algorithm which sends real proofs for the first $j-1$ queries and sends simulated proofs for the remaining queries.

We can use $\cA$ to define a reduction that breaks the IND-CPA security of the encryption scheme as follows:
\begin{addmargin}[2em]{2em} 
    \noindent \underline{Reduction}: to IND-CPA of encryption scheme given oracle access to $\cA$, $\Sim_0$, and $\Sim_1$.

    \noindent \textit{Hardwired with}: $i$.

    \noindent \textbf{(1)} Compute $(\pk, \sk) \gets \Gen(1^\lambda)$.

    \noindent \textbf{(2)} Compute $(\crs_\Pi, \td_\Pi) \gets \Pi.\Sim_0(1^\lambda)$.

    \noindent \textbf{(3)} Define $\crs = (\pk, \crs_\Pi)$ and $\td = (\sk, \td_\Pi)$.
    
    \noindent \textbf{(4)} Send $\crs$ to $\cA$. 
    
    \noindent \textbf{(5)} On the first $i-1$ queries $(x, w)$ from $\cA$: send $\pi \gets \cB_0(\crs, x, w)$ to $\cA$.

    \noindent \textbf{(6)} On the $i$th query $(x, w)$ from $\cA$: send $(w, 0)$ to the challenger, receive $\ct$ from the challenger, define $x_\Pi = (\pk, x, \ct)$, send $(\crs_\Pi, \td_\Pi, x_\Pi)$ to $\Pi.\Sim_1$, receive $\pi_\Pi$ from $\Pi.\Sim_1$, and send $\pi=(\ct, \pi_\Pi)$ to $\cA$.
    
    \noindent \textbf{(7)} On any queries $(x, w)$ after the $i$th: send $\pi \gets \Sim_1(\crs, \td, x)$ to $\cA$.

    \noindent \textbf{(8)} Output the result of $\cA$.
\end{addmargin}
The view of $\cA$ matches that of $\Sim_1^{(i+1)}$ or $\Sim_1^{(i)}$. As such, this reduction should have the same advantage at breaking the IND-CPA security of the encryption scheme. We reach a contradiction. Now, all that remains to prove that our earlier claim holds.

\begin{proof}[Proof of \cref{claim:simsound-nizk}]
    Let a polynomial $p(\cdot)$ and an oracle-aided polynomial-size quantum circuit $\cA$ be given such that
    \begin{equation*}
        \left\vert \Pr_{\substack{(\crs, \td) \gets \Setup(1^\lambda)}}[\cA^{\sP(\crs, \cdot, \cdot)}(\crs) = 1] - \Pr_{\substack{(\crs, \td) \gets \cB_0(1^\lambda)}}[\cA^{\cB_1(\crs, \td, \cdot, \cdot)}(\crs) = 1]\right\vert \ge \frac{1}{p(\lambda)}.
    \end{equation*}
    We define a reduction to the multi-theorem zero-knowledge property of $\Pi$ as follows:
    \begin{addmargin}[2em]{2em} 
        \noindent \underline{Reduction}: to multi-theorem zero-knowledge of $\Pi$ given oracle access to $\cA$.
    
        \noindent \textbf{(1)} Compute $(\pk, \sk) \gets \Gen(1^\lambda)$.
    
        \noindent \textbf{(2)} Receive (real or simulated) $\crs_\Pi$ from the challenger.
    
        \noindent \textbf{(3)} Send $\crs = (\pk, \crs_\Pi)$ to $\cA$. 
        
        \noindent \textbf{(4)} On query $(x, w)$ from $\cA$: compute $\ct = \Enc(\pk, w; r)$ for $r$ samples uniformly at random, send $x_\Pi = (\pk, x, \ct)$ to the challenger, receive (real or simulated) $\pi_\Pi$ from the challenger, send $\pi = (\ct, \pi_\Pi)$ to $\cA$.
    
        \noindent \textbf{(5)} Output the result of $\cA$.
    \end{addmargin}
    The view of $\cA$ matches that of $\Setup$ and $\sP$ or $\cB_0$ and $\cB_1$. As such, this reduction should have the same advantage at breaking the multi-theorem zero-knowledge property of $\Pi$. We reach a contradiction, hence our claim must be true.
\end{proof}

This concludes our proof. Hence our protocol must be multi-theorem zero-knowledge.\\

\noindent \textbf{Simulation Extractable.}
Let $\Pi.\Sim = (\Pi.\Sim_0, \Pi.\Sim_1)$ be the adaptive multi-theorem computationally zero-knowledge simulator of $\Pi$.
Let $\Sim = (\Sim_0, \Sim_1)$ be the simulator, with oracle access to $\Pi.\Sim$, as defined in the proof that \cref{fig:simext-nizk} is adaptive multi-theorem computational zero-knowledge.
We define $\Ext$ as follows:
\begin{addmargin}[2em]{2em} 
    \noindent \emph{Input}: $\crs=(\pk, \crs_\Pi)$, $\td = (\sk, \td_\Pi)$, $x$, $\pi = (\ct, \pi_\Pi)$.

    \noindent \textbf{(1)} Output $\Dec(\sk, \ct)$ as $w$.
\end{addmargin}

Let a polynomial $p(\cdot)$ and an oracle-aided polynomial-size quantum circuit $\cA$ be given such that
\begin{equation*}
    \Pr_{\substack{(\crs, \td) \gets \Sim_0(1^\lambda) \\ (x, \pi) \gets \cA^{\Sim_1(\crs, \td, \cdot)}(\crs) \\ w \gets \Ext(\crs, \td, x, \pi)}}[\sV(\crs, x, \pi) = 1 \wedge x \not\in Q \wedge (x, w) \not\in \cR] \ge \frac{1}{p(\lambda)},
\end{equation*}
where $Q$ is the list of queries from $\cA$ to $\Sim_1$. Since $\sV$ accepts the output of $\cA$, then $\Pi.\sV$ must accept $(\crs_\Pi, x_\Pi, \pi_\Pi)$. Since $x \not\in Q$, then $x_\Pi$ which contains $x$ must not have been sent as a query to $\Pi.\Sim_1$. 
By the definition of $\Ext$ and the perfect correctness of the encryption scheme, $x_\Pi \not\in \cL_\Pi$. Hence, we have that
\begin{equation*}
    \Pr_{\substack{(\crs, \td) \gets \Sim_0(1^\lambda) \\ (x, \pi) \gets \cA^{\Sim_1(\crs, \td, \cdot)}(\crs) \\ w \gets \Ext(\crs, \td, x, \pi)}}[\Pi.\sV(\crs_\Pi, x_\Pi, \pi_\Pi) = 1 \wedge x_\Pi \not\in Q_\Pi \wedge x_\Pi \not\in \cL_\Pi] \ge \frac{1}{p(\lambda)},
\end{equation*}
where $Q_\Pi$ is the list of queries, originating from $\cA$, that $\Sim_1$ makes to $\Pi.\Sim_1$.
We define a reduction to the simulation soundness property of $\Pi$ as follows:
\begin{addmargin}[2em]{2em} 
    \noindent \underline{Reduction}: to simulation soundness of $\Pi$ given oracle access to $\cA$.

    \noindent \textbf{(1)} Compute $(\pk, \sk) \gets \Gen(1^\lambda)$.

    \noindent \textbf{(2)} Receive $\crs_\Pi$ from the challenger.

    \noindent \textbf{(3)} Send $\crs = (\pk, \crs_\Pi)$ to $\cA$. 
    
    \noindent \textbf{(4)} On query $x$ from $\cA$: compute $\ct = \Enc(\pk, 0; r)$ for $r$ samples uniformly at random, send $x_\Pi = (\pk, x, \ct)$ to the challenger, receives $\pi_\Pi$ from the challenger, send $\pi = (\ct, \pi_\Pi)$ to $\cA$.

    \noindent \textbf{(5)} Receive $(x, \pi=(\ct, \pi_\Pi))$ from $\cA$. Define $x_\Pi = (\pk, x, \ct)$.

    \noindent \textbf{(6)} Output $(x_\Pi, \pi_\Pi)$.
\end{addmargin}
The view of $\cA$ matches that of $\Sim_0$ and $\Sim_1$. As such, this reduction should have the same advantage at breaking the simulation soundness property of $\Pi$. We reach a contradiction, hence our protocol must be simulation extractable.
\end{proof}

\begin{corollary}[Post-Quantum Simulation-Extractable NIZK for $\NP$ in the CRS Model]
    \label{cor:simext-nizk-crs}
    Assuming the polynomial quantum hardness of LWE, there exists a simulation-extractable, adaptively multi-theorem computationally zero-knowledge argument for $\NP$ in the common reference string model (\cref{def:simext-nizk-crs}).
\end{corollary}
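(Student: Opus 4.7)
The plan is to instantiate the generic construction of \cref{fig:simext-nizk} (whose security is asserted by \cref{thm:simext-nizk-crs}) with LWE-based components. Concretely, \cref{thm:simext-nizk-crs} requires two ingredients: (i) a post-quantum non-interactive simulation-sound, adaptive multi-theorem computational zero-knowledge protocol for $\NP$ as in \cref{def:simsound-nizk-crs}, and (ii) a post-quantum perfectly correct, IND-CPA secure public-key encryption scheme as in \cref{def:enc}.

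First, I would invoke \cref{cor:simsound-nizk}, which already packages the existence of a post-quantum simulation-sound, adaptively multi-theorem computationally zero-knowledge NIZK for $\NP$ in the CRS model from the polynomial quantum hardness of LWE. This supplies ingredient (i) directly. For ingredient (ii), I would appeal to the standard fact that Regev-style public-key encryption based on LWE is perfectly correct (up to standard parameter choices ensuring decryption always succeeds) and IND-CPA secure against quantum adversaries under the polynomial quantum hardness of LWE. If one prefers \emph{perfect} correctness as required by \cref{def:enc}, one can use the standard parameter settings that make decryption errors impossible (e.g., truncating/rejecting in a negligible measure-zero event, or using known variants that admit perfect correctness from LWE).

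Finally, I would plug these two components into the construction of \cref{fig:simext-nizk} and appeal to \cref{thm:simext-nizk-crs} to conclude that the resulting scheme is a post-quantum simulation-extractable, adaptively multi-theorem computationally zero-knowledge argument for $\NP$ in the CRS model, as required by \cref{def:simext-nizk-crs}. There is no real obstacle here; the only subtle point is making sure that the instantiation of the encryption scheme satisfies the perfect correctness clause exactly as stated in \cref{def:enc}, since this is used inside the proof of \cref{thm:simext-nizk-crs} to argue that the decrypted ciphertext indeed yields a valid witness whenever the proof verifies.
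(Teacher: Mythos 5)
Your proposal is correct and follows essentially the same route as the paper, which simply cites \cref{cor:simsound-nizk} together with \cref{thm:simext-nizk-crs}. You actually make a small point explicit that the paper's one-line proof elides: \cref{thm:simext-nizk-crs} additionally needs a post-quantum perfectly correct IND-CPA PKE as in \cref{def:enc}, which you supply from LWE (e.g., Regev encryption with parameters giving perfect correctness), so your write-up is if anything slightly more complete than the paper's.
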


\begin{proof}
    This follows from \cref{cor:simsound-nizk} and \cref{thm:simext-nizk-crs}.
\end{proof}

\subsection{Unclonability Definitions}
\label{sec:unc-defs-crs}

We consider two definitions of unclonability for NIZKs. The first one, motivated by simplicity, informally guarantees that no adversary given honestly proofs for ``hard'' instances is able to output more than one accepting proof for the same instance. 

\begin{definition}[(Quantum) Hard Distribution]
    \label{def:hard-dist}
    Let an $\NP$ relation $\cR$ be given.
    $(\cX, \cW)$ is a (quantum) hard distribution over $\cR$ if the following properties hold.

    \begin{itemize}
        \item \textbf{Syntax.} 
        $(\cX, \cW)$ is indexable by a security parameter $\lambda\in \bbN$. For every choice of $\lambda \in \bbN$, the support of $(\cX_\lambda, \cW_\lambda)$ is over instance and witness pairs $(x, w)$ such that $x \in \cL$, $|x| = \lambda$, and $(x, w) \in \cR$. 

        \item \textbf{Hardness.}
        For every polynomial-sized (quantum) circuit family $\cA = \{\cA_\lambda\}_{\lambda \in \bbN}$,
        \begin{equation*}
            \Pr_{(x,w) \leftarrow (\cX_\lambda,\cW_\lambda)}[(x, \cA_\lambda(x)) \in \cR] \leq \mathsf{negl}(\lambda).
        \end{equation*}
    \end{itemize}
\end{definition}

\begin{definition} (Unclonable Security for Hard Instances).
\label{def:uncnizk-alt}
A proof  $(\mathsf{Setup},\sP,\sV)$ satisfies unclonable security for a language $\mathcal{L}$ with corresponding relation $\cR_\cL$ if
for every polynomial-sized quantum circuit family $\{C_\lambda\}_{\lambda \in \mathbb{N}}$,
and for every hard distribution $\{\mathcal{X}_\lambda,\mathcal{W}_\lambda\}_{\lambda \in \bbN}$ over $\cR_\cL$,
there exists a negligible function $\negl(\cdot)$ such that
for every $\lambda \in \bbN$,
\begin{equation*}
    \Pr_{(x,w) \leftarrow (\mathcal{X}_\lambda,\mathcal{W}_\lambda)}\Bigg[
    \sV(\mathsf{crs},x,\pi_1) = 1 \bigwedge
    \sV(\mathsf{crs},x,\pi_2) = 1
    \Bigg|
    \substack{\mathsf{crs} \leftarrow \mathsf{Setup}(1^\lambda)\\
    \pi \leftarrow \sP(\mathsf{crs},x,w)\\
    \pi_1, \pi_2 \leftarrow C_\lambda(x, \pi)
    }
    \Bigg]
    \leq \negl(\lambda).
\end{equation*}
\end{definition}

We will now strengthen this definition to consider a variant where 
from any adversary $\cA$ that on input a single proof of membership of $x \in \cL$ outputs two proofs for $x$, we can extract a valid witness $w$ for $x$ with high probability.
In fact, we can further generalize this definition to a setting where the adversary obtains an even larger number (say $k-1$) input proofs on instances $x_1, \ldots, x_{k-1}$, and outputs $k$ or more proofs. Then we require the extraction of an NP witness corresponding to any proofs that are {\em duplicated} (i.e. two or more proofs w.r.t. the same instance $x_i \in \{x_1, \ldots, x_{k-1}\}$). 
We write this definition below.

\begin{definition}[$(k-1) \text{-to-} k$-Unclonable Extractable NIZK]
\label{def:unizk-crs}
Let security parameter $\lambda \in \bbN$ and $\NP$ relation $\cR$ with corresponding language $\cL$ be given.
Let $\Pi = (\Setup, \sP, \sV)$ be given such that $\Setup, \sP$ and $\sV$ are $\poly(\lambda)$-size quantum algorithms. We have that for any $(x, w) \in \cR$, $(\crs, \td)$ is the output of $\Setup$ on input $1^\lambda$, $\sP$ receives an instance and witness pair $(x, w)$ along with $\crs$ as input and outputs $\pi$, and $\sV$ receives an instance $x$, $\crs$, and proof $\pi$ as input and outputs a value in $\zo$.

$\Pi$ is a non-interactive $(k-1) \text{-to-} k$-unclonable zero-knowledge quantum protocol for language $\cL$ if the following holds:
\begin{itemize}
    \item $\Pi$ is a quantum non-interactive zero-knowledge protocol for language $\cL$ (\cref{def:nizk-crs}).

    \item {\bf $(k-1) \text{-to-} k$-Unclonable with Extraction}: There exists an oracle-aided polynomial-size quantum circuit $\cE$ such that for every polynomial-size quantum circuit $\cA$, for every tuple of $k-1$ instance-witness pairs $(x_1, \omega_1), \ldots, (x_{k-1}, \omega_{k-1}) \in \cR$, for every instance $x$,
    if there exists a polynomial $p(\cdot)$ such that
    \begin{equation*}
        \Pr_{\substack{(\crs, \td) \gets \Setup(1^\lambda) \\ \forall \iota \in [k-1], \: \pi_\iota \gets \sP(\crs, x_\iota, w_\iota) \\ \{\widetilde{{x}_\iota},\widetilde{\pi_\iota}\}_{\iota \in [k]} \gets \cA(\crs, \{x_\iota, \pi_\iota\}_{\iota \in [k-1]})}}
        \left[
        \begin{array}{cc}
        & \exists~ \cJ \subseteq \{j:\widetilde{x}_j = x\} \text{ s.t. } |\cJ| > |\{i:x_i = x\}| \\
        & \text{ and } {\forall \iota \in \cJ}, \sV(\crs, x, \widetilde{\pi_\iota}) = 1 
        \end{array}
        \right]
        \geq \frac{1}{p(\lambda)},
    \end{equation*}
    then there is also a polynomial $q(\cdot)$ such that
    \begin{equation*}
        \Pr_{w \gets \cE^\cA(x_1, \ldots, x_{k-1}, x)}\left[(x, w) \in \cR \right] \geq \frac{1}{q(\lambda)}.
    \end{equation*}
\end{itemize}
\end{definition}

We observe in \cref{def:unizk-crs} that we can generically boost the extractor’s success probability to $1 - \negl(\lambda)$ with respect to a security parameter $\lambda$.

\begin{definition}[$(k-1) \text{-to-} k$-Unclonable Strong-Extractable NIZK]
\label{def:unizk-crs2}
Let security parameter $\lambda \in \bbN$ and $\NP$ relation $\cR$ with corresponding language $\cL$ be given.
Let $\Pi = (\Setup, \sP, \sV)$ be given such that $\Setup, \sP$ and $\sV$ are $\poly(\lambda)$-size quantum algorithms. We have that for any $(x, w) \in \cR$, $(\crs, \td)$ is the output of $\Setup$ on input $1^\lambda$, $\sP$ receives an instance and witness pair $(x, w)$ along with $\crs$ as input and outputs $\pi$, and $\sV$ receives an instance $x$, $\crs$, and proof $\pi$ as input and outputs a value in $\zo$.

$\Pi$ is a non-interactive $(k-1) \text{-to-} k$-unclonable zero-knowledge quantum protocol for language $\cL$ if the following holds:
\begin{itemize}
    \item $\Pi$ is a quantum non-interactive zero-knowledge protocol for language $\cL$ (\cref{def:nizk-crs}).

    \item {\bf $(k-1) \text{-to-} k$-Unclonable with Strong-Extraction}: There exists an oracle-aided polynomial-size quantum circuit $\cE$ such that for every polynomial-size quantum circuit $\cA$ with non-uniform quantum advice $\aux$, for every tuple of $k-1$ instance-witness pairs $(x_1, \omega_1), \ldots, (x_{k-1}, \omega_{k-1}) \in \cR$, for every instance $x$ 
    if there is a polynomial $p(\cdot)$ where 
    \begin{equation*}
        \Pr_{\substack{(\crs, \td) \gets \Setup(1^\lambda) \\ \forall \iota \in [k-1], \: \pi_\iota \gets \sP(\crs, x_\iota, w_\iota) \\ \{\widetilde{x_\iota}, \widetilde{\pi_\iota}\}_{\iota \in [k]} \gets \cA(\crs, \{x_\iota, \pi_\iota\}_{\iota \in [k-1]}, \aux)}}\left[
        \begin{array}{cc}
        & \exists~ \cJ \subseteq \{j:\widetilde{x}_j = x\} \text{ s.t. } |\cJ| > |\{i:x_i = x\}| \\
        & \text{ and } {\forall \iota \in \cJ}, \sV(\crs, x, \widetilde{\pi_\iota}) = 1 
        \end{array}
        \right]
        \geq \frac{1}{p(\lambda)},
    \end{equation*}
    then there is also a polynomial $\poly(\cdot)$ and a negligible function $\negl(\cdot)$ such that
    \begin{equation*}
        \Pr_{w \gets \cE^\cA(x_1, \ldots, x_{k-1}, x, \aux^{\tensor \poly(\lambda)})}\left[(x, w) \in \cR \right] \geq 1 - \negl(\lambda).
    \end{equation*}
\end{itemize}
\end{definition}

We describe two useful lemmas to compare the above definitions.

\begin{lemma}
    \label{lem:uncdef}
    Let $\Pi = (\Setup, \sP, \sV)$ be a $1 \text{-to-} 2$-unclonable with extraction, non-interactive zero-knowledge quantum protocol (\cref{def:unizk-crs}). Then, $\Pi$ satisfies \cref{def:uncnizk-alt}.
\end{lemma}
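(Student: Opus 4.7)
The plan is to prove \cref{lem:uncdef} by contradiction: if there is a hard distribution $(\cX, \cW)$ and an adversary $C_\lambda$ that violates \cref{def:uncnizk-alt}, then we will use the extractor guaranteed by the $1$-to-$2$ unclonable extractability of $\Pi$ to construct a polynomial-sized quantum algorithm that breaks the hardness of $(\cX, \cW)$, which is a contradiction. The cleanest way to set this up is to translate $C_\lambda$'s two-proof output directly into the $k=2$ format of \cref{def:unizk-crs} with $x_1 = x$, then combine an averaging argument over the distribution with the extraction guarantee.

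Concretely, I would proceed as follows. First, I would assume for contradiction that there is a language $\cL$, a hard distribution $(\cX, \cW)$ over $\cR_\cL$, a polynomial $p(\cdot)$, and a polynomial-sized $C_\lambda$ such that for infinitely many $\lambda$,
\begin{equation*}
\Pr_{(x,w) \gets (\cX_\lambda, \cW_\lambda),\, \crs \gets \Setup(1^\lambda),\, \pi \gets \sP(\crs,x,w),\, (\pi_1,\pi_2) \gets C_\lambda(x,\pi)}\!\bigl[\sV(\crs,x,\pi_1)=1 \,\wedge\, \sV(\crs,x,\pi_2)=1\bigr] \geq \tfrac{1}{p(\lambda)}.
\end{equation*}
Second, I would define the adversary $\cA$ for \cref{def:unizk-crs} as the circuit that on input $(\crs, x_1, \pi_1)$ runs $C_\lambda(x_1, \pi_1)$ to obtain $(\widetilde{\pi}_1, \widetilde{\pi}_2)$ and outputs $\{(x_1,\widetilde{\pi}_1),(x_1,\widetilde{\pi}_2)\}$. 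Setting $x_1 = x$ gives $|\{i : x_i = x\}| = 1$, so the witnessed clonability set $\cJ = \{1,2\}$ has size $2 > 1$, matching the event in \cref{def:unizk-crs}. Third, I would apply a Markov-style averaging step: since the overall success probability is at least $1/p(\lambda)$, at least a $1/(2p(\lambda))$ fraction (under $\cX_\lambda$, after marginalizing out $w$) of instances $x$ are \emph{good} in the sense that there exists some $w$ in the conditional support with $\Pr[C_\lambda \text{ succeeds on } (x,w)] \geq 1/(2p(\lambda))$.

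Fourth, for each good $x$, I would invoke the $1$-to-$2$ unclonable extractability (\cref{def:unizk-crs}) with input $(x_1,x) = (x,x)$ and witness $w_1 = w$. Because the adversary $\cA$ is fixed and the threshold polynomial $2p(\lambda)$ is uniform, the guaranteed polynomial $q(\cdot)$ depends only on $\cA$ and $p$; hence $\cE^{\cA}(x,x)$ outputs a witness $w'$ with $(x,w') \in \cR$ with probability at least $1/q(\lambda)$ for every good $x$, \emph{independently of which $w$ certifies goodness}. Fifth, I would package this as the hardness reduction $\cB_\lambda$: on input $x$, simulate the oracle calls to $\cA$ internally using the hard-coded code of $C_\lambda$, run $\cE^{\cA}(x,x)$, and output its result. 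Note that $\cB_\lambda$ never needs $w$ to execute, since $\cE$'s description is oracle-aided in $\cA$ and takes only instances as input; the role of $w$ is solely to establish the extraction guarantee. The overall success probability of $\cB_\lambda$ on $x \gets \cX_\lambda$ is at least $\tfrac{1}{2p(\lambda)} \cdot \tfrac{1}{q(\lambda)}$, a noticeable function, which contradicts the hardness of $(\cX, \cW)$.

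The main obstacle is the averaging step together with uniformity of $q(\lambda)$: I need to argue that the single polynomial $q$ produced by \cref{def:unizk-crs} does not depend on the particular good $(x,w)$, so that the hardness adversary $\cB_\lambda$, which observes only $x$, enjoys a common noticeable lower bound on its extraction probability. This is handled by fixing $\cA$ once from $C_\lambda$ (independently of $(x,w)$) and applying \cref{def:unizk-crs} with the uniform threshold $1/(2p(\lambda))$, so that $q$ depends only on $\cA$ and $p$. A minor secondary subtlety is correctly unpacking the event in \cref{def:unizk-crs} with $x_1 = x$ to see that it coincides with the ``two accepting proofs for $x$'' event in \cref{def:uncnizk-alt}, which is immediate from the cardinality condition $|\cJ| > |\{i : x_i = x\}| = 1$.
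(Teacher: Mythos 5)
Your proposal follows essentially the same route as the paper's proof in Appendix A: assume a breaking $C_\lambda$, use a Markov-style averaging argument to isolate a noticeable-measure set of ``good'' instances on which the cloning-success hypothesis of Definition~\ref{def:unizk-crs} holds with noticeable probability, invoke the extractor $\cE$ on those instances (noting it needs only $x$, not $w$), and conclude that the composed algorithm breaks hardness of $(\cX,\cW)$. The additional care you take in spelling out the wrapper from $C_\lambda$ to the $k=2$ adversary $\cA$ and in flagging the uniformity of $q(\cdot)$ across good instances is sound and mirrors (and is if anything slightly more explicit than) the paper's own handling of these points.
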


For a proof of \cref{lem:uncdef}, we refer to \cref{app:defs-reduct}.

\begin{lemma}
    \label{lem:uncdef2}
    Let $\Pi = (\Setup, \sP, \sV)$ be a $(k-1) \text{-to-} k$-unclonable with extraction, non-interactive zero-knowledge quantum protocol (\cref{def:unizk-crs}). Then, $\Pi$ satisfies \cref{def:unizk-crs2}.
\end{lemma}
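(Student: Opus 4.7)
The plan is to obtain the strong extractor $\cE'$ required by \cref{def:unizk-crs2} via parallel repetition of the extractor $\cE$ guaranteed by \cref{def:unizk-crs}, using the extra copies of the quantum advice $\aux$ to make successive invocations of $\cE$ independent. Since $\cR$ is an NP relation, each candidate witness output by $\cE$ can be verified in classical polynomial time, so $\cE'$ need only output the first valid witness it finds.

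In more detail, suppose $\cA$ with non-uniform quantum advice $\aux$ satisfies the hypothesis of \cref{def:unizk-crs2} with advantage $1/p(\lambda)$. Viewing the pair $(\cA, \aux)$ as a polynomial-size quantum adversary, this is exactly the hypothesis of \cref{def:unizk-crs} with the same polynomial $p$. Hence the extractor $\cE$ and some polynomial $q(\cdot)$ from \cref{def:unizk-crs} provide a single extraction attempt $\cE^{\cA(\aux)}(x_1, \ldots, x_{k-1}, x)$ that outputs a valid witness for $x$ with probability at least $1/q(\lambda)$. We then define $\cE'$ which, on input $(x_1, \ldots, x_{k-1}, x, \aux^{\otimes N})$ with $N = \lambda \cdot q(\lambda)$, invokes $\cE$ independently on $N$ oracle instances $\cA^{(1)}, \ldots, \cA^{(N)}$, where each $\cA^{(i)}$ is $\cA$ instantiated with the $i$-th copy of $\aux$ and with freshly sampled internal randomness. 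After each call $i$, $\cE'$ uses the efficient NP-verifier for $\cR$ to test whether the returned $w_i$ satisfies $(x, w_i) \in \cR$, outputting the first valid witness encountered and $\bot$ if none is found.

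The analysis is a standard independent amplification: because $\aux^{\otimes N}$ lives on $N$ disjoint tensor factors and each invocation of $\cE$ uses fresh classical randomness, the $N$ extraction attempts are mutually independent. Hence the probability that all $N$ attempts fail is at most $(1 - 1/q(\lambda))^N \leq e^{-\lambda}$, which is negligible in $\lambda$. Since $q$ is a fixed polynomial (depending on $\cA$ and $p$), the parameter $N$ is polynomial in $\lambda$, and $\cE'$ makes only polynomially many queries to $\cA$ and to the NP-verifier, meeting the efficiency requirement.

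The point that requires care is the quantum independence of the repeated attempts. Because $\cE$ is quantum and may perform measurements or apply $\cA$ and $\cA^\dagger$ in a way that disturbs the advice register, one cannot in general reuse a single copy of $\aux$ to power multiple independent runs of $\cE$; naively reusing it would correlate failures. This is precisely what the $\aux^{\otimes \poly(\lambda)}$ resource in \cref{def:unizk-crs2} is designed to overcome: since each invocation of $\cE$ operates on its own private copy of $\aux$, the repetitions factor as a product distribution and the Chernoff-type bound above applies without further quantum-information considerations.
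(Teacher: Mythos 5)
Your proposal is correct and takes essentially the same approach as the paper's proof sketch: run the extractor $\cE$ from \cref{def:unizk-crs} polynomially many times, each time with a fresh copy of $\aux$, and output a witness as soon as one verifies, so that the overall failure probability is $(1 - 1/q(\lambda))^{\poly(\lambda)}$, which is negligible. Your write-up is in fact a bit more careful than the paper's sketch in two respects worth keeping: you explicitly invoke the polynomial-time NP-verifier for $\cR$ to identify which of the $N$ candidates to output (the paper leaves this implicit), and you flag that the tensored advice copies are what make the runs genuinely independent in the quantum setting, whereas the paper's displayed chain of inequalities compresses this into a single step whose middle line reads as a typo.
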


\begin{proof}[Proof Sketch]
    Given an extractor $\Pi.\cE$ from \cref{def:unizk-crs}, we define a new extractor $\cE$. According to \cref{def:unizk-crs2}, $\cE$ receives multiple copies of the adversary's quantum advice string $\aux$. $\cE$ runs $\Pi.\cE$ on the adversary multiple times, each time using a fresh copy of $\aux$. 
    
    Formally, for every $\cA$ with $\aux$, $(x_1, w_1), \ldots, (x_{k-1}, w_{k-1}) \in \cR$, $x$, polynomial $p(\cdot)$, and polynomial $q(\cdot)$ such that
    \begin{align*}
        &\Pr_{\substack{(\crs, \td) \gets \Setup(1^\lambda) \\ \forall \iota \in [k-1], \: \pi_\iota \gets \sP(\crs, x_\iota, w_\iota) \\ \{\widetilde{x_\iota}, \widetilde{\pi_\iota}\}_{\iota \in [k]} \gets \cA(\crs, \{x_\iota, \pi_\iota\}_{\iota \in [k-1]}, \aux)}}\left[\begin{array}{cc}
        & \exists~ \cJ \subseteq \{j:\widetilde{x}_j = x\} \text{ s.t. } |\cJ| > |\{i:x_i = x\}| \\
        & \text{ and } {\forall \iota \in \cJ}, \sV(\crs, x, \widetilde{\pi_\iota}) = 1 
        \end{array}\right] \geq \frac{1}{p(\lambda)}, \text{ and}\\
        &\Pr_{w \gets \Pi.\cE^\cA(x_1, \ldots, x_{k-1}, x)}\left[(x, w) \in \cR \right] \ge \frac{1}{q(\lambda)},
    \end{align*}
    there exists a polynomial $\poly(\cdot)$ and a negligible function $\negl(\cdot)$ such that the extractor $\cE$ will succeed with probability
    \begin{align*}
        &\Pr_{w \gets \cE^\cA(x_1, \ldots, x_{k-1}, x, \aux^{\tensor \poly(\lambda)})}\left[(x, w) \in \cR \right] \\
        &\geq \left(\Pr_{w \gets \Pi.\cE^{\cA(\cdot, \cdot, \aux)}(x_1, \ldots, x_{k-1}, x)}\left[(x, w) \in \cR \right] \right)^{\poly(\lambda)} \\
        &\ge 1 - \left(1 - \frac{1}{q(\lambda)}\right)^{\poly(\lambda)} \ge 1 - \negl(\lambda).
    \end{align*}
    Thus, $\cE$ satisfies \cref{def:unizk-crs2}.
\end{proof}

From the above lemmas, we conclude that \cref{def:unizk-crs} is the strongest definition. In the following sections, we construct a protocol that satisfies \cref{def:unizk-crs}.

\subsection{Unclonable NIZK Implies Public-Key Quantum Money Mini-Scheme}

\begin{figure}[!ht]
\begin{framed}
\centering
\begin{minipage}{1.0\textwidth}
\begin{center}
    \underline{Public-Key Quantum Money Mini-Scheme}
\end{center}

\vspace{2mm}

Let $(\cX, \cW)$ be a hard distribution over a language $\cL \in \NP$. Let $\Pi = (\Setup, \sP, \sV)$ be an unclonable non-interactive zero-knowledge protocol for $\cL$.

\vspace{2mm}
\noindent {\underline{\textsc{Gen}}$(1^\lambda)$}:
Sample a hard instance-witness pair $(x, w) \gets (\cX, \cY)$, a common reference string $(\crs, \td) \gets \Setup(1^\lambda, x)$, and a proof $\pi \gets \sP(\crs, x, w)$. Output $(\rho_\$ = \pi, s = (\crs, x))$.

\vspace{1mm}
\noindent {\underline{\textsc{Verify}}$(\rho_\$, s)$}:
Parse $\rho_\$ = \pi$ and $s = (\crs, x)$. Output $\sV(\crs, x, \pi)$.

\end{minipage}
\end{framed}
\caption{Public-Key Quantum Money Mini-Scheme from an Unclonable Non-Interactive Quantum Protocol}
\label{fig:qmoney-from-unizk}
\end{figure}

\begin{theorem}
\label{thm:unizk-implies-qmoney}
Let $(\cX, \cW)$ be a hard distribution over a language $\cL \in \NP$. Let $\Pi = (\Setup, \sP, \sV)$ satisfy \cref{def:uncnizk-alt}.
Then $(\Setup, \sP, \sV)$ implies a public-key quantum money mini-scheme (\cref{def:qmoney}) as described in \cref{fig:qmoney-from-unizk}.
\end{theorem}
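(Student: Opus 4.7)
The plan is to verify each of the three properties of a public-key quantum money mini-scheme (\cref{def:qmoney}) for the construction in \cref{fig:qmoney-from-unizk}, reducing the non-trivial property to unclonable NIZK security (\cref{def:uncnizk-alt}).

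First, I would dispatch perfect correctness directly from the perfect completeness of $\Pi$. For any $(x,w) \in \cR_\cL$ in the support of $(\cX_\lambda, \cW_\lambda)$, and any $(\crs, \td) \gets \Setup(1^\lambda)$, the proof $\pi \gets \sP(\crs, x, w)$ satisfies $\sV(\crs, x, \pi) = 1$. Since $\Ver(\rho_\$, s)$ is defined to output $\sV(\crs, x, \pi)$ after parsing $\rho_\$ = \pi$ and $s = (\crs, x)$, correctness follows. I would dispatch unpredictable serial numbers using the remark after \cref{def:qmoney}, which states that this property follows without loss of generality from unforgeability.

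The main step is unforgeability. I would proceed by contradiction: suppose a polynomial-size quantum adversary $\cA$ and a polynomial $p(\cdot)$ satisfy
\begin{equation*}
    \Pr_{\substack{(\rho_\$, s) \gets \Gen(1^\lambda) \\ (\rho_{\$,0}, s_0, \rho_{\$,1}, s_1) \gets \cA(\rho_\$, s)}}\!\!\bigl[s_0 = s_1 = s \land \Ver(\rho_{\$,0}, s_0) = 1 \land \Ver(\rho_{\$,1}, s_1) = 1\bigr] \geq \frac{1}{p(\lambda)}.
\end{equation*}
I would construct a reduction $\cB = \{\cB_\lambda\}$ breaking unclonable security of $\Pi$ with respect to the hard distribution $(\cX, \cW)$ as follows. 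In the unclonable security experiment, the challenger samples $(x, w) \gets (\cX_\lambda, \cW_\lambda)$, $(\crs, \td) \gets \Setup(1^\lambda)$, $\pi \gets \sP(\crs, x, w)$, and invokes $\cB_\lambda$ on the challenge. The reduction $\cB_\lambda$ forms $\rho_\$ \coloneqq \pi$ and $s \coloneqq (\crs, x)$, runs $\cA(\rho_\$, s)$, receives $(\rho_{\$,0}, s_0, \rho_{\$,1}, s_1)$, and outputs $\pi_1 \coloneqq \rho_{\$,0}$ and $\pi_2 \coloneqq \rho_{\$,1}$. Whenever $\cA$ wins, we have $s_0 = s_1 = (\crs, x)$ and $\Ver(\rho_{\$, b}, s_b) = 1$ for $b \in \{0,1\}$; by the syntax of $\Ver$ this is precisely $\sV(\crs, x, \pi_b) = 1$, so $\cB$ wins the unclonability game with the same probability $\geq 1/p(\lambda)$, contradicting \cref{def:uncnizk-alt}.

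The main subtlety I anticipate is a definitional mismatch: \cref{def:uncnizk-alt} passes only $(x, \pi)$ to the unclonability adversary, whereas the quantum-money adversary $\cA$ needs the serial number $s = (\crs, x)$, which exposes $\crs$. The cleanest resolution is to observe that since $\crs$ is public and sampled honestly by the challenger, it can be treated as part of the adversary's view in the unclonability experiment without loss of generality (as in the stronger \cref{def:unizk-crs}, where $\crs$ is explicitly given to the adversary); alternatively, one can absorb the sampling of $\crs$ into the reduction's local computation, provided the experiment exposes $\crs$ to $\cB$. Once this is pinned down, the reduction is immediate and the remaining steps are purely syntactic.
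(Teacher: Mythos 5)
Your proof is correct and takes essentially the same route as the paper: perfect correctness from completeness, unpredictable serial numbers from the remark (which the paper's proof of this theorem omits entirely), and a direct reduction from a quantum-money forger to a cloner of $\Pi$, packaging $\rho_\$ = \pi$ and $s = (\crs, x)$.

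The one place you go beyond the paper is the observation that \cref{def:uncnizk-alt} hands the unclonability adversary $C_\lambda$ only $(x, \pi)$, not $\crs$, whereas the reduction needs $\crs$ to form the serial number $s$. The paper's own proof silently assumes the challenger forwards $(\crs, x, \pi)$ to the reduction, which does not literally match the syntax of \cref{def:uncnizk-alt}. Your proposed fix --- that $\crs$ is public and should be regarded as part of the view, as in the stronger \cref{def:unizk-crs} --- is the right one and is a legitimate (if small) cleanup the paper should have made. Your alternative resolution (sampling $\crs$ locally in the reduction) would not work on its own, since the honest $\pi$ is tied to the challenger's $\crs$, but you correctly flag that it only goes through if the experiment exposes $\crs$, so there is no error in your reasoning.
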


\begin{proof}
    \noindent \textbf{Perfect Correctness.}
    This follows directly from the perfect completeness of $\Pi$.

    \noindent \textbf{Unforgeability.}
    Let $p(\cdot)$ be a polynomial and $\cA$ be a quantum polynomial-time adversary such that for an infinite number of $\lambda \in \bbN^+$,
    \begin{equation*}
        \Pr_{\substack{(\rho_\$, s) \gets \Gen(1^\lambda) \\ (\rho_{\$, 0}, s_0, \rho_{\$, 1}, s_1) \gets \cA(\rho_\$, s)}}[s_0 = s_1 = s \: \land \: \Ver(\rho_{\$, 0}, s_0) = 1 \: \land \: \Ver(\rho_{\$, 1}, s_1) = 1] \ge \frac{1}{p(\lambda)}.
    \end{equation*}

    We construct a reduction that breaks the uncloneability definition.
    The challenger samples a hard instance-witness pair $(x, w) \gets (\cX, \cY)$, a common reference string with a trapdoor $(\crs, \td) \gets \Setup(1^\lambda, x)$, and a proof $\pi \gets \sP(\crs, x, w)$. The challenger then forwards $(\crs, x, \pi)$ to the reduction. The reduction then sets $\rho_\$ = \pi$ and $s = (\crs, x)$. The reduction sends $(\rho_\$, s)$ to the adversary $\cA$ who returns back $(\rho_{\$, 0}, s_0, \rho_{\$, 1}, s_1)$. The reduction then parses and sets $\pi_i = \rho_{\$, i}$ for $i \in \{0, 1\}$. The reduction then sends $\pi_0$ and $\pi_1$ back to the challenger.

    When the serial numbers are the same, $s = s_0 = s_1$, we have that the common reference string and instance will be the same for all the proofs $\pi, \pi_0, \pi_1$. The quantum money state can be parsed as the proof as shown in the construction. When the verification algorithm of the quantum money algorithm accepts both quantum money states $\rho_{\$, 0}$ and $\rho_{\$, 1}$ with respect to $s$, we know that that $\sV$ would accept both proofs $\pi_0$ and $\pi_1$ with respect to $(\crs, x)$. As such, we will have that the advantage that $\cA$ has at breaking the unforgeability of our quantum money scheme directly translates to the advantage of the reduction at breaking the uncloneability of $\Pi$.
\end{proof}

\subsection{Construction and Analysis of Unclonable-Extractable NIZK in CRS Model}
\label{sec:unizk-crs}

\begin{figure}[!ht]
\begin{framed}
\centering
\begin{minipage}{1.0\textwidth}
\begin{center}
    \underline{Unclonable Non-Interactive ZK for $\cL \in \NP$}
\end{center}

\vspace{2mm}

Let $\Pi = (\Setup, \sP, \sV)$ be a non-interactive simulation-extractable, adaptively multi-theorem computationally zero-knowledge protocol for $\NP$, $\Com$ be a post-quantum perfectly binding, computationally hiding commitment scheme, and $(\NoteGen, \Ver)$ be a public-key quantum money scheme. Let $\cR$ be the relation with respect to $\cL \in \NP$.

\vspace{2mm}
\noindent {\underline{\textsc{Setup}}$(1^\lambda)$}:
Sample the common reference string $(\crs_\Pi, \td_\Pi) \gets \Pi.\Setup(1^\lambda)$, and $s^*, r^*$ uniformly at random.
Define $c = \Com(s^*; r^*)$ and output $(\crs = (\crs_\Pi, c), \td = \td_\Pi)$.

\vspace{1mm}
\noindent {\underline{\textsc{Prove}}$(\crs, x, w)$}:
\begin{itemize}
    \item Compute a quantum note and associated serial number $(\rho_\$, s) \leftarrow \NoteGen$.
    \item Let $x_\Pi = (c, x, s)$ be an instance of the following language $\cL_\Pi$:
    \begin{equation*}
    \hspace{-7mm}
        \{ (c, x, s) \st \exists z \st (x, z) \in \cR \:\vee\: c = \Com(s;z) \}.
    \end{equation*}
    \item Compute proof $\pi_\Pi \gets \Pi.\sP(\crs_\Pi, x_\Pi, w)$ for language $\cL_\Pi$.
    \item Output $\pi = (\rho_\$,s,\pi_\Pi)$.
\end{itemize}
\noindent {\underline{\textsc{Verify}}$(\crs, x, \pi)$}:
\begin{itemize}
    \item Check that $\Ver(\rho_\$, s)$ outputs $1$ and that $\Pi.\sV(\crs_\Pi, x_\Pi, \pi_\Pi)$ outputs $1$.
    \item If both checks pass, output $1$. Otherwise, output $0$.
\end{itemize}

\end{minipage}
\end{framed}
\caption{Unclonable Non-Interactive Quantum Protocol for $\cL \in \NP$}
\label{fig:unizk-crs}
\end{figure}

\begin{theorem}
\label{thm:main-crs}
Let $k(\cdot)$ be a polynomial.
Let $\NP$ relation $\cR$ with corresponding language $\cL$ be given.

Let $(\NoteGen, \Ver)$ be a public-key quantum money mini-scheme (\cref{def:qmoney}) 
and $\Com$ be a post-quantum commitment scheme (\cref{def:com}).
Let $\Pi = (\Setup, \sP, \sV)$ be a non-interactive post-quantum simulation-extractable, adaptive multi-theorem computational zero-knowledge protocol for $\NP$ (\cref{def:simext-nizk-crs}).

$(\Setup, \sP, \sV)$ as defined in \cref{fig:unizk-crs} will be a non-interactive quantum simulation-extractable, adaptive multi-theorem computationally zero-knowledge, and $(k-1)$-to-$k$-unclonable argument with extraction protocol for $\cL$ in the common reference string model (\cref{def:unizk-crs}).
\end{theorem}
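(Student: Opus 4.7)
My plan is to verify the four properties in order---perfect completeness, adaptive multi-theorem computational zero-knowledge, simulation-extractability, and $(k-1)$-to-$k$-unclonable extraction---where the zero-knowledge and simulation-extractability arguments will serve as building blocks for the main unclonability argument. Perfect completeness will be immediate: for an honest proof, $\Ver(\rho_\$, s) = 1$ by correctness of quantum money, and $\Pi.\sV(\crs_\Pi, x_\Pi, \pi_\Pi) = 1$ by $\Pi$'s completeness on $x_\Pi = (c, x, s) \in \cL_\Pi$ using $w$ as witness for the left branch of the OR-language. For adaptive multi-theorem zero-knowledge I will define $\Sim_0$ to invoke $\Pi.\Sim_0$ and honestly sample $c = \Com(s^*;r^*)$, and $\Sim_1$ on query $x$ to sample a fresh banknote $(\rho_\$, s) \gets \NoteGen$ and call $\Pi.\Sim_1$ on $x_\Pi = (c, x, s)$; since $\Pi.\Sim_1$ needs no witness, indistinguishability from real proofs will reduce directly to the adaptive multi-theorem zero-knowledge of $\Pi$ via a straightforward hybrid that internally handles the commitment and the banknotes.

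For simulation-extractability, the extractor $\Ext$ will simply run $\Pi.\Ext(\crs_\Pi, \td_\Pi, x_\Pi, \pi_\Pi)$ and output its result. If the adversary produces an accepting proof $\pi = (\rho_\$, s, \pi_\Pi)$ on some $x \notin Q$, then by inspection of $\Sim_1$ the induced $\Pi$-instance $x_\Pi = (c, x, s)$ lies outside the corresponding $\Pi$-query list $Q_\Pi$, so $\Pi.\Ext$ will return a witness $z$ for $\cL_\Pi$---either (a) a valid $\cR$-witness for $x$, or (b) an opening of $c$ to $s$. To rule out case (b), I will pass through a hybrid replacing $c = \Com(s^*;r^*)$ with $c = \Com(0;r^*)$: by computational hiding this will shift the probability of case (b) by at most a negligible amount, and by perfect binding it will force $s = 0$ in the hybrid, an event of negligible probability by the unpredictable-serial-numbers property of quantum money.

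The $(k-1)$-to-$k$-unclonable extraction property is the main challenge. The extractor $\cE$ will use $\Sim_0, \Sim_1$ to simulate the CRS and the $k-1$ input proofs without the witnesses $w_i$ (which it does not receive as input), run $\cA$ on the simulated transcript to obtain $k$ output proofs, and run $\Pi.\Ext$ on the $\Pi$-component of each output proof claiming instance $x$, outputting any extracted candidate satisfying $\cR$. The analysis will combine three ingredients: (i) the ZK hybrid above, which preserves $\cA$'s success probability up to negligible loss when the input proofs are simulated; (ii) a hybrid over each $s_i$ invoking quantum money unforgeability---with distinctness of the $s_i$ guaranteed by the unpredictable-serial-numbers property---to conclude that $|\{j \in \cJ : \widetilde{s}_j = s_i\}| \leq 1$ for each $i$, so that at most $m := |\{i : x_i = x\}|$ proofs in $\cJ$ can share a serial with an honest input for $x$, forcing at least $|\cJ| - m \geq 1$ output proofs to place $(c, x, \widetilde{s}_j)$ outside $Q_\Pi$; and (iii) the same hiding hybrid as in simulation-extractability, showing that the candidate $\Pi.\Ext$ produces from any such output is a valid $\cR$-witness except with negligible probability. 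The hard part will be step (ii): the reduction to quantum money must embed the challenger's banknote into the $i$-th input slot while simulating the rest of the transcript, and in particular must produce the accompanying $\Pi$-proof via $\Pi.\Sim_1$ without knowing the witness for that slot. The prior ZK hybrid makes this possible, but simultaneously threading the zero-knowledge switch, the hybrid across all $k-1$ input serials, and the commitment-hiding hybrid will demand careful book-keeping to avoid composition losses larger than $1/\poly(\lambda)$.
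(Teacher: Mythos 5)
Your plan for completeness and zero-knowledge matches the paper, and your overall skeleton for simulation-extractability and unclonable extractability is the same two-case decomposition the paper uses. You also correct a small imprecision in the paper's pigeonhole by noting that the serial numbers that matter for $Q_\Pi$-membership are exactly those $s_i$ with $x_i=x$ (of which there are $m = |\{i:x_i=x\}|$), so once quantum-money unforgeability forces $|\{j\in\cJ : \widetilde{s}_j = s_i\}|\le 1$ for each $i$, at least $|\cJ|-m\ge 1$ of your output proofs induce a $\Pi$-instance $(c,x,\widetilde{s}_j)\notin Q_\Pi$. Your extractor that tries all $k$ slots rather than a uniformly random one also removes an unnecessary $1/k$ loss. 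Reordering so that the ZK hybrid comes before the money reduction rather than after is fine; the paper instead handles the money case in the real world with hardwired witnesses and defers the ZK hybrid to the "fresh serial" case, but both orders compose correctly.

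However, there is a genuine gap in your argument ruling out the commitment-opening case, which propagates to your step (iii). You propose to switch $c = \Com(s^*;r^*)$ to $c = \Com(0;r^*)$, invoke hiding to preserve the event probability, and then argue that in the hybrid "$s=0$" has negligible probability \emph{by the unpredictable-serial-numbers property}. That property (\cref{def:qmoney}) only bounds the collision probability of two \emph{honestly generated} serial numbers; it says nothing about whether an adversary can output a state $\rho_\$$ with $\Ver(\rho_\$, 0)=1$. Nothing in the mini-scheme definition prevents the verifier from accepting adversarially-constructed banknotes carrying the fixed serial $0$, so you cannot conclude that the hybrid probability is negligible. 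The paper's argument is different and correct: it does not change the committed value at all. Instead, perfect binding forces $s$ to equal the committed string $s^*$, so the reduction samples two random strings $s_0, s_1$, submits them to the hiding challenger, receives $c = \Com(s_b)$, runs the experiment, and outputs $b$ if the adversary's extracted opening targets serial $s_b$. Recovering the committed value from a hiding commitment is directly a break of hiding; no property of the quantum money is used here. If you replace your $\Com(0)$-plus-unpredictable-serials argument with this direct hiding reduction in both your simulation-extractability proof and your step (iii), the rest of your proposal goes through.
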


\begin{proof}
\noindent \textbf{Perfect Completeness.}
Completeness follows from perfect correctness of the public key quantum money scheme, and perfect completeness of $\Pi$.

\noindent \textbf{Adaptive Multi-Theorem Computational Zero-Knowledge.}
Let $\Pi.\Sim = (\Pi.\Sim_0, \Pi.\Sim_1)$ be the adaptive multi-theorem computationally zero-knowledge simulator of $\Pi$. We define $\Sim_0$ with oracle access to $\Pi.\Sim_0$ as follows:
\begin{addmargin}[2em]{2em} 
    \noindent \emph{Input}: $1^\lambda$.

    \noindent \textbf{(1)} Send $1^\lambda$ to $\Pi.\Sim_0$. Receive $(\crs_\Pi, \td_\Pi)$ from $\Pi.\Sim_0$.

    \noindent \textbf{(2)} Sample $s^*, r^*$ uniformly at random. Define $c = \Com(s^*; r^*)$.

    \noindent \textbf{(3)} Output $\crs = (\crs_\Pi, c)$ and $\td = \td_\Pi$.
\end{addmargin}
We define $\Sim_1$ with oracle access to $\Pi.\Sim_1$ as follows:
\begin{addmargin}[2em]{2em} 
    \noindent \emph{Input}: $\crs=(\crs_\Pi, c)$, $\td = \td_\Pi$, $x$.

    \noindent \textbf{(1)} Sample $(\rho_\$, s) \gets \NoteGen(1^\lambda)$.

    \noindent \textbf{(2)} Define $x_\Pi = (c, x, s)$. Send $(\crs_\Pi, \td_\Pi, x_\Pi)$ to $\Pi.\Sim_1$. Receive $\pi_\Pi$ from $\Pi.\Sim_1$.

    \noindent \textbf{(3)} Output $\pi = (\rho_\$, s, \pi_\Pi)$.
\end{addmargin}

Let a polynomial $p(\cdot)$ and an oracle-aided polynomial-size quantum circuit $\cA$ be given such that
\begin{equation*}
    \left\vert \Pr_{\substack{(\crs, \td) \gets \Setup(1^\lambda)}}[\cA^{\sP(\crs, \cdot, \cdot)}(\crs) = 1] - \Pr_{\substack{(\crs, \td) \gets \Sim_0(1^\lambda)}}[\cA^{\Sim_1(\crs, \td, \cdot)}(\crs) = 1]\right\vert \ge \frac{1}{p(\lambda)}.
\end{equation*}
We define a reduction to the multi-theorem zero-knowledge property of $\Pi$ as follows:
\begin{addmargin}[2em]{2em} 
    \noindent \underline{Reduction}: to zero-knowledge of $\Pi$ given oracle access to $\cA$.

    \noindent \textbf{(1)} Receive (real or simulated) $\crs_\Pi$ from the challenger.

    \noindent \textbf{(2)} Sample $s^*, r^*$ uniformly at random. Define $c = \Com(s^*; r^*)$ and $\crs = (\crs_\Pi, c)$.

    \noindent \textbf{(3)} Send $\crs$ to $\cA$. 
    
    \noindent \textbf{(4)} On query $(x, w)$ from $\cA$: sample $(\rho_\$, s) \gets \NoteGen(1^\lambda)$, define $x_\Pi = (c, x, s)$ and $w_\Pi = w$, send $(x_\Pi, w_\Pi)$ to the challenger, receive (real or simulated) $\pi_\Pi$ from the challenger, define $\pi = (\rho_\$, s, \pi_\Pi)$, send $\pi$ to $\cA$.

    \noindent \textbf{(5)} Output the result of $\cA$.
\end{addmargin}
The view of $\cA$ matches that of our protocol in \cref{fig:unizk-crs} or $\Sim_0$ and $\Sim_1$. As such, this reduction should have the same advantage at breaking the adaptive multi-theorem computational zero-knowledge property of $\Pi$. We reach a contradiction, hence our protocol must be multi-theorem zero-knowledge.\\

\noindent \textbf{Simulation-Extractability.}
Let $\Pi.\Sim = (\Pi.\Sim_0, \Pi.\Sim_1)$ be the adaptive multi-theorem computationally zero-knowledge simulator of $\Pi$.
Let $\Pi.\Ext$ be the simulation-extraction extractor of $\Pi$ with respect to $\Pi.\Sim$.
Let $\Sim = (\Sim_0, \Sim_1)$ be the simulator, with oracle access to $\Pi.\Sim$, as defined in the proof that \cref{fig:unizk-crs} is adaptive multi-theorem computational zero-knowledge.
We define $\Ext$ with oracle access to $\Pi.\Ext$ as follows:
\begin{addmargin}[2em]{2em} 
    \noindent \emph{Input}: $\crs=(\crs_\Pi, c)$, $\td = \td_\Pi$, $x$, $\pi = (\rho_\$, s, \pi_\Pi)$.

    \noindent \textbf{(1)} Define $x_\Pi = (c, x, s)$. Send $(\crs_\Pi, \td_\Pi, x_\Pi, \pi_\Pi)$ to $\Pi.\Ext$. Receive $w_\Pi$ from $\Pi.\Ext$.

    \noindent \textbf{(2)} Output $w_\Pi$ as $w$.
\end{addmargin}

Let a polynomial $p(\cdot)$ and an oracle-aided polynomial-size quantum circuit $\cA$ be given such that
\begin{equation*}
    \Pr_{\substack{(\crs, \td) \gets \Sim_0(1^\lambda) \\ (x, \pi) \gets \cA^{\Sim_1(\crs, \td, \cdot)}(\crs) \\ w \gets \Ext(\crs, \td, x, \pi)}}[\sV(\crs, x, \pi) = 1 \wedge x \not\in Q \wedge (x, w) \not\in \cR] \ge \frac{1}{p(\lambda)},
\end{equation*}
where $Q$ is the list of queries from $\cA$ to $\Sim_1$.
Since $\Sim_1$ forwards oracle queries to $\Pi.\Sim_1$ which contain any query it receives from $\cA$, we know that $x_\Pi \not\in Q_\Pi$ where $Q_\Pi$ is the list of queries from $\Sim_1$ to $\Pi.\Sim_1$. Furthermore, since $\sV$ accepts the output $\pi$ from $\cA$, then $\Pi.\sV$ must accept the proof $\pi_\Pi$. As such, we have that
\begin{align}
    \label{eqn:simext}
    \Pr_{\substack{(\crs, \td) \gets \Sim_0(1^\lambda) \\ (x, \pi) \gets \cA^{\Sim_1(\crs, \td, \cdot)}(\crs) \\ w \gets \Ext(\crs, \td, x, \pi)}}[\Pi.\sV(\crs_\Pi, x_\Pi, \pi_\Pi) = 1 \wedge x_\Pi \not\in Q_\Pi \wedge (x, w) \not\in \cR] \ge \frac{1}{p(\lambda)}.
\end{align}
However, we make the following claim which is in direct contradiction with \cref{eqn:simext}.

\begin{claim}
    \label{claim:simext}
    Let $\Ext$ be as defined earlier, in the current proof of simulation-extractability.
    There exists a negligible function $\negl(\cdot)$ such that for every polynomial-size quantum circuit $\cB$,
    \begin{equation*}
        \Pr_{\substack{(\crs, \td) \gets \Sim_0(1^\lambda) \\ (x, \pi) \gets \cB^{\Sim_1(\crs, \td, \cdot)}(\crs) \\ w \gets \Ext(\crs, \td, x, \pi)}}[\Pi.\sV(\crs_\Pi, x_\Pi, \pi_\Pi) = 1 \wedge x_\Pi \not\in Q_\Pi \wedge (x, w) \not\in \cR] \le \negl(\lambda)
    \end{equation*}
    where $Q_\Pi$ is the list of queries forwarded by $\Sim_1$ to $\Pi.\Sim_1$.
\end{claim}

\begin{proof}[Proof of \cref{claim:simext}]
We proceed by contradiction.
Let a polynomial $p(\cdot)$ and an oracle-aided polynomial-size quantum circuit $\cB$ be given such that
\begin{align}
    \label{eqn:simext-B}
    \Pr_{\substack{(\crs, \td) \gets \Sim_0(1^\lambda) \\ (x, \pi) \gets \cB^{\Sim_1(\crs, \td, \cdot)}(\crs) \\ w \gets \Ext(\crs, \td, x, \pi)}}[\Pi.\sV(\crs_\Pi, x_\Pi, \pi_\Pi) = 1 \wedge x_\Pi \not\in Q_\Pi \wedge (x, w) \not\in \cR] \ge \frac{1}{p(\lambda)}
\end{align}
where $Q_\Pi$ is the list of queries forwarded by $\Sim_1$ to $\Pi.\Sim_1$.
Given \cref{eqn:simext-B}, we may be in one of the two following cases: either the extractor $\Pi.\Ext$ extracts $w_\Pi$ from $\cB$ such that $(x_\Pi, w_\Pi) \not\in \cR_\Pi$ (for an infinite set of $\lambda$), or the extractor $\Pi.\Ext$ extracts $w_\Pi$ from $\cB$ such that $(x_\Pi, w_\Pi) \in \cR_\Pi$ (for an infinite set of $\lambda$).
We consider that either of these two scenarios occur with at least $1/(2p(\lambda))$ probability and show that each reaches a contradiction.

\noindent \underline{Scenario One}

Say that (for an infinite set of $\lambda$) the extractor $\Pi.\Ext$ extracts $w_\Pi$ from $\cB$ such that $(x_\Pi, w_\Pi) \not\in \cR_\Pi$ with at least $1/(2p(\lambda))$ probability. Symbolically,
\begin{align}
    \label{eqn:simext1}
    \Pr_{\substack{(\crs, \td) \gets \Sim_0(1^\lambda) \\ (x, \pi) \gets \cB^{\Sim_1(\crs, \td, \cdot)}(\crs) \\ w \gets \Ext(\crs, \td, x, \pi)}}[\Pi.\sV(\crs_\Pi, x_\Pi, \pi_\Pi) = 1 \wedge x_\Pi \not\in Q_\Pi \wedge (x_\Pi, w_\Pi) \not\in \cR_\Pi] \ge \frac{1}{2p(\lambda)}.
\end{align}
By using the advantage of $\cB$ in this game, we can show a reduction that breaks the simulation-extractability of $\Pi$. We will now outline this reduction.
\begin{addmargin}[2em]{2em} 
    \noindent \underline{Reduction}: to simulation-extractability of $\Pi$ given oracle access to $\cB$.

    \noindent \textbf{(1)} Receive $\crs_\Pi$ from the challenger.

    \noindent \textbf{(2)} Sample $s^*, r^*$ uniformly at random. Define $c = \Com(s^*; r^*)$.

    \noindent \textbf{(3)} Define $\crs = (\crs_\Pi, c)$ and $\td = \td_\Pi$. Send $\crs$ to $\cB$.

    \noindent \textbf{(4)} On query $x$ from $\cB$: sample $(\rho_\$, s) \gets \NoteGen(1^\lambda)$, define $x_\Pi = (c, x, s)$, send $x_\Pi$ to the challenger, receive $\pi_\Pi$ from the challenger, define $\pi = (\rho_\$, s, \pi_\Pi)$, and send $\pi$ to $\cB$.

    \noindent \textbf{(5)} Receive $(x, \pi = (\rho_\$, s, \pi_\Pi))$ from $\cB$. Define $x_\Pi = (c, x, s)$.

    \noindent \textbf{(6)} Output $(x_\Pi, \pi_\Pi)$.
\end{addmargin}
Given the event in \cref{eqn:simext1} holds, then the reduction will return an accepting proof $\pi_\Pi$ for an instance $x_\Pi$ which it has not previously queried on and, yet, the extraction $\Pi.\Ext$ will fail.
With advantage $1/(2p(\lambda))$, the reduction will succeed at breaking simulation-extractability of $\Pi$, thus reaching a contradiction.

\noindent \underline{Scenario Two}

Alternatively, say that (for an infinite set of $\lambda$) the extractor $\Pi.\Ext$ extracts $w_\Pi$ from $\cB$ such that $(x_\Pi, w_\Pi) \in \cR_\Pi$ with at least $1/(2p(\lambda))$ probability. In summary, we have that
\begin{align}
    \label{eqn:simext2}
    \Pr_{\substack{(\crs, \td) \gets \Sim_0(1^\lambda) \\ (x, \pi) \gets \cB^{\Sim_1(\crs, \td, \cdot)}(\crs) \\ w \gets \Ext(\crs, \td, x, \pi)}}[\Pi.\sV(\crs_\Pi, x_\Pi, \pi_\Pi) = 1 \wedge x_\Pi \not\in Q_\Pi \wedge (x, w) \not\in \cR \wedge (x_\Pi, w_\Pi) \in \cR_\Pi] \ge \frac{1}{2p(\lambda)}.
\end{align}
Since $\Ext$ outputs $w = w_\Pi$, by the definition of $\cL_\Pi$ and the perfect binding of $\Com$, we must have that $\cB$ has found an opening to the commitment $c$ in the crs, that is that $s = s^*$ and $w_\Pi = r^*$. We can use $\cB$ to break the hiding of the commitment. We will now outline this reduction.
\begin{addmargin}[2em]{2em} 
    \noindent \underline{Reduction}: to hiding of $\Com$ given oracle access to $\cB$.

    \noindent \textbf{(1)} Compute $(\crs_\Pi, \td_\Pi) \gets \Pi.\Sim_0(1^\lambda)$ from the challenger.

    \noindent \textbf{(2)} Sample $s_0, s_1$ uniformly at random. Send $(s_0, s_1)$ to the challenger. Receive $c$.

    \noindent \textbf{(3)} Define $\crs = (\crs_\Pi, c)$ and $\td = \td_\Pi$. Send $\crs$ to $\cB$.

    \noindent \textbf{(4)} On query $x$ from $\cB$: compute $\pi \gets \Sim_1(\crs, \td, x)$, and send $\pi$ to $\cB$.

    \noindent \textbf{(5)} Receive $(x, \pi = (\rho_\$, s, \pi_\Pi))$ from $\cB$.

    \noindent \textbf{(6)} Compute $w \gets \Ext(\crs, \td, x, \pi)$.

    \noindent \textbf{(7)} If $s = s_b$ for $b \in \zo$, then output $b$. Else, output $s_b$ for $b$ chosen uniformly at random.
\end{addmargin}
Given the event in \cref{eqn:simext1} holds, then the reduction will, with advantage $1/q(\lambda)$ for some polynomial $q(\cdot)$, succeed at breaking the hiding of $\Com$, thus reaching a contradiction.  
\end{proof}

Since \cref{eqn:simext} directly contradicts \cref{claim:simext} which we have proven, then we have reached a contradiction. Therefore, the protocol must be simulation extractable.\\

\noindent \textbf{Unclonable Extractability.}
Let $\Pi.\Sim = (\Pi.\Sim_0, \Pi.\Sim_1)$ be the adaptive multi-theorem computationally zero-knowledge simulator of $\Pi$.
Let $\Pi.\Ext$ be the simulation-extraction extractor of $\Pi$ with respect to $\Pi.\Sim$.
Let $\Sim = (\Sim_0, \Sim_1)$ be the simulator, with oracle access to $\Pi.\Sim$, as defined in the proof that \cref{fig:unizk-crs} is adaptive multi-theorem computational zero-knowledge.
Let $\Ext$ be the extractor, based on $\Sim$, as defined in the proof that \cref{fig:unizk-crs} is simulation-extractable.
We define $\cE$ with oracle access to $\Sim$, $\Ext$, and some $\cA$ as follows:
\begin{addmargin}[2em]{2em} 
    \noindent \emph{Hardwired}: $x_1, \ldots, x_{k-1}$, $x$
    
    \noindent \textbf{(1)} Send $1^\lambda$ to $\Sim_0$. Receive $(\crs, \td)$ from $\Sim_0$.

    \noindent \textbf{(2)} For $\iota \in [k-1]$: send $(\crs, \td, x_\iota)$ to $\Sim_1$, and receive $\pi_\iota$ from $\Sim_1$.

    \noindent \textbf{(3)} Send $(\crs, \{x_\iota, \pi_\iota\}_{\iota \in [k-1]})$ to $\cA$. Receive $\{\widetilde{x_\iota}, \widetilde{\pi_\iota}\}_{\iota \in [k]}$ from $\cA$. 
    
    \noindent \textbf{(4)} 
    Define $j'$ uniformly at random from $[k]$.

    \noindent \textbf{(5)} Output $\Ext(\crs, \td, x, \widetilde{\pi_{j'}})$ as $w$.
\end{addmargin}

Let $\cA$, $(x_1, w_1), \ldots, (x_{k-1}, w_{k-1}) \in \cR$, $x$, polynomial $p(\cdot)$, and negligible function $\negl(\cdot)$ be given such that $\cA$ outputs more accepting proofs for $x$ than $\cA$ received, and yet the extractor $\cE$ is unable to extract a valid witness for $x$ from $\cA$. Restated more formally, that is that
\begin{align}
    &\Pr_{\substack{(\crs, \td) \gets \Setup(1^\lambda) \\ \forall \iota \in [k-1], \: \pi_\iota \gets \sP(\crs, x_\iota, w_\iota) \\ \{\widetilde{x_\iota}, \widetilde{\pi_\iota}\}_{\iota \in [k]} \gets \cA(\crs, \{x_\iota, \pi_\iota\}_{\iota \in [k-1]})}}\left[\begin{array}{cc}
        & \exists~ \cJ \subseteq \{j:\widetilde{x}_j = x\} \text{ s.t. } |\cJ| > |\{i:x_i = x\}| \\
        & \text{ and } {\forall \iota \in \cJ}, \sV(\crs, x, \widetilde{\pi_\iota}) = 1 
        \end{array}\right] \geq \frac{1}{p(\lambda)}, \label{eqn:unc1-crs}
\end{align}
and for all polynomials $p'(\cdot)$ (there are infinitely many $\lambda$) such that
\begin{align}
    &\Pr_{w \gets \cE^\cA(x_1, \ldots, x_{k-1}, x)}\left[(x, w) \in \cR \right] \le \frac{1}{p'(\lambda)}.\label{eqn:unc2-crs}
\end{align}
We parse the output of the adversary $\cA$ as $\widetilde{\pi_\iota} = (\widetilde{\rho_{\$,\iota}}, \widetilde{s_\iota}, \widetilde{\pi_{\Pi,\iota}})$ for all $\iota \in [k]$.

Given \cref{eqn:unc1-crs}, we may be in one of the two following cases: either $\cA$ generates two accepting proofs  
which have the same serial number as an honestly generated proof (for an infinite set of $\lambda$), 
or $\cA$ does not (for an infinite set of $\lambda$).
We consider that either of these two scenarios occur with at least $1/(2p(\lambda))$ probability and show that each reaches a contradiction.

\noindent \underline{Scenario One}

Say that (for an infinite set of $\lambda$) $\cA$ generates two accepting proofs which have the same serial number as an honestly generated proof with at least $1/(2p(\lambda))$ probability. 
Symbolically,
\begin{equation}
    \label{eqn:unc3-crs}
    \Pr_{\substack{(\crs, \td) \gets \Setup(1^\lambda) \\ \forall \iota \in [k-1], \: \pi_\iota \gets \sP(\crs, x_\iota, w_\iota) \\ \{\widetilde{x_\iota}, \widetilde{\pi_\iota}\}_{\iota \in [k]} \gets \cA(\crs, \{x_\iota, \pi_\iota\}_{\iota \in [k-1]})}}\left[\begin{array}{cc}
        & \exists~ \cJ \subseteq \{j:\widetilde{x}_j = x\} \text{ s.t. } |\cJ| > |\{i:x_i = x\}| \\
        & \text{ and } {\forall \iota \in \cJ}, \sV(\crs, x, \widetilde{\pi_\iota}) = 1 \\
        & \text{ and } \exists i^* \in [k-1] \:\: \exists j^*, \ell^* \in \cJ \text{ s.t. } s_{i^*} = \widetilde{s_{j^*}} = \widetilde{s_{\ell^*}}
        \end{array}\right]
    \geq \frac{1}{2p(\lambda)}.
\end{equation}
Through a hybrid argument, we can get a similar event with fixed indices $i^*$, $j^*$, and $\ell^*$ which belong to their respective sets with an advantage of $1/(2k^3p(\lambda))$.
By using the advantage of $\cA$ in this game, we can show a reduction that breaks the unforgeability of the quantum money scheme. We will now outline this reduction.
\begin{addmargin}[2em]{2em} 
    \noindent \underline{Reduction}: to unforgeability of quantum money scheme given oracle access to $\cA$.

    \noindent \emph{Hardwired with}: $(x_1, w_1), \ldots, (x_{k-1}, w_{k-1})$, $x$, $i^*$, $j^*$, $\ell^*$.

    \noindent \textbf{(1)} Compute $(\crs, \td) \gets \Setup(1^\lambda)$ where $\crs = (\crs_\Pi, c)$ and $\td = \td_\Pi$.

    \noindent \textbf{(2)} Receive $(\rho_\$, s) \gets \NoteGen$ from the challenger.

    \noindent \textbf{(3)} Define $\rho_{\$, i^*} = \rho_\$$, $s_{i^*} = s$, and $x_\Pi = (c, x_{i^*}, s_{i^*})$. Compute $\pi_{\Pi, \ell} \gets \Pi.\sP(\crs_\Pi, x_\Pi, w_{i^*})$. Define $\pi_{i^*} = (\rho_{\$, i^*}, s_{i^*}, \pi_{\Pi, i^*})$.
    
    \noindent \textbf{(4)} Define $\pi_\iota \gets \sP(\crs, x_\iota, w_\iota)$ for $\iota \in [k-1] \setminus \{i^*\}$.

    \noindent \textbf{(5)} Send $\{x_\iota, \pi_\iota\}_{\iota \in [k-1]}$ to $\cA$.

    \noindent \textbf{(6)} Receive $\{\widetilde{x_\iota}, \widetilde{\pi_\iota}\}_{\iota \in [k]}$ from $\cA$. 
    
    \noindent \textbf{(7)} Parse $\widetilde{\pi_{j^*}} = (\widetilde{\rho_{\$, j^*}}, \widetilde{s_{j^*}}, \widetilde{\pi_{\Pi, j^*}})$ and $\widetilde{\pi_{\ell^*}} = (\widetilde{\rho_{\$, \ell^*}}, \widetilde{s_{\ell^*}}, \widetilde{\pi_{\Pi, \ell^*}})$.

    \noindent \textbf{(7)} Send $(\widetilde{\rho_{\$, j^*}}, \widetilde{\rho_{\$, \ell^*}})$ to the challenger.
\end{addmargin}
Given the event in \cref{eqn:unc3-crs} holds (for the afore mentioned fixed indices), then the reduction will return two quantum money states with the same serial number as the challenger sent.
With advantage $1/(2k^3p(\lambda))$, the reduction will succeed at breaking unforgeability of the quantum money scheme, thus reaching a contradiction.

\noindent \underline{Scenario Two}

Alternatively, say that (for an infinite set of $\lambda$) $\cA$ does not generate two accepting proofs which have the same serial number as an honestly generated proof with at least $1/(2p(\lambda))$ probability. By the pigeon-hole principle, this means that $\cA$ generates an accepting proof with a serial number which is not amongst the ones it received. 
In summary, we have that
\begin{equation}
    \label{eqn:unc4-crs}
    \Pr_{\substack{(\crs, \td) \gets \Setup(1^\lambda) \\ \forall \iota \in [k-1], \: \pi_\iota \gets \sP(\crs, x_\iota, w_\iota) \\ \{\widetilde{x_\iota}, \widetilde{\pi_\iota}\}_{\iota \in [k]} \gets \cA(\crs, \{x_\iota, \pi_\iota\}_{\iota \in [k-1]})}}\left[\begin{array}{cc}
        & \exists~ \cJ \subseteq \{j:\widetilde{x}_j = x\} \text{ s.t. } |\cJ| > |\{i:x_i = x\}| \\
        & \text{ and } {\forall \iota \in \cJ}, \sV(\crs, x, \widetilde{\pi_\iota}) = 1 \\
        & \text{ and } \exists j^* \in \cJ \text{ s.t. } \widetilde{s_{j^*}} \not\in \{s_\iota\}_{\iota \in [k-1]} 
        \end{array} \right]
    \geq \frac{1}{2p(\lambda)}.
\end{equation}
Through an averaging argument, we can get a similar event with a fixed index $j^*$ that belongs to the event's set $\cJ$ with an advantage of $1/(2kp(\lambda))$. We will now switch to a hybrid where we provide $\cA$ with simulated proofs.

\begin{claim}
    \label{claim:mult-sim-crs}
    There exists a polynomial $q(\cdot)$ such that
    \begin{equation}
        \label{eqn:unc5-crs}
        \Pr_{\substack{(\crs, \td) \gets \Sim_0(1^\lambda) \\ \forall \iota \in [k-1], \: \pi_\iota \gets \Sim_1(\crs, \td, x_\iota) \\ \{\widetilde{x_\iota}, \widetilde{\pi_\iota}\}_{\iota \in [k]} \gets \cA(\crs, \{x_\iota, \pi_\iota\}_{\iota \in [k-1]})}}\left[\begin{array}{cc}
        & \exists~ \cJ \subseteq \{j:\widetilde{x}_j = x\} \text{ s.t. } |\cJ| > |\{i:x_i = x\}| \\
        & \text{ and } {\forall \iota \in \cJ}, \sV(\crs, x, \widetilde{\pi_\iota}) = 1 \\
        & \text{ and } j^* \in \cJ \\
        & \text{ and } \widetilde{s_{j^*}} \not\in \{s_\iota\}_{\iota \in [k-1]} 
        \end{array} \right] \geq \frac{1}{q(\lambda)}.
    \end{equation}
\end{claim}

We will later see a proof of \cref{claim:mult-sim-crs}. For now, assuming that this claim holds, by the definition of $\cE$, \cref{eqn:unc2-crs}, and \cref{eqn:unc5-crs}, there exists a polynomial $q'(\cdot)$ such that
\begin{equation*}
    \Pr_{\substack{(\crs, \td) \gets \Sim_0(1^\lambda) \\ \forall \iota \in [k-1], \: \pi_\iota \gets \Sim_1(\crs, \td, x_\iota) \\ \{\widetilde{x_\iota}, \widetilde{\pi_\iota}\}_{\iota \in [k]} \gets \cA(\crs, \{x_\iota, \pi_\iota\}_{\iota \in [k-1]})\\ j' \urand [k] \\ w \gets \Ext(\crs, \td, x, \widetilde{\pi_{j'}})}}\left[\begin{array}{cc}
        & \exists~ \cJ \subseteq \{j:\widetilde{x}_j = x\} \text{ s.t. } |\cJ| > |\{i:x_i = x\}| \\
        & \text{ and } {\forall \iota \in \cJ}, \sV(\crs, x, \widetilde{\pi_\iota}) = 1 \\
        & \text{ and } j^* \in \cJ \\
        & \text{ and } \widetilde{s_{j^*}} \not\in \{s_\iota\}_{\iota \in [k-1]} \\
        & \text{ and } (x, w) \not\in \cR
        \end{array} \right]
    \geq \frac{1}{q'(\lambda)}.
\end{equation*}
We will additionally have that $j' = j^*$ with advantage  at least $1/(kq'(\lambda))$.
Since $\sV$ accepts $\widetilde{\pi_{j^*}}$ with respect to $x$, $\Pi.\sV$ must accept $\widetilde{\pi_{\Pi, j^*}}$ with respect to $\widetilde{x_{\Pi, j^*}} = (c, x, \widetilde{s_{j^*}})$. Since $\widetilde{s_{j^*}} \not\in \{s_\iota\}_{\iota \in [k-1]}$, we have that $\Pi.\Sim_1$, through $\Sim_1$, has not previously received $\widetilde{x_{\Pi, j^*}}$ as a query. As such, we have that
\begin{equation}
    \label{eqn:unc6-crs}
    \Pr_{\substack{(\crs, \td) \gets \Sim_0(1^\lambda) \\ \forall \iota \in [k-1], \: \pi_\iota \gets \Sim_1(\crs, \td, x_\iota) \\ \{\widetilde{x_\iota}, \widetilde{\pi_\iota}\}_{\iota \in [k]} \gets \cA(\crs, \{x_\iota, \pi_\iota\}_{\iota \in [k-1]}) \\ w \gets \Ext(\crs, \td, \widetilde{x_{j^*}}, \widetilde{\pi_{j^*}})}}\left[\begin{array}{cc}
        & \Pi.\sV(\crs_\Pi, (c, x, \widetilde{s_{j^*}}), \widetilde{\pi_{\Pi, j^*}}) = 1 \\
        & \text{ and } (c, x, \widetilde{s_{j^*}}) \not\in Q_\Pi \\
        & \text{ and } (x, w) \not\in \cR
        \end{array}\right]
    \geq \frac{1}{kq'(\lambda)}
\end{equation}
where $Q_\Pi$ is the set of queries asked through $\Sim_1$ to $\Pi.\Sim_1$.
We now define $\cB$ with oracle access to $\cA$ and $\Sim_1$~\footnote{Here, $\cB$ is given oracle access to $\Sim_1$ which has the terms $(\crs, \td)$ fixed by the output of $\Sim_0$.}:
\begin{addmargin}[2em]{2em} 
    \noindent \emph{Hardwired}: $x_1, \ldots, x_{k-1}$, $x$ $j^*$

    \noindent \emph{Input}: $\crs = (\crs_\Pi, c)$
    
    \noindent \textbf{(1)} For $\iota \in [k-1]$: send $x_\iota$ to $\Sim_1$, and receive $\pi_\iota$ from $\Sim_1$.

    \noindent \textbf{(2)} Send $(\crs, \{x_\iota, \pi_\iota\}_{\iota \in [k-1]})$ to $\cA$. Receive $\{\widetilde{x_\iota}, \widetilde{\pi_\iota}\}_{\iota \in [k]}$ from $\cA$. 
    
    \noindent \textbf{(3)} Output $((c, x, \widetilde{s_{j^*}}), \widetilde{\pi_{j^*}})$.
\end{addmargin}
Given that the event in \cref{eqn:unc6-crs} holds, then $\cB$ contradicts \cref{claim:simext}. Thus, all that remains to be proven is \cref{claim:mult-sim-crs}.

\begin{proof}[Proof of \cref{claim:mult-sim-crs}]
    We proceed by contradiction.
    Let $\negl'(\cdot)$ be a negligible function such that
    \begin{equation}
        \label{eqn:unc8-crs}
        \Pr_{\substack{(\crs, \td) \gets \Sim_0(1^\lambda) \\ \forall \iota \in [k-1], \: \pi_\iota \gets \Sim_1(\crs, \td, x_\iota) \\ \{\widetilde{x_\iota}, \widetilde{\pi_\iota}\}_{\iota \in [k]} \gets \cA(\crs, \{x_\iota, \pi_\iota\}_{\iota \in [k-1]})}}\left[\begin{array}{cc}
        & \exists~ \cJ \subseteq \{j:\widetilde{x}_j = x\} \text{ s.t. } |\cJ| > |\{i:x_i = x\}| \\
        & \text{ and } {\forall \iota \in \cJ}, \sV(\crs, x, \widetilde{\pi_\iota}) = 1 \\
        & \text{ and } j^* \in \cJ \\
        & \text{ and } \widetilde{s_{j^*}} \not\in \{s_\iota\}_{\iota \in [k-1]} 
        \end{array} \right]
        \le \negl'(\lambda).
    \end{equation}
    By \cref{eqn:unc4-crs} and \cref{eqn:unc8-crs}, there exists a polynomial $q^*(\cdot)$ such that 
    \begin{align}
        &\left\vert \Pr_{\substack{(\crs, \td) \gets \Setup(1^\lambda) \\ \forall \iota \in [k-1], \: \pi_\iota \gets \sP(\crs, x_\iota, w_\iota) \\ \{\widetilde{x_\iota}, \widetilde{\pi_\iota}\}_{\iota \in [k]} \gets \cA(\crs, \{x_\iota, \pi_\iota\}_{\iota \in [k-1]})}}\left[\begin{array}{cc}
        & \exists~ \cJ \subseteq \{j:\widetilde{x}_j = x\} \text{ s.t. } |\cJ| > |\{i:x_i = x\}| \\
        & \text{ and } {\forall \iota \in \cJ}, \sV(\crs, x, \widetilde{\pi_\iota}) = 1 \\
        & \text{ and } j^* \in \cJ \\
        & \text{ and } \widetilde{s_{j^*}} \not\in \{s_\iota\}_{\iota \in [k-1]} 
        \end{array}\right] \right. \nonumber\\
        &\left. - \Pr_{\substack{(\crs, \td) \gets \Sim_0(1^\lambda) \\ \forall \iota \in [k-1], \: \pi_\iota \gets \Sim_1(\crs, \td, x_\iota) \\ \{\widetilde{x_\iota}, \widetilde{\pi_\iota}\}_{\iota \in [k]} \gets \cA(\crs, \{x_\iota, \pi_\iota\}_{\iota \in [k-1]})}}\left[\begin{array}{cc}
        & \exists~ \cJ \subseteq \{j:\widetilde{x}_j = x\} \text{ s.t. } |\cJ| > |\{i:x_i = x\}| \\
        & \text{ and } {\forall \iota \in \cJ}, \sV(\crs, x, \widetilde{\pi_\iota}) = 1 \\
        & \text{ and } j^* \in \cJ \\
        & \text{ and } \widetilde{s_{j^*}} \not\in \{s_\iota\}_{\iota \in [k-1]} 
        \end{array}\right]\right\vert \ge \frac{1}{q^*(\lambda)}.\label{eqn:unc7-crs}
    \end{align}
    By using the advantage of $\cA$ in this game, we can show a reduction that breaks the multi-theorem zero-knowledge of \cref{fig:unizk-crs}. We will now outline this reduction.
    \begin{addmargin}[2em]{2em} 
        \noindent \underline{Reduction}: to multi-theorem zero-knowledge of our protocol given oracle access to $\cA$.
        
        \noindent \emph{Hardwired}: $(x_1, w_1), \ldots, (x_{k-1}, w_{k-1})$, $x$, $j^*$
    
        \noindent \textbf{(1)} Receive (real or simulated) $\crs$ from the challenger.
        
        \noindent \textbf{(2)} For $\iota \in [k-1]$: send $(x_\iota, w_\iota)$ to the challenger, and receive (real or simulated) $\pi_\iota$ from the challenger.
    
        \noindent \textbf{(3)} Send $(\crs, \{x_\iota, \pi_\iota\}_{\iota \in [k-1]})$ to $\cA$. Receive $\{\widetilde{x_\iota}, \widetilde{\pi_\iota}\}_{\iota \in [k]}$ from $\cA$. 
        
        \noindent \textbf{(4)} Parse $\widetilde{\pi_b} = (\widetilde{\rho_{\$, b}}, \widetilde{s_b}, \widetilde{\pi_{\Pi, b}})$.
        
        \noindent \textbf{(5)} Output $\exists~ \cJ \subseteq \{j:\widetilde{x}_j = x\} \text{ s.t. } |\cJ| > |\{i:x_i = x\}| \text{ and } {\forall \iota \in \cJ}, \sV(\crs, x, \widetilde{\pi_\iota}) = 1 \text{ and } j^* \in \cJ \text{ and } \widetilde{s_{j^*}} \not\in \{s_\iota\}_{\iota \in [k-1]}$.
    \end{addmargin}
    Given that $\cA$ is able to change its output dependent on which of the two worlds in \cref{eqn:unc7-crs} that it is in, then the reduction will be able to distinguish between receiving honest proofs or simulated proofs.
    With advantage $1/q^*(\lambda)$, the reduction will succeed at breaking the adaptive multi-theorem computational zero-knowledge of our protocol, thus reaching a contradiction.
\end{proof}

By completing the proofs of our claim, we have concluding the proof of our theorem statement.
\end{proof}

\begin{corollary}
    \label{cor:main-crs}
    Assuming the polynomial quantum hardness of LWE, injective one-way functions exist, and post-quantum iO exists, there exists a non-interactive adaptive argument of knowledge, adaptive computationally zero-knowledge, and $(k-1)$-to-$k$-unclonable argument with extraction protocol for $\NP$ in the common reference string model (\cref{def:unizk-crs}).
\end{corollary}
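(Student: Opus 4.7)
The plan is to instantiate the generic construction from \cref{thm:main-crs} with concrete primitives whose existence is guaranteed by the stated assumptions, and then invoke \cref{thm:main-crs} as a black box. Recall that \cref{thm:main-crs} takes three ingredients: (i) a public-key quantum money mini-scheme in the sense of \cref{def:qmoney}, (ii) a post-quantum perfectly binding, computationally hiding commitment scheme in the sense of \cref{def:com}, and (iii) a non-interactive post-quantum simulation-extractable, adaptive multi-theorem computational zero-knowledge argument for $\NP$ in the CRS model in the sense of \cref{def:simext-nizk-crs}. Given these, \cref{fig:unizk-crs} already specifies the protocol $(\Setup,\sP,\sV)$ we want, and \cref{thm:main-crs} shows that it satisfies every property demanded by \cref{def:unizk-crs}.

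The first step is to instantiate the public-key quantum money mini-scheme. By \cref{thm:qmoney}, the existence of injective one-way functions and post-quantum iO suffices; both are part of the corollary's hypothesis. The second step is to instantiate the commitment: by \cref{thm:com}, the polynomial quantum hardness of LWE yields a non-interactive perfectly binding and computationally hiding commitment, as required.

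The third step is to instantiate the simulation-extractable NIZK. By \cref{cor:simext-nizk-crs}, the polynomial quantum hardness of LWE is sufficient to obtain a post-quantum simulation-extractable, adaptive multi-theorem computational zero-knowledge argument for $\NP$ in the CRS model. (Under the hood, this itself follows from \cref{thm:nizk-crs}, the simulation-sound compiler of \cref{thm:simsound-compiler}, and the generic transformation of \cref{thm:simext-nizk-crs} using a post-quantum IND-CPA public-key encryption scheme, which in turn exists under LWE; all inputs are available from the hypotheses.)

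With all three ingredients in hand, the final step is purely to invoke \cref{thm:main-crs}, which directly yields that the protocol of \cref{fig:unizk-crs} is a non-interactive quantum simulation-extractable, adaptive multi-theorem computational zero-knowledge, and $(k-1)$-to-$k$-unclonable argument with extraction for $\NP$ in the CRS model, i.e. it satisfies \cref{def:unizk-crs}. No obstacle arises beyond cleanly verifying that each named assumption supplies the corresponding ingredient to \cref{thm:main-crs}; there is no additional reduction to carry out.
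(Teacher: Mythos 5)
Your proposal is correct and matches the paper's proof exactly: the paper's \cref{cor:main-crs} is proved by citing \cref{thm:com}, \cref{cor:simext-nizk-crs}, \cref{thm:qmoney}, and \cref{thm:main-crs}, which is precisely the instantiation-plus-black-box-invocation you carry out. Your extra unpacking of the dependency chain behind \cref{cor:simext-nizk-crs} is accurate but not new content relative to the paper.
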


\begin{proof}
    This follows from \cref{thm:com}, \cref{cor:simext-nizk-crs}, \cref{thm:qmoney}, and \cref{thm:main-crs}.
\end{proof}

We have thus shown that \cref{fig:unizk-crs} is an unclonable NIZK AoK in the CRS model as defined according to our proposed unclonability definition, \cref{def:unizk-crs}.

In the upcoming sections, we will consider unclonable proof systems in the QROM.
\section{Unclonable NIZK in the Quantum Random Oracle Model}
\label{sec:qro}

\subsection{A Modified Sigma Protocol}

We will begin by introducing a slightly modified sigma protocol. In the coming sections, our construction will involve applying Fiat-Shamir to this modified protocol.

\begin{theorem}
    \label{thm:mod-sigma}
    Let a post-quantum sigma protocol with unpredictable commitments $\Pi$ be given (see \cref{def:sigma}).
    Let $\cR_\Pi$ be an $\NP$ relation.
    Let $\cR = \{((x, \cS), w) \st (x, w) \in \cR_\Pi \land \cS \ne \emptyset\}$. 
    We argue that the following protocol will be a post-quantum sigma protocol with unpredictable commitments (see \cref{def:sigma}):
    \begin{itemize}
        \item $\sP.\Com(1^\lambda, (x, \cS), w)$: Sends $(x, \alpha, s)$ to $\sV$ where $(\alpha, \state) \gets \Pi.\sP.\Com(1^\lambda, x, w)$ and $s$ is sampled from $\cS$.
        \item $\sV.\Ch(1^\lambda, (x, \cS), (x, \alpha, s))$: Sends $\beta$ to $\sP$ where $\beta \gets \Pi.\sV.\Ch(1^\lambda, x, \alpha)$.
        \item $\sP.\Com(1^\lambda, (x, \cS), w, \state, \beta)$: Sends $\gamma$ to $\sV$ where $\gamma \gets \Pi.\sP.\Prove(1^\lambda, x, w, \state, \beta)$.
        \item $\sV.\Ver(1^\lambda, (x, \cS), (x, \alpha, s), \beta, \gamma)$: Outputs $1$ iff $s \in \mathsf{Support}(\cS)$ and $\Pi.\sV.\Ver(1^\lambda, x, \alpha, \beta, \gamma) = 1$.
    \end{itemize}
\end{theorem}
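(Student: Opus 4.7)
The plan is to verify each of the four properties of \cref{def:sigma} in turn for the modified protocol, reducing each one to the corresponding property of $\Pi$ essentially by ignoring the extra component $s \in \cS$. Since $\cR$ only differs from $\cR_\Pi$ by tagging instances with a nonempty set $\cS$, and the only new behavior is sampling $s \gets \cS$ (included in the commitment) and checking $s \in \mathsf{Support}(\cS)$ (during verification), the modifications are orthogonal to the cryptographic content of $\Pi$.

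First I would establish perfect completeness: given $((x,\cS),w) \in \cR$ we have $(x,w) \in \cR_\Pi$ and $\cS \neq \emptyset$, so $s \in \mathsf{Support}(\cS)$ by construction and $\Pi.\sV.\Ver(x,\alpha,\beta,\gamma)=1$ by perfect completeness of $\Pi$. Next, for computational HVZK, I would construct $\Sim((x,\cS))$ that runs $(\alpha,\beta,\gamma) \gets \Pi.\Sim(x)$, samples $s \gets \cS$, and outputs $((x,\alpha,s),\beta,\gamma)$. Indistinguishability from the real transcript distribution follows immediately from the HVZK of $\Pi$, since $s$ is sampled identically in both worlds and independent of $(\alpha,\beta,\gamma)$; a straightforward reduction forwards any distinguisher's advantage against $\Sim$ to a distinguisher against $\Pi.\Sim$.

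For unpredictable commitments, note that the new commitment message $(x,\alpha,s)$ agrees between two independent executions only if the underlying $\alpha$'s collide, which happens with negligible probability by the unpredictable commitment property of $\Pi$ (the equality of $x$ is automatic and $s$ can only increase collision probability by a factor bounded by $1$). Finally, for argument of knowledge, given an adversary $\cA = (\cA_0,\cA_1)$ for the modified protocol that produces accepting transcripts on instance $(x,\cS)$ with non-negligible probability, I would build an adversary $\cA' = (\cA_0',\cA_1')$ for $\Pi$ by having $\cA_0'$ run $\cA_0((x,\cS))$ and strip the $(x,s)$ part from the returned commitment message (keeping the internal state as is), and having $\cA_1'(x,\ket{\state},\beta)$ run $\cA_1((x,\cS),\ket{\state},\beta)$. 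Since $\cS \neq \emptyset$ implies any $s$ drawn from the honest $\cS$ passes the support check, and since the challenger's acceptance predicate for $\Pi$ is exactly the remaining condition, $\cA'$'s success probability matches $\cA$'s. Invoking $\Pi.\Ext$ on $\cA'$ yields a witness for $(x,w) \in \cR_\Pi$, hence $((x,\cS),w) \in \cR$.

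The only mildly subtle step is making sure the extractor's oracle-access conventions line up: $\Pi.\Ext$ requires oracle access to the unitaries $U_x, V_{x,\beta}$ and their inverses, and I must ensure that $\cA'$'s unitaries $U_x', V_{x,\beta}'$ (implemented by conjugating $\cA$'s unitaries with the fixed classical transformation that prepends/strips $(x,s)$) are themselves efficiently implementable with oracle access to $\cA$'s unitaries and their inverses. This is immediate since $(x,\cS)$ is a fixed classical input and the sampling of $s$ can be done once at the start and fixed thereafter, so the transformation is a fixed classical pre- and post-processing that commutes through inversion. The extractor's success probability then inherits the polynomial relation to $\cA'$'s success probability from $\Pi$'s extractability guarantee, completing the verification.
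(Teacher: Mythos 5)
Your proposal is correct and follows essentially the same approach as the paper: completeness and unpredictable commitments transfer trivially, HVZK is handled by a wrapper simulator that runs $\Pi.\Sim$ and appends a fresh $s \gets \cS$, and argument of knowledge reduces to $\Pi.\Ext$ by a wrapper adversary that strips the $(x,s)$ tag from the commitment message before forwarding; the paper's proof spells out the resulting chain of inequalities in the success probabilities, but the construction of the wrapper and the observation that the modified verifier's acceptance implies $\Pi.\sV.\Ver$'s acceptance (so the reduction's success probability is at least the adversary's) are exactly what you describe.

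One small imprecision worth flagging: you phrase the AoK probability relation as if $\cA'$'s success probability \emph{matches} $\cA$'s, justified by ``any $s$ drawn from the honest $\cS$ passes the support check.'' In fact the malicious $\cA_0$ may output an $s$ outside $\mathsf{Support}(\cS)$; what you actually use (and all you need) is the one-directional inequality: whenever the modified verifier accepts, the underlying $\Pi$-verifier accepts, so $\cA'$ succeeds against $\Pi$ with probability \emph{at least} that of $\cA$ against the modified protocol. The paper makes this an explicit $\geq$. Your observation about the fixed classical pre-/post-processing commuting with inversion, so that the extractor's unitary-oracle-access convention is preserved, is a nice point that the paper handles only implicitly.
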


\begin{proof}
\noindent \textbf{Perfect completeness} 
This follows directly from the perfect completeness of $\Pi$.\\

\noindent \textbf{Proof of Argument with Quantum Extractor.}
Let $\Pi.\Ext$ be the proof of argument quantum extractor for $\Pi$.
Let constant $c_\Pi$, polynomial $p_\Pi(\cdot)$, and negligible functions $\negl_{0, \Pi}(\cdot), \negl_{1, \Pi}(\cdot)$ be given such that for any quantum $\cA_\Pi = (\cA_{0, \Pi}, \cA_{1, \Pi})$ where
\begin{itemize}
    \item $\cA_{0, \Pi}(x)$ is a unitary $U_x$ followed by a measurement and
    \item $\cA_{1, \Pi}(x), \ket{\state}, \beta)$ is a unitary $V_{x,\beta}$ onto the state $\ket{\state}$ followed by a measurement,
\end{itemize}
and any $x$ with associated $\lambda \in \bbN$ satisfying
\begin{equation}
    \label{eqn:sigma-mod}
    \Pr_{\substack{(\alpha, \ket{\state}) \gets \cA_{0, \Pi}(x) \\ \beta \gets \zo^\lambda \\ \gamma \gets \cA_{1, \Pi}(x, \ket{\state}, \beta)}}[\Pi.\sV.\Ver(x, \alpha, \beta, \gamma) = 1] \ge \negl_{0, \Pi}(\lambda)
\end{equation}
we have
\begin{align*}
    &\Pr[(x, \Pi.\Ext^{\cA_\Pi(x)}(x)) \in \cR_\Pi]  \\
    &\ge \frac{1}{p(\lambda)} \cdot \left( \Pr_{\substack{(\alpha, \ket{\state}) \gets \cA_{0, \Pi}(x) \\ \beta \gets \zo^\lambda \\ \gamma \gets \cA_{1, \Pi}(x, \ket{\state}, \beta)}}[\Pi.\sV.\Ver(x, \alpha, \beta, \gamma) = 1] - \negl_{0, \Pi}(\lambda)\right)^{c_\Pi} - \negl_{1, \Pi}(\lambda).
\end{align*}
We define $\Ext$~\footnote{An extractor whose local code is implementable as a simple unitary which allows for straightforward rewinding.} with oracle-access to $\Pi.\Ext$ and some $\cA$ as follows:
\begin{addmargin}[2em]{2em} 
    \noindent \emph{Input}: $x$, $\cS$.

    \noindent \textbf{(1)} Given $(x, \alpha, s)$ from $\cA_\Pi$: send $\alpha$ to $\Pi.\Ext$, receive $\beta$ from $\Pi.\Ext$, and send $\beta$ to $\cA_\Pi$.

    \noindent \textbf{(2)} Upon receiving $\gamma$ from $\cA_\Pi$: send $\gamma$ to $\Pi.\Ext$.

    \noindent \textbf{(3)} Output the result of $\Pi.\Ext$ as $w$.
\end{addmargin}

We define the following set of parameters: $c = c_\Pi$, $p(\cdot) = p_\Pi(\cdot)$, $\negl_0(\cdot) = \negl_{0, \Pi}(\cdot)$ and $\negl_1(\cdot) = \negl_{1, \Pi}(\cdot)$.

\vspace{0.3cm}

Let polynomial-size quantum circuit $\cA=(\cA_0, \cA_1)$ and $(x, \cS)$ be given such that
\begin{equation*}
    \Pr_{\substack{((x, \alpha, s), \ket{\state}) \gets \cA_0(x, \cS) \\ \beta \gets \zo^\lambda \\ \gamma \gets \cA_1((x, \cS), \ket{\state}, \beta)}}[\sV.\Ver((x, \cS), (x, \alpha, s), \beta, \gamma) = 1] \ge \negl_0(\lambda).
\end{equation*}
We now define $\cA_\Pi = (\cA_{0, \Pi}, \cA_{1, \Pi})$ with oracle-access to $\cA$. $\cA_{0, \Pi}$ is hardwired with $\cS$, takes input $x$, sends $(x, \cS)$ to $\cA_0$, receives $((x, \alpha, s), \ket{\state})$ from $\cA_0$, and outputs $(\alpha, \ket{\state})$.
$\cA_{1, \Pi}$ is hardwired with $\cS$, takes input $(x, \ket{\state}, \beta)$, sends $((x, \cS), \ket{\state}, \beta)$ to $\cA_1$, receives $\gamma$ from $\cA_1$, outputs $\gamma$.
By the structure of our proof and definition of our verifier, this means that
\begin{align*}
    &\Pr_{\substack{(\alpha, \ket{\state}) \gets \cA_{0, \Pi}^{\cA_0}(x, \cS) \\ \beta \gets \zo^\lambda \\ \gamma \gets \cA_{1, \Pi}^{\cA_1}((x, \cS), \ket{\state}, \beta)}}[\Pi.\sV.\Ver(x, \alpha, \beta, \gamma) = 1] \\
    &\ge \Pr_{\substack{((x, \alpha, s), \ket{\state}) \gets \cA_0(x, \cS) \\ \beta \gets \zo^\lambda \\ \gamma \gets \cA_1((x, \cS), \ket{\state}, \beta)}}[\sV.\Ver((x, \cS), (x, \alpha, s), \beta, \gamma) = 1] \ge \negl_0(\lambda)
\end{align*}
which satisfies the constraint in \cref{eqn:sigma-mod}. This means we have, when combined with our definition of $\Ext$, that
\begin{align*}
    &\Pr[((x, \cS), \Ext^{\cA(x, \cS)}(x, \cS)) \in \cR] = \Pr[(x, \Pi.\Ext^{\cA_\Pi(x, \cS)}(x)) \in \cR_\Pi] \\
    & \ge \frac{1}{p_\Pi(\lambda)} \cdot \left( \Pr_{\substack{((x, \alpha, s), \ket{\state}) \gets \cA_{0, \Pi}^{\cA_0}(x, \cS) \\ \beta \gets \zo^\lambda \\ \gamma \gets \cA_{1, \Pi}^{\cA_1}((x, \cS), \ket{\state}, \beta)}}[\Pi.\sV.\Ver(x, \alpha, \beta, \gamma) = 1] - \negl_{0, \Pi}(\lambda)\right)^{c_\Pi} - \negl_{1, \Pi}(\lambda)\\
    &\ge \frac{1}{p_\Pi(\lambda)} \cdot \left( \Pr_{\substack{((x, \alpha, s), \ket{\state}) \gets \cA_{0}(x, \cS) \\ \beta \gets \zo^\lambda \\ \gamma \gets \cA_{1}((x, \cS), \ket{\state}, \beta)}}[\sV.\Ver((x, \cS), (x, \alpha, s), \beta, \gamma) = 1] - \negl_{0, \Pi}(\lambda)\right)^{c_\Pi} - \negl_{1, \Pi}(\lambda)\\
    &\ge \frac{1}{p(\lambda)} \cdot \left( \Pr_{\substack{((x, \alpha, s), \ket{\state}) \gets \cA_0(x, s) \\ \beta \gets \zo^\lambda \\ \gamma \gets \cA_1((x, s), \ket{\state}, \beta)}}[\sV.\Ver((x, s), (x, \alpha, s), \beta, \gamma) = 1] - \negl_0(\lambda)\right)^c - \negl_1(\lambda).
\end{align*}
Thus showing that our protocol is an argument of knowledge protocol.\\

\noindent \textbf{Computational Honest-Verifier Zero-Knowledge with Quantum Simulator.}
Let $\Pi.\Sim$ be the computational honest-verifier zero-knowledge quantum simulator for $\Pi$. We define $\Sim$ with oracle access to $\Pi.\Sim$ as follows:
\begin{addmargin}[2em]{2em} 
    \noindent \emph{Input}: $x$, $\cS$.

    \noindent \textbf{(1)} Compute $(\alpha, \beta, \gamma) \gets \Pi.\Sim(1^\lambda, x)$.

    \noindent \textbf{(2)} Sample $s$ from $\cS$.

    \noindent \textbf{(3)} Output $((x, \alpha, s), \beta, \gamma)$.
\end{addmargin}
Let a polynomial $p(\cdot)$, a polynomial-size quantum circuit $\cD$, $\lambda \in \bbN$, and $((x, \cS), w) \in \cR$ be given such that
\begin{align*}
    &\left\vert \Pr_{\substack{ ((x, \alpha, s), \state) \gets \sP.\Com((x, \cS), w) \\ \beta \gets \sV.\Ch((x, \cS), (x, \alpha, s)) \\ \gamma \gets \sP.\Prove((x, \cS), w, \state, \beta)}}[\cD((x, \cS), (x, \alpha, s), \beta, \gamma) = 1] \right.\\
    &\left. - \Pr_{\substack{((x, \alpha, s), \beta, \gamma) \gets \Sim(1^\lambda, (x, \cS))}}[\cD((x, \cS), (x, \alpha, s), \beta, \gamma) = 1]\right\vert \ge \frac{1}{p(\lambda)}.
\end{align*}
We define a reduction to the zero-knowledge property of $\Pi$ as follows:
\begin{addmargin}[2em]{2em} 
    \noindent \underline{Reduction}: to zero-knowledge of $\Pi$ given oracle access to $\cD$.

    \noindent \textit{Hardwired with}: $x$, $\cS$.
    
    \noindent \textbf{(1)} Receive (real or simulated) $(\alpha, \beta, \gamma)$ from the challenger. 

    \noindent \textbf{(2)} Sample $s$ from $\cS$.

    \noindent \textbf{(3)} Send $((x, \alpha, s), \beta, \gamma)$ to $\cD$. Receive $b$ from $\cD$.

    \noindent \textbf{(4)} Output $b$.
\end{addmargin}
When the challenger sends a real (or simulated) proof for $\Pi$, the reduction generates a proof that is identical to the real (resp. simulated) proof.
As such, this reduction preserves the distinguishing advantage of $\cD$. This reaches a contradiction against the zero-knowledge property of $\Pi$. Hence, our protocol must be zero-knowledge. 
\\

\noindent \textbf{Unpredictable Commitment.}
Let $\negl_\Pi(\cdot)$ be a negligible function for the unpredictable commitment property of $\Pi$.

Let a polynomial function $p(\cdot)$, $\lambda \in \bbN$, and $((x, \cS), w) \in \cR$ be given such that
\begin{equation*}
    \Pr_{\substack{ ((x, \alpha, s), \state) \gets \sP.\Com((x, \cS), w) \\ ((x, \alpha', s'), \state') \gets \sP.\Com((x, \cS), w)}}[(\alpha, s) = (\alpha', s')] \ge \frac{1}{p(\lambda)}.
\end{equation*}
By the definition of the honest prover $\sP.\Com$, 
\begin{align*}
    \Pr_{\substack{ (\alpha, \state) \gets \Pi.\sP.\Com(x, w) \\ (\alpha', \state') \gets \Pi.\sP.\Com(x, w)}}[\alpha = \alpha'] \ge 
    \Pr_{\substack{ ((x, \alpha, s), \state) \gets \sP.\Com((x, \cS), w) \\ ((x, \alpha', s'), \state') \gets \sP.\Com((x, \cS), w)}}[(\alpha, s) = (\alpha', s')] \ge \frac{1}{p(\lambda)}
\end{align*}
which is a contradiction. Hence our protocol must have unpredictable commitments.
\end{proof}

\begin{corollary}
    \label{cor:mod-sigma}
    The Fiat-Shamir transform applied to the post-quantum sigma protocol defined in \cref{thm:mod-sigma} yields a classical post-quantum NIZKAoK $\Pi'$ in the QROM (\cref{def:nizkpok-qro}).
\end{corollary}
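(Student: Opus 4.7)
The plan is essentially a direct invocation of \cref{thm:pq-fs}. Concretely, \cref{thm:mod-sigma} already establishes that the modified three-message protocol $\Pi$ satisfies all four properties required of a post-quantum sigma protocol in \cref{def:sigma}: perfect completeness, computational honest-verifier zero-knowledge with a quantum simulator, argument of knowledge with a quantum extractor, and the unpredictable-commitment property. So the main work has already been done; all that remains is to package the result.

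First I would note that the syntax of the modified protocol matches that of \cref{def:sigma}: $\sP.\Com$ produces a first message $(x,\alpha,s)$, $\sV.\Ch$ produces a challenge $\beta$ that depends only on the instance and first message, and $\sP.\Prove$ produces the opening $\gamma$. Next I would invoke the guarantee of \cref{thm:mod-sigma} to conclude that this protocol is a bona fide post-quantum sigma protocol with unpredictable commitments in the sense of \cref{def:sigma}. Applying \cref{thm:pq-fs} to this sigma protocol then yields a classical post-quantum NIZKAoK $\Pi'$ in the QROM satisfying \cref{def:nizkpok-qro}.

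I do not anticipate a real obstacle here, since \cref{thm:pq-fs} is stated as a black-box compiler from any post-quantum sigma protocol into a QROM NIZKAoK, and the hypothesis of \cref{thm:pq-fs} is exactly the conclusion of \cref{thm:mod-sigma}. The only point worth emphasizing in the write-up is that the modification (appending a sample $s$ from the set $\cS$ to the prover's first message, together with the verifier's additional check that $s \in \mathsf{Support}(\cS)$) does not interfere with the Fiat-Shamir transform: the hash $\beta = \cO(x, \cS, (x,\alpha,s))$ depends on the first message in the usual way, and the verifier's extra syntactic check on $s$ is independent of the random oracle. Hence completeness, zero-knowledge (via the programmable QRO and the HVZK simulator lifted from $\Pi.\Sim$), and argument of knowledge (via the QRO extractor of~\cite{LiuZ19} using the sigma-protocol extractor supplied by \cref{thm:mod-sigma}) all transfer through the Fiat-Shamir compiler unchanged, yielding the claimed $\Pi'$.
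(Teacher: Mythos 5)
Your proposal is correct and matches the paper's own argument, which is exactly the one-line citation of \cref{thm:mod-sigma} together with \cref{thm:pq-fs}. The additional commentary you provide about the syntactic compatibility of the modified first message with the Fiat-Shamir transform is a reasonable elaboration, but not a different approach.
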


\begin{proof}
    This follows by \cref{thm:mod-sigma} and \cref{thm:pq-fs}.
\end{proof}

\subsection{Unclonability Definitions}
\label{sec:unc-defs-qro}

Unclonable NIZKs in the quantum random oracle model are defined analogously to the CRS model -- we repeat these definitions in the QRO model for completeness below.

\begin{definition} (Unclonable Security for Hard Instances).
\label{def:uncnizk-alt-rom}
A proof  $(\sP, \sV)$ satisfies unclonable security with respect to a quantum random oracle $\cO$ if
for every language $\cL$ with corresponding relation $\cR_\cL$,
for every polynomial-sized quantum circuit family $\{C_\lambda\}_{\lambda \in \mathbb{N}}$,
and for every hard distribution $\{\mathcal{X}_\lambda,\mathcal{W}_\lambda\}_{\lambda \in \bbN}$ over $\cR_\cL$,
there exists a negligible function $\negl(\cdot)$ such that
for every $\lambda \in \bbN$,
\begin{equation*}
    \Pr_{\substack{(x,w) \leftarrow (\mathcal{X}_\lambda,\mathcal{W}_\lambda) \\
    \pi \leftarrow \sP^\cO(x,w)\\
    \pi_1, \pi_2 \leftarrow C_\lambda(x, \pi)}}\Bigg[
    \sV^\cO(x, \pi_1) = 1 \bigwedge
    \sV^\cO(x, \pi_2) = 1
    \Bigg]
    \leq \negl(\lambda).
\end{equation*}
\end{definition}

\begin{definition}[$(k-1) \text{-to-} k$-Unclonable Extractable NIZK in QROM]
\label{def:unizk-rom}
Let security parameter $\lambda \in \bbN$ and $\NP$ relation $\cR$ with corresponding language $\cL$ be given.
Let $\Pi = (\sP, \sV)$ be given such that $\sP$ and $\sV$ are $\poly(\lambda)$-size quantum algorithms. We have that for any $(x, \omega) \in \cR$, $\sP$ receives an instance and witness pair $(x, \omega)$ as input and outputs $\pi$, and $\sV$ receives an instance $x$ and proof $\pi$ as input and outputs a value in $\zo$.

$\Pi$ is a non-interactive $(k-1) \text{-to-} k$-unclonable NIZKAoK protocol for language $\cL$ with respect to a random oracle $\cO$ if the following holds:
\begin{itemize}
    \item $\Pi$ is a NIZKAoK protocol for language $\cL$ in the quantum random oracle model (\cref{def:nizkpok-qro}).

    \item \textbf{$(k-1) \text{-to-} k$-Unclonable with Extraction}: 
    There exists an oracle-aided polynomial-size quantum circuit $\cE$ such that for every polynomial-size quantum circuit $\cA$ with non-uniform quantum advice $\aux$, for every tuple of $k-1$ instance-witness pairs $(x_1, \omega_1), \ldots, (x_{k-1}, \omega_{k-1}) \in \cR$, for every instance $x$,
    if there exists a polynomial $p(\cdot)$ such that
    \begin{equation*}
        \Pr_{\substack{\cO \\ \forall \iota \in [k-1], \: \pi_\iota \gets \sP^\cO(x_\iota, w_\iota) \\ \{\widetilde{{x}_\iota},\widetilde{\pi_\iota}\}_{\iota \in [k]} \gets \cA^\cO(\{x_\iota, \pi_\iota\}_{\iota \in [k-1]})}}
        \left[
        \begin{array}{cc}
        & \exists~ \cJ \subseteq \{j:\widetilde{x}_j = x\} \text{ s.t. } |\cJ| > |\{i:x_i = x\}| \\
        & \text{ and } {\forall \iota \in \cJ}, \sV^\cO(x, \widetilde{\pi_\iota}) = 1 
        \end{array}
        \right]
        \geq \frac{1}{p(\lambda)},
    \end{equation*}
    then there is a polynomial $q(\cdot)$ such that
    \begin{equation*}
        \Pr_{w \gets \cE^\cA(x_1, \ldots, x_{k-1}, x)}\left[(x, w) \in \cR \right] \geq \frac{1}{q(\lambda)}.
    \end{equation*}
\end{itemize}
\end{definition}

As we did in the previous section, we observe in \cref{def:unizk-rom} that we can generically boost the extractor’s success probability to $1 - \negl(\lambda)$ with respect to a security parameter $\lambda$.

\begin{definition}[$(k-1) \text{-to-} k$-Unclonable Strong-Extractable NIZK in QROM]
\label{def:unizk-rom2}
Let security parameter $\lambda \in \bbN$ and $\NP$ relation $\cR$ with corresponding language $\cL$ be given.
Let $\Pi = (\sP, \sV)$ be given such that $\sP$ and $\sV$ are $\poly(\lambda)$-size quantum algorithms. We have that for any $(x, \omega) \in \cR$, $\sP$ receives an instance and witness pair $(x, \omega)$ as input and outputs $\pi$, and $\sV$ receives an instance $x$ and proof $\pi$ as input and outputs a value in $\zo$.

$\Pi$ is a non-interactive $(k-1) \text{-to-} k$-unclonable NIZKAoK protocol for language $\cL$ with respect to a random oracle $\cO$ if the following holds:
\begin{itemize}
    \item $\Pi$ is a NIZKAoK protocol for language $\cL$ in the quantum random oracle model (\cref{def:nizkpok-qro}).

    \item \textbf{$(k-1) \text{-to-} k$-Unclonable with Extraction}: 
    There exists an oracle-aided polynomial-size quantum circuit $\cE$ such that for every polynomial-size quantum circuit $\cA$, for every tuple of $k-1$ instance-witness pairs $(x_1, \omega_1), \ldots, (x_{k-1}, \omega_{k-1}) \in \cR$, for every instance $x$,
    if there exists a polynomial $p(\cdot)$ such that
    \begin{equation*}
        \Pr_{\substack{\cO \\ \forall \iota \in [k-1], \: \pi_\iota \gets \sP^\cO(x_\iota, w_\iota) \\ \{\widetilde{{x}_\iota},\widetilde{\pi_\iota}\}_{\iota \in [k]} \gets \cA^\cO(\crs, \{x_\iota, \pi_\iota\}_{\iota \in [k-1]})}}
        \left[
        \begin{array}{cc}
        & \exists~ \cJ \subseteq \{j:\widetilde{x}_j = x\} \text{ s.t. } |\cJ| > |\{i:x_i = x\}| \\
        & \text{ and } {\forall \iota \in \cJ}, \sV^\cO(x, \widetilde{\pi_\iota}) = 1 
        \end{array}
        \right]
        \geq \frac{1}{p(\lambda)},
    \end{equation*}
    then there is also a polynomial $\poly(\cdot)$ and a negligible function $\negl(\cdot)$ such that
    \begin{equation*}
        \Pr_{w \gets \cE^\cA(x_1, \ldots, x_{k-1}, x, \aux^{\tensor \poly(\lambda)})}\left[(x, w) \in \cR \right] \geq 1 - \negl(\lambda).
    \end{equation*}
\end{itemize}
\end{definition}

Similar to the previous section, we have the following two lemmas.

\begin{lemma}
    \label{lem:uncdef-rom}
    Let $\Pi = (\Setup, \sP, \sV)$ be a a non-interactive $1 \text{-to-} 2$-unclonable zero-knowledge quantum protocol (\cref{def:unizk-rom}). Then, $\Pi$ satisfies \cref{def:uncnizk-alt-rom}.
\end{lemma}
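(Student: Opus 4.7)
The plan is to mirror the proof of the CRS analogue \cref{lem:uncdef} (given in \cref{app:defs-reduct}), swapping in the QROM zero-knowledge simulator of \cref{def:nizkpok-qro} for the CRS one. At a high level, a cloning adversary against \cref{def:uncnizk-alt-rom} feeds directly into the $1$-to-$2$ unclonable-extraction extractor of \cref{def:unizk-rom}, yielding an efficient procedure that recovers an NP witness for a random instance drawn from any hard distribution, contradicting its hardness.

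Concretely, suppose for contradiction that $\Pi = (\sP, \sV)$ satisfies \cref{def:unizk-rom} but violates \cref{def:uncnizk-alt-rom}. Then there exist a language $\cL$, a hard distribution $(\cX, \cW)$ over $\cR_\cL$, a polynomial $p(\cdot)$, and a polynomial-size quantum circuit family $\{C_\lambda\}$ such that, for infinitely many $\lambda$, the event $\sV^\cO(x,\pi_1) = \sV^\cO(x,\pi_2) = 1$ occurs with probability at least $1/p(\lambda)$ when $(x,w) \gets (\cX_\lambda,\cW_\lambda)$, $\cO$ is the QRO, $\pi \gets \sP^\cO(x,w)$, and $(\pi_1,\pi_2) \gets C_\lambda(x,\pi)$. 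By a Markov/averaging argument, for at least a $1/(2p(\lambda))$ fraction of pairs $(x,w)$ the conditional success probability of $C_\lambda$ is at least $1/(2p(\lambda))$; call this set of $x$-values $\mathsf{Good}_\lambda$.

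For each $x \in \mathsf{Good}_\lambda$, define an adversary $\cA_x$ for the $k = 2$ instantiation of \cref{def:unizk-rom}: on input $(x_1, \pi_1)$ with $x_1 = x$, run $C_\lambda(x, \pi_1)$ to obtain $(\widetilde{\pi}_1, \widetilde{\pi}_2)$ and return $\{(x, \widetilde{\pi}_1), (x, \widetilde{\pi}_2)\}$. Since $\widetilde{x}_1 = \widetilde{x}_2 = x$, we have $|\cJ| = 2 > 1 = |\{i : x_i = x\}|$, so the precondition of \cref{def:unizk-rom} holds with probability at least $1/(2p(\lambda))$. Hence the extractor $\cE$ from \cref{def:unizk-rom} outputs a valid witness for $x$ with probability at least $1/q(\lambda)$ for some polynomial $q(\cdot)$. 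The hardness-breaker $\cB_\lambda$ is then: on input $x$, run $\cE^{\cA_x}(x, x)$ and output its result.

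The main obstacle, and the place where the QROM-specific argument enters, is that each invocation of $\cA_x$ by $\cE$ requires feeding it a proof $\pi_1 \gets \sP^\cO(x, w)$, yet $\cB_\lambda$ knows only $x$, not $w$. I would resolve this with the QROM zero-knowledge simulator $\Sim$ of \cref{def:nizkpok-qro}: whenever $\cE$ wishes to drive $\cA_x$, the reduction instead invokes $\Sim$ on $x$ to produce $\pi$, and answers all QRO queries made by $\cA_x$ (equivalently, by $C_\lambda$ embedded inside $\cA_x$) through $\Sim$'s simulated oracle $\cO_\Sim$. By the zero-knowledge property, $\cA_x$'s joint view is computationally indistinguishable from its distribution in the real experiment, so any noticeable drop in $\cE$'s extraction rate would translate into a distinguisher against zero-knowledge. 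Combining this with the inverse-polynomial density of $\mathsf{Good}_\lambda$, the circuit $\cB_\lambda$ outputs a valid witness for $x \gets \cX_\lambda$ with inverse-polynomial probability, contradicting the hardness of $(\cX, \cW)$.
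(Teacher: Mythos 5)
Your argument has the same skeleton as the paper's Appendix~\ref{app:defs-reduct} proof: assume a cloner $C_\lambda$ beating Definition~\ref{def:uncnizk-alt-rom}, apply a Markov/averaging argument to exhibit an inverse-polynomially dense set of pairs $(x,w)$ on which the conditional cloning probability is still inverse polynomial, wrap $C_\lambda$ into a $1$-to-$2$ adversary $\cA_x$ for Definition~\ref{def:unizk-rom}, invoke the guaranteed extractor $\cE$, and contradict the hardness of $(\cX,\cW)$. That part is correct and is exactly what the paper does.

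The paragraph you flag as ``the main obstacle'' is not an obstacle, however, and betrays a misreading of the quantifier structure of Definition~\ref{def:unizk-rom}. The unclonable-extraction property stipulates the \emph{existence} of a single oracle-aided $\cE$ whose interface is $\cE^{\cA}(x_1,\dots,x_{k-1},x)$: it receives only oracle access to $\cA$ and the bare instances, never the witnesses $w_\iota$ nor a reference proof $\pi_1$ nor a reference oracle $\cO$. The precondition of the definition does mention $\pi_\iota \gets \sP^\cO(x_\iota,w_\iota)$, but that clause only pins down \emph{when} $\cE$ is required to succeed; it is not a resource handed to $\cE$. Hence your reduction $\cB_\lambda(x)$ can simply output $\cE^{\cA_x}(x,x)$ with no further work, and the entire paragraph invoking the QROM ZK simulator to manufacture $\pi$ and to answer oracle queries can be deleted. (What you describe there is, in effect, a sketch of how the extractor of Theorem~\ref{thm:main-qro} is \emph{implemented} --- it does run $\Sim$ and maintain its own simulatable oracle $\cO$ --- but that is the protocol constructor's burden, not the reduction's, and at the level of this lemma you should treat $\cE$ as a black box.) With that paragraph removed, your proposal coincides with the paper's proof.
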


For a proof of \cref{lem:uncdef-rom}, we refer to \cref{app:defs-reduct}.

\begin{lemma}
    \label{lem:uncdef-rom2}
    Let $\Pi = (\Setup, \sP, \sV)$ be a $(k-1) \text{-to-} k$-unclonable with extraction, non-interactive zero-knowledge quantum protocol (\cref{def:unizk-rom}). Then, $\Pi$ satisfies \cref{def:unizk-rom2}.
\end{lemma}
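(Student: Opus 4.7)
The plan is to directly adapt the proof of \cref{lem:uncdef2} above, which establishes the analogous statement in the CRS model; the structural argument in the QROM is identical. Given the weak extractor $\Pi.\cE$ guaranteed by \cref{def:unizk-rom}, I would define the strong extractor $\cE$ to receive $\poly(\lambda)$ copies of the quantum advice $\aux$ and then invoke $\Pi.\cE$ on the adversary $\cA$ independently $T = \lambda \cdot q(\lambda)$ times, using a fresh copy of $\aux$ each time. The extractor $\cE$ outputs the first $w$ found satisfying $(x,w) \in \cR$, or $\bot$ if none is found.

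The key step is the amplification calculation. Fix $\cA$ with advice $\aux$, instance-witness pairs $(x_1,w_1),\ldots,(x_{k-1},w_{k-1}) \in \cR$, instance $x$, and a polynomial $p(\cdot)$ witnessing the $(k-1) \text{-to-} k$ cloning event with probability at least $1/p(\lambda)$. Since this probability depends only on $\cA$ together with one copy of $\aux$ (and the external oracle $\cO$), the hypothesis of \cref{def:unizk-rom} is satisfied by the same $\cA$, so there is a polynomial $q(\cdot)$ with
\[ \Pr_{w \gets \Pi.\cE^{\cA(\cdot,\aux)}(x_1, \ldots, x_{k-1}, x)}\left[(x,w) \in \cR\right] \geq \frac{1}{q(\lambda)}. \]
By the independence of the $T$ runs, the probability that every run of $\Pi.\cE$ fails is at most $(1 - 1/q(\lambda))^T \leq e^{-\lambda} \leq \negl(\lambda)$, so $\cE$ succeeds with probability $\geq 1 - \negl(\lambda)$, as required by \cref{def:unizk-rom2}.

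The main obstacle to address, which is the only part specific to the QROM, is verifying that the $T$ runs are actually independent despite all interacting with (what appears to be) the same random oracle $\cO$. This will be handled by appealing to the extractability access from \cref{def:qro} and the extractor structure in \cref{def:nizkpok-qro}: $\Pi.\cE$ internally simulates its own oracle $\ket{\cO_\Ext}$ during its interaction with $\cA$, so each invocation is self-contained and does not leak state across runs. Since the advice $\aux$ is a quantum state that is destroyed after use, supplying $\aux^{\tensor \poly(\lambda)}$ ensures that each of the $T$ runs begins with an identical but independent initial state for $\cA$. Once this independence is cleanly stated, the amplification bound above completes the proof.
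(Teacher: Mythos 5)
Your proof takes the same route as the paper's sketch: define $\cE$ to run the weaker extractor $\Pi.\cE$ independently $\poly(\lambda)$ many times, each time with a fresh copy of the advice $\aux$, and amplify via independence of the trials. In fact your writeup is a bit more careful on two points. First, your amplification inequality $\Pr[\cE \text{ succeeds}] \geq 1 - (1 - 1/q(\lambda))^{T} \geq 1 - \negl(\lambda)$ is stated correctly, whereas the paper's displayed chain passes through an intermediate lower bound of $\bigl(\Pr[\Pi.\cE \text{ succeeds}]\bigr)^{\poly(\lambda)}$, which as written is both the wrong direction and does not chain to the final bound — it looks like a typo for $1 - (1 - \Pr[\ldots])^{\poly(\lambda)}$, and you avoid it entirely. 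Second, you explicitly flag and address the QROM-specific worry that the repeated runs might correlate through the shared oracle $\cO$; your resolution — that by \cref{def:nizkpok-qro} the extractor simulates its own oracle $\ket{\cO_\Ext}$ so each invocation is self-contained — is the right reason and is implicit but unstated in the paper's one-line sketch. So: same approach, but you fill in the two details the paper elides.
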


\begin{proof}[Proof Sketch]
    Given an extractor $\Pi.\cE$ from \cref{def:unizk-rom}, we define a new extractor $\cE$. According to \cref{def:unizk-rom2}, $\cE$ receives multiple copies of the adversary's quantum advice string $\aux$. $\cE$ runs runs $\Pi.\cE$ on the adversary multiple times, each time using a fresh copy of $\aux$. 
    
    Formally, for every $\cA$ with $\aux$, $(x_1, w_1), \ldots, (x_{k-1}, w_{k-1}) \in \cR$, $x$, 
    polynomial $p(\cdot)$, and polynomial $q(\cdot)$ such that
    \begin{align*}
        &\Pr_{\substack{\cO \\ \forall \iota \in [k-1], \: \pi_\iota \gets \sP^\cO(x_\iota, \omega_\iota) \\ \{\widetilde{\pi_\iota}\}_{\iota \in [k]} \gets \cA^\cO(\{x_\iota, \pi_\iota\}_{\iota \in [k-1]}, \aux)}}\left[\begin{array}{cc}
        & \exists~ \cJ \subseteq \{j:\widetilde{x}_j = x\} \text{ s.t. } |\cJ| > |\{i:x_i = x\}| \\
        & \text{ and } {\forall \iota \in \cJ}, \sV^\cO(x, \widetilde{\pi_\iota}) = 1 
        \end{array}\right]
        \geq \frac{1}{p(\lambda)}, \text{ and}\\
        &\Pr_{w \gets \Pi.\cE^\cA(x_1, \ldots, x_{k-1}, x)}\left[(x, w) \in \cR \right] \ge \frac{1}{q(\lambda)},
    \end{align*}
    there exists a polynomial $\poly(\cdot)$ and a negligible function $\negl(\cdot)$ such that the extractor $\cE$ will succeed with probability
    \begin{align*}
        &\Pr_{w \gets \cE^\cA(x_1, \ldots, x_{k-1}, x, \aux^{\tensor \poly(\lambda)})}\left[(x, w) \in \cR \right] \\
        &\geq \left(\Pr_{w \gets \Pi.\cE^{\cA(\cdot, \cdot, \aux)}(x_1, \ldots, x_{k-1}, x)}\left[(x, w) \in \cR \right] \right)^{\poly(\lambda)} \\
        &\ge 1 - \left(1 - \frac{1}{q(\lambda)}\right)^{\poly(\lambda)} \ge 1 - \negl(\lambda).
    \end{align*}
    Thus, $\cE$ satisfies \cref{def:unizk-rom2}.
\end{proof}

From the above lemmas, we conclude that \cref{def:unizk-rom} is the strongest definition. In the following sections, we construct a protocol that satisfies \cref{def:unizk-rom}.

\subsection{Unclonable NIZK Implies Public-Key Quantum Money Mini-Scheme in QROM}

\begin{figure}[!ht]
\begin{framed}
\centering
\begin{minipage}{1.0\textwidth}
\begin{center}
    \underline{Public-Key Quantum Money Mini-Scheme}
\end{center}

\vspace{2mm}

Let $\cO$ be a quantum random oracle. Let $(\cX, \cW)$ be a hard distribution over a language $\cL \in \NP$. Let $\Pi = (\sP, \sV)$ be an unclonable non-interactive zero-knowledge protocol for $\cL$ in the QROM.

\vspace{2mm}
\noindent {\underline{\textsc{Gen}}$^\cO(1^\lambda)$}:
Sample a hard instance-witness pair $(x, w) \gets (\cX, \cY)$ and a proof $\pi \gets \sP^\cO(x, w)$. Output $(\rho_\$ = \pi, s = x)$.

\vspace{1mm}
\noindent {\underline{\textsc{Verify}}$^\cO(\rho_\$, s)$}:
Parse $\rho_\$ = \pi$ and $s = x$. Output $\sV^\cO(x, \pi)$.

\end{minipage}
\end{framed}
\caption{Public-Key Quantum Money Mini-Scheme from an Unclonable Non-Interactive Quantum Protocol}
\label{fig:qmoney-from-unizk-rom}
\end{figure}

\begin{theorem}
\label{thm:unizk-implies-qmoney-qro}
Let $\cO$ be a quantum random oracle. Let $(\cX, \cW)$ be a hard distribution over a language $\cL \in \NP$. Let $\Pi = (\sP, \sV)$ be a $1$-to-$2$ unclonable non-interactive perfectly complete, computationally zero-knowledge protocol for $\cL$ in the QRO model (\cref{def:unizk-rom}).

Then $(\sP, \sV)$ implies a public-key quantum money mini-scheme in the QRO model (\cref{def:qmoney-qro}) as described in \cref{fig:qmoney-from-unizk-rom}.
\end{theorem}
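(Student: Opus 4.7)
The plan is to mirror the CRS-model argument of Theorem \ref{thm:unizk-implies-qmoney}, translated into the QROM. Perfect correctness of the money scheme is immediate from the perfect completeness of $\Pi$: an honestly generated proof $\pi$ is always accepted by $\sV^\cO$ on the same instance $x$ that plays the role of the serial number. The unpredictable serial numbers property follows from the fact that $s = x$ is drawn from the hard distribution $\cX_\lambda$ (a collision on $s$ would let one guess $x$ and hence trivialize the witness-finding problem), or equivalently, from the generic remark after Definition \ref{def:qmoney-qro} that unpredictable serial numbers follow w.l.o.g.\ from unforgeability.

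The heart of the proof is unforgeability. First I would invoke Lemma \ref{lem:uncdef-rom} to upgrade the assumed $1$-to-$2$-unclonable extractability of $\Pi$ (Definition \ref{def:unizk-rom}) to the hard-instance unclonable security of Definition \ref{def:uncnizk-alt-rom}. Then, assuming for contradiction that there is a QPT adversary $\cA$ and a polynomial $p(\cdot)$ such that on input $(\rho_\$, s) \leftarrow \Gen^\cO(1^\lambda)$ it outputs $(\rho_{\$,0}, s_0, \rho_{\$,1}, s_1)$ with $s_0 = s_1 = s$ and both banknotes accepted by $\Ver^\cO$ with probability at least $1/p(\lambda)$, I would build a reduction $\cB$ against Definition \ref{def:uncnizk-alt-rom} as follows. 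The reduction receives $(x, \pi)$ from the NIZK unclonability challenger (who internally samples $(x,w) \leftarrow (\cX_\lambda, \cW_\lambda)$ and $\pi \leftarrow \sP^\cO(x,w)$) together with oracle access to $\cO$. It sets $\rho_\$ := \pi$ and $s := x$, runs $\cA^\cO(\rho_\$, s)$ while relaying every quantum oracle query verbatim to its own $\cO$, and upon $\cA$ returning $(\rho_{\$,0}, s_0, \rho_{\$,1}, s_1)$ with $s_0 = s_1 = x$, outputs $(\pi_1, \pi_2) := (\rho_{\$,0}, \rho_{\$,1})$.

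By the definitions of $\Gen$ and $\Ver$ in Figure \ref{fig:qmoney-from-unizk-rom}, the view of $\cA$ inside $\cB$ is identically distributed to its view in the real unforgeability game, and acceptance of a banknote $(\rho_{\$,i}, x)$ under $\Ver^\cO$ is by construction exactly the computation $\sV^\cO(x, \pi_i) = 1$. Consequently, with probability at least $1/p(\lambda)$, both proofs $\pi_1, \pi_2$ output by $\cB$ are accepting $\Pi$-proofs for the same instance $x \leftarrow \cX_\lambda$ drawn from a hard distribution, contradicting Definition \ref{def:uncnizk-alt-rom}.

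The only subtlety worth checking is oracle bookkeeping: the NIZK unclonability challenger, the reduction, and $\cA$ must all share a single truly random function $\cO$ so that forwarding quantum queries preserves the joint distribution seen by every party and makes the completeness/verification checks used by $\cA$ and $\cB$ coincide. I do not anticipate any deeper obstacle; the ``serial-number-equals-instance'' convention is chosen precisely so that $\cB$ can reinterpret banknotes as NIZK proofs without touching the oracle, and once that is in place the reduction is a one-line rewriting of Figure \ref{fig:qmoney-from-unizk-rom}.
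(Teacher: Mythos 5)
Your proposal is correct and matches the paper's proof essentially step for step: perfect correctness from perfect completeness, and unforgeability via a reduction that reinterprets the banknote as a NIZK proof and the instance as the serial number, invoking Lemma \ref{lem:uncdef-rom} to pass from Definition \ref{def:unizk-rom} to the hard-instance unclonable security of Definition \ref{def:uncnizk-alt-rom}. The extra remarks on unpredictable serial numbers and oracle bookkeeping are consistent with the paper's treatment and introduce no discrepancy.
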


\begin{proof}
    \noindent \textbf{Perfect Correctness.}
    This follows directly from the perfect completeness of $\Pi$.

    \noindent \textbf{Unforgeability.}
    Let $p(\cdot)$ be a polynomial and $\cA$ be a quantum polynomial-time adversary such that for an infinite number of $\lambda \in \bbN^+$,
    \begin{equation*}
        \Pr_{\substack{(\rho_\$, s) \gets \Gen^\cO(1^\lambda) \\ (\rho_{\$, 0}, s_0, \rho_{\$, 1}, s_1) \gets \cA^\cO(\rho_\$, s)}}[s_0 = s_1 = s \: \land \: \Ver^\cO(\rho_{\$, 0}, s_0) = 1 \: \land \: \Ver^\cO(\rho_{\$, 1}, s_1) = 1] \ge \frac{1}{p(\lambda)}.
    \end{equation*}

    We construct a reduction that breaks the uncloneability definition (\cref{def:uncnizk-alt-rom}) which we show (in  \cref{app:defs-reduct}) is implied by our definition (\cref{def:unizk-rom}). The challenger, with access to random oracle $\cO$, samples a hard instance-witness pair $(x, w) \gets (\cX, \cY)$ and a proof $\pi \gets \sP^\cO(x, w)$. The challenger then forwards $(x, \pi)$ to the reduction, which also has oracle access to $\cO$. The reduction then sets $\rho_\$ = \pi$ and $s = x$. The reduction sends $(\rho_\$, s)$ to the adversary $\cA$ who returns back $(\rho_{\$, 0}, s_0, \rho_{\$, 1}, s_1)$. The reduction then parses and sets $\pi_i = \rho_{\$, i}$ for $i \in \{0, 1\}$. The reduction then sends $\pi_0$ and $\pi_1$ back to the challenger.

    When the serial numbers are the same, $s = s_0 = s_1$, we have that the instance will be the same for all the proofs $\pi, \pi_0, \pi_1$. The quantum money state can be parsed as the proof as shown in the construction. When the verification algorithm of the quantum money algorithm accepts both quantum money states $\rho_{\$, 0}$ and $\rho_{\$, 1}$ with respect to $s$, we know that that $\sV^\cO$ would accept both proofs $\pi_0$ and $\pi_1$ with respect to $x$. As such, we will have that the advantage that $\cA$ has at breaking the unforgeability of our quantum money scheme directly translates to the advantage of the reduction at breaking the uncloneability of $\Pi$.
\end{proof}

\subsection{Construction and Analysis of Unclonable-Extractable NIZK in QROM}

\begin{lemma}
    \label{lem:distinct}
    Let $\lambda, k \in \bbN$ and a public-key quantum money mini-scheme $(\NoteGen, \Ver)$ be given. Let points $q_1, \ldots, q_k$ with the following structure be given: a point $q$ contains a serial number $s$ sampled according to $\NoteGen(1^\lambda)$.

    The points $q_1, \ldots, q_k$ must be distinct with overwhelming probability.
\end{lemma}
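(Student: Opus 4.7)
My plan is to reduce the distinctness of the points $q_1, \ldots, q_k$ to the unpredictable serial numbers property of the public-key quantum money mini-scheme (see Definition~\ref{def:qmoney}). Since each point $q_i$ contains a serial number $s_i$ drawn independently from $\NoteGen(1^\lambda)$, two points can only coincide if their serial numbers coincide. Hence it suffices to show that with overwhelming probability the $s_i$'s are pairwise distinct.

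First I will invoke the unpredictable serial numbers property: for every pair of indices $i \neq j$, the samples $s_i$ and $s_j$ are generated by independent invocations of $\NoteGen(1^\lambda)$, and therefore
\begin{equation*}
    \Pr[s_i = s_j] \le \negl(\lambda).
\end{equation*}
Next I will apply a union bound over all $\binom{k}{2}$ pairs $(i,j)$. Since $k$ is a fixed polynomial (or constant) in $\lambda$, we obtain
\begin{equation*}
    \Pr[\exists\, i \neq j : s_i = s_j] \le \binom{k}{2} \cdot \negl(\lambda) = \negl(\lambda).
\end{equation*}
Finally, contrapositively, if all serial numbers $s_1, \ldots, s_k$ are pairwise distinct, then the points $q_1, \ldots, q_k$ (which embed these serial numbers as part of their structure) are pairwise distinct. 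This establishes the lemma.

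I do not anticipate a serious obstacle here; the only subtlety is ensuring that the unpredictable serial numbers property from Definition~\ref{def:qmoney} applies to independent samples, which is explicitly the setting in which it is stated. The lemma is essentially a convenience statement that will be used later when analyzing the QROM construction, where we need to ensure that the random oracle is queried on distinct inputs for each honestly generated proof.
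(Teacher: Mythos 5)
Your proof is correct and takes essentially the same approach as the paper: both reduce distinctness of the points to distinctness of the embedded serial numbers, and invoke the unpredictable serial numbers property of the quantum money mini-scheme. You make the union bound over $\binom{k}{2}$ pairs explicit, which the paper leaves implicit, but the argument is the same.
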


\begin{proof}
    Each point contains a serial number sampled according to the quantum money generation algorithm, $\NoteGen(1^\lambda)$. By the unpredictability of the serial numbers of quantum money (\cref{def:qmoney}),  all $k$ honestly generated serial numbers must be distinct with overwhelming probability.
    Hence, these $k$ points will be distinct with overwhelming probability.
\end{proof}

\begin{figure}[!ht]
\begin{framed}
\centering
\begin{minipage}{1.0\textwidth}
\begin{center}
    \underline{Unclonable NIZK for $\NP$ in the QROM}
\end{center}

\vspace{2mm}

Let $\cO$ be a random oracle. Let $\Pi = (\sP=(\sP.\Com, \sP.\Prove), \sV=(\sV.\Ch, \sV.\Ver))$ be a post-quantum sigma protocol with unpredictable commitments (see \cref{def:sigma}), and $(\NoteGen, \Ver)$ be a public-key quantum money mini-scheme (see \cref{def:qmoney}). Let $\cR$ be the relation with respect to $\cL \in \NP$.

\vspace{1mm}
\noindent {\underline{\textsc{Prove}}$^\cO(x, \omega)$}:
\begin{itemize}
    \item Compute a quantum note and associated serial number $(\rho_\$, s) \leftarrow \NoteGen(1^\lambda)$.
    \item Compute $(\alpha, \zeta) \gets \sP.\Com(x, \omega)$.
    \item Query $\cO$ at $(x, \alpha, s)$ to get $\beta$.
    \item Compute $\gamma \gets \sP.\Prove(x, \omega, \beta, \zeta)$.
    \item Output $\pi = (\rho_\$, s, \alpha, \beta, \gamma)$.
\end{itemize}

\vspace{1mm}
\noindent {\underline{\textsc{Verify}}$^\cO(x, \pi)$}:
\begin{itemize}
    \item Check that $\Ver(\rho_\$, s)$ outputs $1$.
    \item Check that $\cO$ outputs $\beta$ when queried at $(x, \alpha, s)$.
    \item Output the result of $\sV.\Ver(x, \alpha, \beta, \gamma)$.
\end{itemize}

\end{minipage}
\end{framed}
\caption{Unclonable Non-Interactive Quantum Protocol for $\cL \in \NP$ in the Quantum Random Oracle Model}
\label{fig:unizk-rom}
\end{figure}

We now introduce our construction in \cref{fig:unizk-rom} and prove the main theorem of this section.

\begin{theorem}
\label{thm:main-qro}
Let $k(\cdot)$ be a polynomial. Let $\NP$ relation $\cR$ with corresponding language $\cL$ be given.

Let $(\NoteGen, \Ver)$ be a public-key quantum money mini-scheme (\cref{def:qmoney}) and $\Pi = (\sP, \sV)$ be a post-quantum sigma protocol (\cref{def:sigma}).

$(\sP, \sV)$ as defined in \cref{fig:unizk-rom} will be a non-interactive knowledge sound, computationally zero-knowledge, and $(k-1)$-to-$k$-unclonable argument with extraction protocol for $\cL$ in the quantum random oracle model (\cref{def:nizkpok-qro}).
\end{theorem}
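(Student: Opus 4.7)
The plan is to prove the four properties (completeness, computational zero-knowledge, argument of knowledge, and $(k-1)$-to-$k$-unclonable extractability) separately, deriving the first three essentially from the Fiat-Shamir analysis of the modified sigma protocol in \cref{cor:mod-sigma}, and treating unclonable extractability by a case analysis mirroring the CRS argument in \cref{thm:main-crs} but where commitments are replaced by oracle programming. Perfect completeness is immediate from perfect completeness of the underlying sigma protocol $\Pi$ and perfect correctness of the quantum money scheme. For zero-knowledge, note that the banknote $(\rho_\$,s)$ produced by $\NoteGen$ is independent of $(x,w)$, so the simulator samples $(\rho_\$,s)$ itself, invokes the HVZK simulator of $\Pi$ to obtain a transcript $(\alpha,\beta,\gamma)$, and programs the oracle at $(x,\alpha,s)$ to output $\beta$; the programmed point is fresh with overwhelming probability by the unpredictable commitment property of $\Pi$ and the unpredictable serial number property of the money scheme (\cref{lem:distinct}), so the programming is undetectable. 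For the argument-of-knowledge property, observe that the prover for our scheme, modulo the (independently-sampled) banknote, exactly implements the Fiat-Shamir transform applied to the modified sigma protocol from \cref{thm:mod-sigma} whose instance is $(x,\{s\})$; the extractor for our NIZKAoK therefore follows by \cref{cor:mod-sigma} combined with stripping off the banknote, invoking quantum money verification as a side check.

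For $(k-1)$-to-$k$-unclonable extractability, I would define the extractor $\cE$ as follows. Given $x_1,\ldots,x_{k-1},x$, it first produces simulated proofs $\pi_1,\ldots,\pi_{k-1}$ by running the zero-knowledge simulator above (sampling fresh banknotes and programming the oracle at the corresponding points), hands these to $\cA$, receives $k$ candidate proofs $\{\widetilde x_\iota,\widetilde\pi_\iota\}_{\iota\in[k]}$, guesses an index $j^{\star}\in[k]$ uniformly at random, parses $\widetilde\pi_{j^\star}=(\widetilde\rho_{\$,j^\star},\widetilde s_{j^\star},\widetilde\alpha_{j^\star},\widetilde\beta_{j^\star},\widetilde\gamma_{j^\star})$, and then invokes the knowledge extractor guaranteed by \cref{cor:mod-sigma} on $\cA$ restricted to the query $(x,\widetilde\alpha_{j^\star},\widetilde s_{j^\star})$, outputting whatever witness it returns.

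To argue this succeeds with inverse-polynomial probability I would use a two-case split. Assume $\cA$ produces a subset $\cJ\subseteq\{j:\widetilde x_j=x\}$ of size strictly greater than $|\{i:x_i=x\}|$ of accepting proofs with probability $\geq 1/p(\lambda)$. In the first case, there exist $i^\star\in[k-1]$ and distinct $j^\star,\ell^\star\in\cJ$ with $\widetilde s_{j^\star}=\widetilde s_{\ell^\star}=s_{i^\star}$; by a hybrid argument one can guess these indices, embed a challenge banknote from the quantum-money challenger as the $i^\star$-th honest proof (generating the rest honestly, which requires knowing $w_{i^\star}$, so we can still run the honest prover), and output the two banknotes $(\widetilde\rho_{\$,j^\star},\widetilde\rho_{\$,\ell^\star})$ to break the unforgeability of the quantum money mini-scheme. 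In the second case, some $j^\star\in\cJ$ has $\widetilde s_{j^\star}\notin\{s_1,\ldots,s_{k-1}\}$; after switching to simulated proofs for the $(x_\iota,\omega_\iota)$, which is indistinguishable by the zero-knowledge property proved above, the oracle location $(x,\widetilde\alpha_{j^\star},\widetilde s_{j^\star})$ is distinct from every point at which the simulator programmed the oracle (by the freshness of $\widetilde s_{j^\star}$), so $\cA$'s acceptance probability at this point equals the probability that a true random $\beta$ causes $\sV.\Ver(x,\widetilde\alpha_{j^\star},\beta,\widetilde\gamma_{j^\star})=1$. This lets the modified-sigma-protocol knowledge extractor of \cref{cor:mod-sigma} rewind the (quantum) random oracle around $(x,\widetilde\alpha_{j^\star},\widetilde s_{j^\star})$ without disturbing the simulator's programmed points, and return a valid witness for $x$ with inverse-polynomial probability.

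The main obstacle I expect is the careful bookkeeping in the second case: I must justify that the Fiat-Shamir knowledge extractor of~\cite{LiuZ19}, which is stated for a single proof produced by an adversary, still applies when the adversary is a larger circuit that, in addition to producing $\widetilde\pi_{j^\star}$, also produces $k-1$ other proofs and receives several simulated proofs whose construction programs the oracle at other locations. The key technical claim is that programming at points $(x_\iota,\alpha_\iota,s_\iota)$ for $\iota\in[k-1]$ does not interfere with rewinding at the distinct point $(x,\widetilde\alpha_{j^\star},\widetilde s_{j^\star})$, which reduces to the observation (\cref{lem:distinct}) that all the simulator's programmed points are distinct from $(x,\widetilde\alpha_{j^\star},\widetilde s_{j^\star})$ with overwhelming probability because $\widetilde s_{j^\star}\notin\{s_\iota\}_{\iota\in[k-1]}$. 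Once this is established, combining the extractor's success probability with the $1/k$ loss from guessing $j^\star$ and the polynomial loss from the zero-knowledge reduction yields the required inverse-polynomial extraction probability, completing the proof.
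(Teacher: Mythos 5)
Your proposal is correct and follows essentially the same route as the paper: completeness is immediate, zero-knowledge and argument-of-knowledge are reduced to the Fiat-Shamir analysis of the modified sigma protocol from \cref{cor:mod-sigma} (peeling off the independently sampled banknote), and unclonable extractability is handled by the same two-case split (colliding serial numbers yielding a quantum-money forgery, versus a fresh serial number allowing a switch to simulated proofs and then invoking the FS knowledge extractor at the distinct, unprogrammed oracle point). You also correctly isolate the one subtle point the paper must handle, namely that programming at the simulator's points $(x_\iota,\alpha_\iota,s_\iota)$ does not disturb rewinding at $(x,\widetilde{\alpha}_{j^\star},\widetilde{s}_{j^\star})$, which the paper likewise resolves via \cref{lem:distinct} and the serial-number freshness guarantee.
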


\begin{proof}
    Let the parameters and primitives be as given in the theorem statement. We argue that completeness follows from the protocol construction in~\cref{fig:unizk-rom}, and we prove the remaining properties below.\\

    \noindent \textbf{Argument of Knowledge.}
    Let $\Ext_{FS}$ be the extractor for $\Pi'$ in \cref{cor:mod-sigma} (where $\Pi$ instantiates \cref{thm:mod-sigma}).
    Let $\cR_{FS}$ be the relation for $\Pi'$ with respect to $\cR$.
    Let constant $c_{FS}$, polynomial $p_{FS}(\cdot)$, and negligible functions $\negl_{0, FS}(\cdot), \negl_{1, FS}(\cdot)$ be given such that for any quantum $\cA_{FS}$ and any $(x, \cS)$ with associated $\lambda \in \bbN$ satisfying
    \begin{equation}
        \Pr_{\substack{\cO \\ \pi_{FS} \gets \cA_{FS}^{\ket{\cO}}(x, \cS)}}[\sV_{FS}^\cO((x, \cS), \pi_{FS}) = 1] \ge \negl_{0, FS}(\lambda) \label{eqn:sigma-pok}
    \end{equation}
    we have
    \begin{align*}
        &\Pr[(x, \Ext_{FS}^{\cA_{FS}^{\ket{\cO}}(x, \cS)}(x, \cS)) \in \cR_{FS}]  \\
        &\ge \frac{1}{p_{FS}(\lambda)} \cdot \left( \Pr_{\substack{\cO \\ \pi_{FS} \gets \cA_{FS}^{\ket{\cO}}(x, \cS)}}[\sV_{FS}^\cO((x, \cS), \pi_{FS}) = 1] - \negl_{0, FS}(\lambda)\right)^{c_{FS}} - \negl_{1, FS}(\lambda).
    \end{align*}
    
    Let $\cS$ be the distribution of serial numbers as output by $\NoteGen(1^\lambda)$.
    We define $\Ext$~\footnote{An extractor whose local code is implementable as a simple unitary which allows for straightforward rewinding.} with oracle-access to $\Ext_{FS}$, $\cO$, and some $\cA$ as follows:
    \begin{addmargin}[2em]{2em} 
        \noindent \emph{Hardwired with}: $\cS$.
    
        \noindent \emph{Input}: $x$.

        \noindent \textbf{(1)} Given an oracle-query $(x, \alpha, s)$ from $\cA$: send $(x, \alpha, s)$ to $\cO$, receive $\beta$ from $\cO$, and send $\beta$ to $\cA$.

        \noindent \textbf{(2)} Upon receiving $\pi = (\rho_\$, s, \alpha, \beta, \gamma)$ from $\cA$: send $\pi_{FS} = ((x, \alpha, s), \beta, \gamma)$ to $\Ext_{FS}$.

        \noindent \textbf{(3)} Output the result of $\Ext_{FS}$ as $w$.
    \end{addmargin}

    We define the following set of parameters: $c = c_{FS}$, $p(\cdot) = p_{FS}(\cdot)$, $\negl_0(\cdot) = \negl_{0, FS}(\cdot)$ and $\negl_1(\cdot) = \negl_{1, FS}(\cdot)$.

    \vspace{0.3cm}

    Let polynomial-size quantum circuit $\cA$ and $x$ be given such that
    \begin{equation*}
        \Pr_{\substack{\cO \\ \pi \gets \cA^{\ket{\cO}}(x)}}[\sV^\cO(x, \pi) = 1] \ge \negl_0(\lambda).
    \end{equation*}
    Let $\cA_{FS}$ be defined with oracle-access to some $\cA$ and $\cO$ as follows:
    \begin{addmargin}[2em]{2em} 
        \noindent \emph{Input}: $x$, $\cS$.

        \noindent \textbf{(1)} Given a query $(x, \alpha, s)$ from $\cA$: send $(x, \alpha, s)$ to $\cO$, receive $\beta$ from $\cO$, and send $\beta$ to $\cA$.

        \noindent \textbf{(2)} Upon receiving $\pi = (\rho_\$, s, \alpha, \beta, \gamma)$ from $\cA$: output $\pi_{FS} = ((x,\alpha, s), \beta, \gamma)$.
    \end{addmargin}
    By the structure of our proof and definition of our verifier, this means that
    \begin{align*}
        \Pr_{\substack{\cO \\ \pi_{FS} \gets \cA_{FS}^{\cA(x), \ket{\cO}}(x, \cS)}}[\sV_{FS}^{\cO}((x, \cS), \pi_{FS}) = 1] &\ge \Pr_{\substack{\cO \\ (\rho_\$, \pi_{FS}) \gets \cA^{\ket{\cO}}(x, \cS)}}[\sV_{FS}^{\cO}((x, \cS), \pi_{FS}) = 1 \: \wedge \: \Ver(\rho_\$, s) = 1]\\
        &= \Pr_{\substack{\cO \\ \pi \gets \cA^{\ket{\cO}}(x)}}[\sV^\cO(x, \pi) = 1] \ge \negl_0(\lambda) = \negl_{0, FS}(\lambda)
    \end{align*}
    which satisfies the constraint in \cref{eqn:sigma-pok}. This means we have, when combined with our definition of $\Ext$ and $\cS$, that
    \begin{align*}
        &\Pr[(x, \Ext^{\Ext_{FS}(x), \ket{\cO}, \cA(x)}(x)) \in \cR] = \Pr[((x, \cS), \Ext_{FS}^{\cA_{FS}^{\cA(x), \ket{\cO}}(x, \cS)}(x, \cS)) \in \cR_{FS}]  \\
        &\ge \frac{1}{p_{FS}(\lambda)} \cdot \left( \Pr_{\substack{\cO \\ \pi_{FS} \gets \cA_{FS}^{\cA(x), \ket{\cO}}(x, \cS)}}[\sV_{FS}^\cO((x, \cS), \pi_{FS}) = 1] - \negl_{0, FS}(\lambda)\right)^{c_{FS}} - \negl_{1, FS}(\lambda)\\
        &\ge \frac{1}{p_{FS}(\lambda)} \cdot \left( \Pr_{\substack{\cO \\ \pi \gets \cA^{\ket{\cO}}(x)}}[\sV^\cO(x, \pi) = 1] - \negl_{0, FS}(\lambda)\right)^{c_{FS}} - \negl_{1, FS}(\lambda)\\
        &= \frac{1}{p(\lambda)} \cdot \left( \Pr_{\substack{\cO \\ \pi \gets \cA^{\ket{\cO}}(x)}}[\sV^\cO(x, \pi) = 1] - \negl_0(\lambda)\right)^c - \negl_1(\lambda).
    \end{align*}
    Thus showing that our protocol is an argument of knowledge protocol.\\

    \noindent \textbf{Zero-Knowledge.} Let $\Sim_{FS}$ be the simulator for $\Pi'$ in \cref{cor:mod-sigma} (where $\Pi$ instantiates \cref{thm:mod-sigma}).
    Let $\cR_{FS}$ be the relation for $\Pi'$ with respect to $\cR$.
    We define $\Sim$ with oracle-access to $\Sim_{FS}$ and program access to some random oracle $\cO$ as follows:
    \begin{addmargin}[2em]{2em} 
        \noindent \emph{Input}: $x$ (ignores any witnesses it may receive).

        \noindent \textbf{(1)} Sample $(\rho_\$, s) \gets \NoteGen(1^\lambda)$. 
        
        \noindent \textbf{(2)} Let $\cS$ be the distribution where all probability mass is on $s$.

        \noindent \textbf{(3)} Compute $((x, \alpha, s), \beta, \gamma) \gets \Pi.\Sim(x, \cS)$. Allow $\Pi.\Sim$ to program $\cO$ at $(x, \alpha, s)$ to return $\beta$.

        \noindent \textbf{(5)} Output $\pi = (\rho_\$, s, \alpha, \beta, \gamma)$.
    \end{addmargin}

    Let an oracle-aided distinguisher $\cD$ which can only make queries $(x, w) \in \cR$, and a polynomial $p(\cdot)$ be given such that
    \begin{equation}
        \left\vert \Pr\left[\cD^{\Sim, \ket{\cO}}(1^\lambda) = 1\right] - \Pr_\cO\left[\cD^{\sP^{\cO}, \ket{\cO}}(1^\lambda) = 1\right] \right\vert \ge \frac{1}{p(\lambda)}.
    \end{equation}
    We define a reduction to the zero-knowledge property of $\Pi'$ as follows:
    \begin{addmargin}[2em]{2em} 
        \noindent \underline{Reduction}: to zero-knowledge of $\Pi'$ given oracle access to $\cD$ and program access to $\cO$.

        \noindent For every $(x, w)$ from $\cD$:

        \noindent \textbf{(1)} Sample $(\rho_\$, s) \gets \NoteGen(1^\lambda)$. 
        
        \noindent \textbf{(2)} Let $\cS$ be the distribution where all probability mass is on $s$.

        \noindent \textbf{(3)} Send $((x, \cS), w)$ to the challenger. Receive $((x, \alpha, s), \beta, \gamma)$ from the challenger. 
        The challenger will have already programmed $\cO$ at $(x, \alpha, s)$ to return $\beta$.

        \noindent \textbf{(4)} Output $\pi = (\rho_\$, s, \alpha, \beta, \gamma)$.

        \noindent Output the result of $\cD$.
    \end{addmargin}
    The view of $\cD$ matches that of our protocol in \cref{fig:unizk-rom} or $\Sim$. As such, our reduction should have the same advantage at breaking the zero-knowledge property of $\Pi'$. We reach a contradiction, hence our protocol must be zero-knowledge.\\

    \noindent \textbf{Unclonable Extractability.}
    Let $\Ext$ be the quantum circuit of the extractor we defined earlier (in our proof that \cref{fig:unizk-rom} is an argument of knowledge). Let $\Sim$ be the quantum circuit of the simulator that we defined earlier (in our proof that \cref{fig:unizk-rom} is a zero-knowledge protocol). We define a simulator for our extractor, $\Sim\Ext$, which interacts with some $\cA$ and has oracle-access to $\cO$ as follows:
    \begin{addmargin}[2em]{2em} 
        \noindent \emph{Hardwired with}: $x_1, \ldots, x_{k-1}$, $x$

        \noindent \textbf{(1)} Compute $\pi_\iota \gets \Sim(x_\iota)$ for $\iota \in [k-1]$ where we store all points $\Sim$ would program into a list $\cP$.

        \noindent \textbf{(2)} Send $\{x_\iota, \pi_\iota\}_{\iota \in [k-1]}$ to $\cA$.

        \noindent \textbf{(3)} For every query from $\cA$, if the query is in $\cP$, then reply with the answer from $\cP$. Else, forward the query to $\cO$ and send the answer back to $\cA$.


    \end{addmargin}
    
    \noindent We now define our extractor $\cE$ with oracle-access to some $\cA$ as follows:
    \begin{addmargin}[2em]{2em} 
        \noindent \emph{Hardwired with}: some choice of $x_1, \ldots, x_{k-1}$, $x$.


        \noindent \textbf{(1)} Instantiates a simulatable and extractable random oracle $\cO$. Runs $\Ext$ on $\cO$ throughout the interaction with $\cA$ (which may involve rewinding, in which case we would rewind $\cA$ and repeat the following steps). 




        \noindent \textbf{(2)} Run $\Sim\Ext^\cO(x_1, \ldots, x_{k-1}, x)$ which interacts with $\cA$.

        \noindent \textbf{(3)} Receive $\{\widetilde{x_\iota}, \widetilde{\pi_\iota}\}_{\iota \in [k]}$ from $\cA$. 

        \noindent \textbf{(4)} Samples $\ell \in [k]$ uniformly at random.
        Send $\widetilde{\pi_\ell}$ to $\Ext$.

        \noindent \textbf{(5)} Outputs the result of $\Ext$ as $w$.
    \end{addmargin}

    Let $\cA$, $(x_1, w_1), \ldots, (x_{k-1}, w_{k-1}) \in \cR$, $x$, polynomial $p(\cdot)$, and negligible function $\negl(\cdot)$ be given such that $\cA$ outputs more accepting proofs for $x$ than $\cA$ received, and yet the extractor $\cE$ is unable to extract a valid witness for $x$ from $\cA$.
    Restated more formally, that is that
    \begin{align}
        &\Pr_{\substack{\cO \\ \forall \iota \in [k-1], \: \pi_\iota \gets \sP^\cO(x_\iota, w_\iota) \\ \{\widetilde{{x}_\iota},\widetilde{\pi_\iota}\}_{\iota \in [k]} \gets \cA^\cO(\{x_\iota, \pi_\iota\}_{\iota \in [k-1]})}}
        \left[
        \begin{array}{cc}
        & \exists~ \cJ \subseteq \{j:\widetilde{x}_j = x\} \text{ s.t. } |\cJ| > |\{i:x_i = x\}| \\
        & \text{ and } {\forall \iota \in \cJ}, \sV^\cO(x, \widetilde{\pi_\iota}) = 1 
        \end{array}
        \right]
        \geq \frac{1}{p(\lambda)},\label{eqn:vaccept-unclon-qro}
    \end{align}
    and for all polynomials $p'(\cdot)$ (there are infinitely many $\lambda$) such that
    \begin{align}
        &\Pr_{w \gets \cE^\cA(x_1, \ldots, x_{k-1}, x)}\left[(x, w) \in \cR \right] \le \frac{1}{p'(\lambda)}. \label{eqn:ext-unclon}
    \end{align}
    We parse the output of the adversary $\cA$ as $\widetilde{\pi_\iota} = (\widetilde{\rho_{\$,\iota}}, \widetilde{s_\iota}, \widetilde{\alpha_{\iota}}, \widetilde{\beta_{\iota}}, \widetilde{\gamma_{\iota}})$ for all $\iota \in [k]$.

    Given \cref{eqn:vaccept-unclon-qro}, we may be in one of the two following cases: either $\cA$ generates two accepting proofs which have the same serial number as a honestly generated proof (for an infinite set of $\lambda$), or $\cA$ does not (for an infinite set of $\lambda$).
    We consider that either of these two scenarios occur with at least $1/(2p(\lambda))$ probability and show that each reaches a contradiction.

    \noindent \underline{Scenario One}

    Say that (for an infinite set of $\lambda$) $\cA$ generates two accepting proofs which have the same serial number as an honestly generated proof with at least $1/(2p(\lambda))$ probability. Symbolically,
\begin{equation}
    \label{eqn:vaccept+serialsame-qro}
    \Pr_{\substack{\cO \\ \forall \iota \in [k-1], \: \pi_\iota \gets \sP^\cO(x_\iota, w_\iota) \\ \{\widetilde{x_\iota}, \widetilde{\pi_\iota}\}_{\iota \in [k]} \gets \cA^\cO(\{x_\iota, \pi_\iota\}_{\iota \in [k-1]})}}\left[\begin{array}{cc}
        & \exists~ \cJ \subseteq \{j:\widetilde{x}_j = x\} \text{ s.t. } |\cJ| > |\{i:x_i = x\}| \\
        & \text{ and } {\forall \iota \in \cJ}, \sV^\cO(x, \widetilde{\pi_\iota}) = 1 \\
        & \text{ and } \exists i^* \in [k-1] \:\: \exists j^*, \ell^* \in \cJ \text{ s.t. } s_{i^*} = \widetilde{s_{j^*}} = \widetilde{s_{\ell^*}}
        \end{array}\right]
    \geq \frac{1}{2p(\lambda)}.
\end{equation}
Through a hybrid argument, we can get a similar event with fixed indices $i^*$, $j^*$, and $\ell^*$ which belong to their respective sets with an advantage of $1/(2k^3p(\lambda))$.
By using the advantage of $\cA$ in this game, we can show a reduction that breaks the unforgeability of the quantum money scheme. We will now outline this reduction.
\begin{addmargin}[2em]{2em} 
        \noindent \underline{Reduction}: to unforgeability of quantum money scheme given oracle access to $\cA$ and $\cO$.

        \noindent \emph{Hardwired with}: $(x_1, w_1), \ldots, (x_{k-1}, w_{k-1})$, $x$, $i^*, j^*, \ell^*$.

        \noindent \textbf{(1)} Receive $(\rho_\$, s)$ from the challenger.

        \noindent \textbf{(2)} Define $\rho_{\$, i^*} = \rho_\$$ and $s_{i^*} = s$. Sample $(\rho_{\$, \iota}, s_\iota) \gets \NoteGen(1^\lambda)$ for $\iota \in [k-1] \setminus \{i^*\}$. Compute $(\alpha_\iota, \zeta_\iota) \gets \Pi.\sP.\Com(x_\iota, w_\iota)$, query $\cO$ at $(x_\iota, \alpha_\iota, s_\iota)$ to get $\beta_\iota$, compute $\gamma_\iota \gets \Pi.\sP.\Prove(x_\iota, w_\iota, \beta_\iota, \zeta_\iota)$, and define $\pi_\iota = (\rho_{\$, \iota}, s_\iota, \alpha_\iota, \beta_\iota, \gamma_\iota)$ for $\iota \in [k-1]$.

        \noindent \textbf{(3)} Send $\{x_\iota, \pi_\iota\}_{\iota \in [k-1]}$ to $\cA$.

        \noindent \textbf{(4)} Receive $\{\widetilde{\pi_\iota}\}_{\iota \in [k]}$ from $\cA$.

        \noindent \textbf{(5)} Send $(\widetilde{\rho_\$}_{j^*}, \widetilde{\rho_\$}_{\ell^*})$ to the challenger.
    \end{addmargin}
Given the event in \cref{eqn:vaccept+serialsame-qro} holds (for the afore mentioned fixed indices), then the reduction will return two quantum money states with the same serial number as the challenger sent.
With advantage $1/(2k^3p(\lambda))$, the reduction will succeed at breaking unforgeability of the quantum money scheme, thus reaching a contradiction.

    \noindent \underline{Scenario Two}

    Alternatively, say that (for an infinite set of $\lambda$) $\cA$ does not generate two accepting proofs which have the same serial number as an honestly generated proof with at least $1/(2p(\lambda))$ probability. By the pigeon-hole principle, this means that $\cA$ generates an accepting proof with a serial number which is not amongst the ones it received. 
    In summary, we have that
    \begin{equation}
        \label{eqn:vaccept+serialdiff-qro}
        \Pr_{\substack{\cO \\ \forall \iota \in [k-1], \: \pi_\iota \gets \sP^\cO(x_\iota, w_\iota) \\ \{\widetilde{x_\iota}, \widetilde{\pi_\iota}\}_{\iota \in [k]} \gets \cA^\cO(\{x_\iota, \pi_\iota\}_{\iota \in [k-1]})}}\left[\begin{array}{cc}
        & \exists~ \cJ \subseteq \{j:\widetilde{x}_j = x\} \text{ s.t. } |\cJ| > |\{i:x_i = x\}| \\
        & \text{ and } {\forall \iota \in \cJ}, \sV^\cO(x, \widetilde{\pi_\iota}) = 1 \\
        & \text{ and } \exists j^* \in \cJ \text{ s.t. } \widetilde{s_{j^*}} \not\in \{s_\iota\}_{\iota \in [k-1]} 
        \end{array} \right]
        \geq \frac{1}{2p(\lambda)}.
    \end{equation}
    Through an averaging argument, we can get a similar event with a fixed index $j^*$ that belongs to the event's set $\cJ$ with an advantage of $1/(2kp(\lambda))$. We will now switch to a hybrid where we provide $\cA$ with simulated proofs.

    \begin{claim}
        \label{claim:mult-sim-qro}
        There exists a polynomial $q(\cdot)$ such that
        \begin{equation}
            \label{eqn:mult-sim-qro}
            \Pr_{\substack{\cO \\ \{\pi_\iota\}_{\iota \in [k-1]} \gets \Sim\Ext^\cO(x_1, \ldots, x_{k-1}) \\ \{\widetilde{x_\iota}, \widetilde{\pi_\iota}\}_{\iota \in [k]} \gets \cA^{\Sim\Ext^\cO}(\{x_\iota, \pi_\iota\}_{\iota \in [k-1]})}}\left[\begin{array}{cc}
            & \exists~ \cJ \subseteq \{j:\widetilde{x}_j = x\} \text{ s.t. } |\cJ| > |\{i:x_i = x\}| \\
            & \text{ and } {\forall \iota \in \cJ}, \sV^{\Sim\Ext^\cO}(x, \widetilde{\pi_\iota}) = 1 \\
            & \text{ and } j^* \in \cJ \\
            & \text{ and } \widetilde{s_{j^*}} \not\in \{s_\iota\}_{\iota \in [k-1]} 
            \end{array} \right] \geq \frac{1}{q(\lambda)}.
        \end{equation}
    \end{claim}

    We will later see a proof of \cref{claim:mult-sim-qro}. For now, assuming that this claim holds, we can define an adversary from which $\Ext$ can extract a valid witness for $x$.

    \begin{claim}
        \label{claim:mult-sim-ext-qro}
        There exists a polynomial $q'(\cdot)$ such that
        \begin{equation}
            \label{eqn:mult-sim-ext-qro}
            \Pr_{w \gets \cE^\cA(x_1, \ldots, x_{k-1}, x)}\left[(x, w) \in \cR \right] \ge \frac{1}{q'(\lambda)}.
        \end{equation}
    \end{claim}

    We will soon see a proof for \cref{claim:mult-sim-ext-qro}. Meanwhile, if this claim is true, then we will have a direct contradiction with \cref{eqn:ext-unclon}. Thus, all that remains to be proven are the two claims: \cref{claim:mult-sim-qro} and \cref{claim:mult-sim-ext-qro}. We start by proving the former claim.

    \begin{proof}[Proof of \cref{claim:mult-sim-qro}]
        We first need to argue that our strategy is well-defined, that we will be able to independently program these $k$ points. Then we can argue the indistinguishability of switching one-by-one to simulated proofs.
        We will argue that our simulator will run in expected polynomial time.
        By \cref{lem:distinct}, the $k$ points which our simulator will program will be distinct with overwhelming probability.
        Furthermore, since we assumed that our quantum random oracle can be programmed at multiple distinct points~\cref{def:qro}, our simulator is well-defined.

        We now argue indistinguishability of the simulated proofs from the honestly generated proofs via a hybrid argument. 
        Suppose for sake of contradiction that the probability difference between \cref{eqn:vaccept+serialdiff-qro} and \cref{eqn:mult-sim-qro} was $1/p'(\lambda)$ for some polynomial $p'(\cdot)$. We construct a series of consecutive hybrids for each $i \in [k-1]$ where we switch the $i$th proof from prover generated to simulated.
        By this hybrid argument, there must be some position $\ell^* \in [k-1]$ where switching the $\ell^*$th proof has a probability difference of at least $1/(kp'(\lambda))$. We now formalize a reduction which can distinguish between these two settings:
        \begin{addmargin}[2em]{2em} 
            \noindent \underline{Reduction}: to zero-knowledge of our protocol given oracle access to $\cA$ and some $\cO$.

            \noindent \emph{Hardwired with}: $(x_1, w_1), \ldots, (x_{\ell-1}, w_{\ell-1})$, $x$, $j^*$, $\ell^*$.

            \noindent \textbf{(1)} Receive (real or simulated) $\pi$ from the challenger and mediated query access to a (real or simulated, respectively) random oracle $\cO$.

            \noindent \textbf{(2)} Define $\pi_{\ell^*} = \pi$. Compute $\pi_\iota \gets \sP^\cO(x_\iota, w_\iota)$ for $\iota \in [\ell^* - 1]$. Compute $\pi_\iota \gets \Sim(x_\iota)$ for $\iota \in \{\ell^*+1, \ldots, k-1\}$ where we store all points $\Sim$ would program into a list $\cP$.

            \noindent \textbf{(3)} Send $\{x_\iota, \pi_\iota\}_{\iota \in [k-1]}$ to $\cA$.

            \noindent \textbf{(4)} For every query from $\cA$, if the query is in $\cP$, then reply with the answer from $\cP$. Else, forward the query to $\cO$ and send the answer back to $\cA$. Let $\widehat{\cO}$ denote this modified random oracle.

            \noindent \textbf{(5)} Receive $\{\widetilde{\pi_\iota}\}_{\iota \in [k]}$ from $\cA$.

            \noindent \textbf{(6)} If the following event holds:
            \begin{equation*}
                \begin{array}{cc}
                & \exists~ \cJ \subseteq \{j:\widetilde{x}_j = x\} \text{ s.t. } |\cJ| > |\{i:x_i = x\}| \\
                & \text{ and } {\forall \iota \in \cJ}, \sV^{\widehat{\cO}}(x, \widetilde{\pi_\iota}) = 1 \\
                & \text{ and } j^* \in \cJ \\
                & \text{ and } \widetilde{s_{j^*}} \not\in \{s_\iota\}_{\iota \in [k-1]} 
                \end{array},
            \end{equation*}
            then output $1$. Else, output $0$.
        \end{addmargin}

        We first argue that the view that the reduction provides to $\cA$ matches one of the games: where all proofs up to the $\ell^*$th are simulated or where all proofs up to and including the $\ell^*$th are simulated.
        By \cref{lem:distinct}, the point computed or programmed by the challenger will be distinct from the points which the reduction programs.
        As such, the reduction is allowed to modify~\footnote{In more detail, the reduction can construct a unitary which runs the classical code in step \noindent \textbf{(4)}. This can then be applied in superposition to a query sent by $\cA$.} the oracle which $\cA$ interfaces with (see step \noindent \textbf{(4)}).
        In summary, $\cA$ will be provided access to an oracle that is consistent with all of the proofs it receives.

        We also mention that $\sV$ always receives access to the same oracle that $\cA$ receives. 

        Given that $\cA$ has a view which directly matches its expected view in either game, then the reduction's advantage is the same as $\cA$'s advantage which is at least $1/(kp'(\lambda))$. This is a contradiction with the zero-knowledge property of our protocol. Thus, our original claim must be true.
    \end{proof}

    Now, we continue on to proving the latter claim.

    \begin{proof}[Proof of \cref{claim:mult-sim-ext-qro}]
        We define
        a reduction to the argument of knowledge property of our protocol, which we subsequently will refer to as $\widetilde{\sP}$ (borrowing notation from the definition of AoK in \cref{def:nizkpok-qro}),
        as follows:
        \begin{addmargin}[2em]{2em} 
            \noindent \underline{Reduction}: to argument of knowledge given oracle access to $\cA$ and some $\cO$.

            \noindent \emph{Hardwired with}: $x_1, \ldots, x_{k-1}$, $x$, $j^*$

            \noindent \textbf{(1)} Receive query access to a (real or extractable) random oracle $\cO$.




            \noindent \textbf{(2)} Run $\Sim\Ext^\cO(x_1, \ldots, x_{k-1}, x)$ which interacts with $\cA$.

            \noindent \textbf{(3)} Receive $\{\widetilde{\pi_\iota}\}_{\iota \in [k]}$ from $\cA$.

            \noindent \textbf{(4)} Output $\widetilde{\pi_{j^*}}$.
        \end{addmargin}

        First we must argue that $\cA$'s and $\sV$'s view remains identical to \cref{eqn:mult-sim-qro}.
        The oracle which $\cA$ interfaces with (see $\Sim\Ext$) will be consistent with all of the proofs it receives. Hence
        \begin{equation}
            \label{eqn:mult-sim-qro2}
            \Pr_{\substack{\cO \\ \widetilde{\pi_{j^*}} \gets \widetilde{\sP}^{\cA,\ket{\cO}}(1^\lambda)}}\left[\begin{array}{cc}
            & \exists~ \cJ \subseteq \{j:\widetilde{x}_j = x\} \text{ s.t. } |\cJ| > |\{i:x_i = x\}| \\
            & \text{ and } {\forall \iota \in \cJ}, \sV^{\Sim\Ext^\cO}(x, \widetilde{\pi_\iota}) = 1 \\
            & \text{ and } j^* \in \cJ \\
            & \text{ and } \widetilde{s_{j^*}} \not\in \{s_\iota\}_{\iota \in [k-1]} 
            \end{array} \right] \geq \frac{1}{q(\lambda)}.
        \end{equation}

        Now, by the event in \cref{eqn:mult-sim-qro2}, the proof output by $\widetilde{P}$ has a serial number which differs from the serial numbers in the proofs provided to $\cA$. As such, even when given oracle acccess to the unmodified random oracle, the verification algorithm $\sV$ should continue to accept the proof output by $\widetilde{P}$. Thus, we have that
        \begin{align*}
            &\Pr_{\substack{\cO \\ \widetilde{\pi_{j^*}} \gets \widetilde{\sP}^{\cA,\ket{\cO}}(1^\lambda)}}\left[\sV^{\cO}(x, \widetilde{\pi_{j^*}}) = 1\right] \\
            &\ge \Pr_{\substack{\cO \\ \widetilde{\pi_{j^*}} \gets \widetilde{\sP}^{\cA,\ket{\cO}}(1^\lambda)}}\left[\begin{array}{cc}
            & \exists~ \cJ \subseteq \{j:\widetilde{x}_j = x\} \text{ s.t. } |\cJ| > |\{i:x_i = x\}| \\
            & \text{ and } {\forall \iota \in \cJ}, \sV^{\Sim\Ext^\cO}(x, \widetilde{\pi_\iota}) = 1 \\
            & \text{ and } j^* \in \cJ \\
            & \text{ and } \widetilde{s_{j^*}} \not\in \{s_\iota\}_{\iota \in [k-1]} 
            \end{array} \right] \geq \frac{1}{q(\lambda)}.
        \end{align*}

        By the definition of an argument of knowledge (\cref{def:nizkpok-qro}) which have some parameters polynomial $p^*(\cdot)$ and negligible functions $\negl_0(\cdot)$ and $\negl_1(\cdot)$, we have that there exists some polynomial $q'(\cdot)$ such that
        \begin{align*}
            \Pr[(x, \Ext^{\widetilde{\sP}^{\cA, \ket{\cO_\Ext}}(x)}(x)) \in \cR_\lambda]  &\ge \frac{1}{p^*(\lambda)} \cdot \left(\Pr_{\substack{\cO \\ \widetilde{\pi_{j^*}} \gets \widetilde{\sP}^{\cA,\ket{\cO}}(1^\lambda)}}\left[\sV^{\cO}(x, \widetilde{\pi_{j^*}}) = 1\right]  - \negl_0(\lambda)\right)^c - \negl_1(\lambda) \\
            &\ge \frac{1}{q'(\lambda)}
        \end{align*}
        where $\Ext$ simulates the random oracle $\ket{\cO_\Ext}$.
        We now compare the reduction $\widetilde{\sP}$ to the extractor $\cE$. 
        The extractor $\Ext$ with oracle access to $\widetilde{\sP}$ has an identical view to that in $\cE$ when $\ell = j^*$. We will have that
        \begin{align*}
            \Pr_{w \gets \cE^\cA(x_1, \ldots, x_{k-1}, x)}\left[(x, w) \in \cR \: \vert \: \ell = j^*\right] &= \Pr[(x, \Ext^{\widetilde{\sP}^{\cA, \ket{\cO_\Ext}}(x)}(x)) \in \cR_\lambda] \ge \frac{1}{q'(\lambda)}
        \end{align*}
        and hence
        \begin{align*}
            &\Pr_{w \gets \cE^\cA(x_1, \ldots, x_{k-1}, x)}\left[(x, w) \in \cR \right] \\
            &\ge \Pr_{w \gets \cE^\cA(x_1, \ldots, x_{k-1}, x)}\left[(x, w) \in \cR \: \vert \: \ell = j^*\right] \cdot \Pr[\ell = j^*] \ge \frac{1}{kq'(\lambda)}
        \end{align*}
        which completes the proof of our claim.
    \end{proof}

    By completing the proofs of our claims, we have concluding the proof of our theorem statement.
\end{proof}

\begin{corollary}
    Assuming the injective one-way functions exist, and post-quantum iO exists, there exists a non-interactive knowledge sound, computationally zero-knowledge, and $(k-1)$-to-$k$-unclonable with extraction protocol for $\NP$ in the quantum random oracle model (\cref{def:unizk-rom}).
\end{corollary}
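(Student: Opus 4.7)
The plan is to invoke \cref{thm:main-qro} by instantiating its two required primitives (a public-key quantum money mini-scheme and a post-quantum sigma protocol for $\NP$ with unpredictable commitments) from the hypothesized assumptions. This is purely a combination step: \cref{thm:main-qro} already does the technical heavy lifting, so the corollary reduces to checking that both building blocks are instantiable from injective one-way functions together with post-quantum $i\mathcal{O}$.

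First, I would instantiate the public-key quantum money mini-scheme $(\NoteGen, \Ver)$ by directly applying \cref{thm:qmoney}, whose hypotheses — injective one-way functions and post-quantum $i\mathcal{O}$ — match the corollary's assumptions verbatim. Second, I would obtain a post-quantum sigma protocol $\Pi = (\sP, \sV)$ for $\NP$ as required by \cref{def:sigma}. Since injective one-way functions are in particular post-quantum one-way functions, standard constructions (e.g., Blum's Hamiltonicity protocol instantiated with Naor's statistically-binding post-quantum commitment) yield a three-round HVZK argument of knowledge satisfying the syntactic requirements; the quantum extractor and quantum HVZK simulator follow from the rewinding/extractor machinery established in~\cite{LiuZ19} and cited in \cref{def:sigma}. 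The unpredictable-commitment property is without loss of generality (as remarked inside \cref{def:sigma}): one may append a uniformly random tag to the commitment message that the verifier ignores.

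With these two primitives in hand, I would conclude by plugging them into \cref{fig:unizk-rom} and invoking \cref{thm:main-qro}, which yields a non-interactive knowledge-sound, computationally zero-knowledge, $(k-1)$-to-$k$-unclonable-with-extraction protocol for $\NP$ in the QROM — exactly the statement of the corollary. Since $k$ can be taken to be any fixed polynomial and \cref{thm:main-qro} is parameterized by $k(\cdot)$, the corollary is obtained for every such $k$.

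The only subtle step is ensuring the sigma protocol we select truly meets \emph{all} the post-quantum properties itemized in \cref{def:sigma} — in particular the quantum argument-of-knowledge clause with the specific polynomial-degree loss and the two negligible slacks $\negl_0, \negl_1$. This is not an issue of assumptions but of bookkeeping, and it is handled by the standard post-quantum rewinding analysis of sigma protocols from post-quantum OWFs; I would state it as a cited fact rather than re-derive it. Everything else is routine invocation of prior theorems in the paper.
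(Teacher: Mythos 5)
Your proposal is correct and follows essentially the same route as the paper: instantiate the quantum money mini-scheme via \cref{thm:qmoney} and then invoke \cref{thm:main-qro}. In fact you are slightly more careful than the paper's own (very terse) proof, which cites only \cref{thm:qmoney} and \cref{thm:main-qro} without explicitly accounting for where the post-quantum sigma protocol required by \cref{thm:main-qro} comes from; your paragraph instantiating it from injective one-way functions (Blum-style protocol with a statistically-binding commitment, plus the unpredictable-commitment transformation already noted in \cref{def:sigma}, and the quantum HVZK simulator and knowledge extractor from~\cite{LiuZ19}) fills that small gap. The one place where you should keep a careful eye is the quantum argument-of-knowledge clause: the post-quantum extraction machinery generally requires the commitment used inside the sigma protocol to be \emph{collapse-binding}, not merely post-quantum binding; this holds because a perfectly (or statistically) binding commitment is automatically collapse-binding, so the instantiation from a Naor-style commitment built on a quantum-secure PRG does go through, but it is worth stating that implication explicitly rather than leaving it as cited bookkeeping. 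With that noted, the argument is complete and matches the paper.
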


\begin{proof}
    This follows from \cref{thm:qmoney} and \cref{thm:main-qro}. 
\end{proof}

We have thus shown that \cref{fig:unizk-rom} is an unclonable NIZK AoK in the ROM model as defined according to our unclonability definition, \cref{def:unizk-rom}.
\section{Applications}

\subsection{Unclonable Signatures of Knowledge}

\begin{definition}[Unclonable Extractable SimExt-secure Signatures of Knowledge]
    \label{def:usign}
    Let $\NP$ relation $\cR$ with corresponding language $\cL$ be given such that they can be indexed by a security parameter $\lambda \in \bbN$. Let a message space $\cM$ be given such that it can be indexed by a security parameter $\lambda \in \bbN$.

    $(\Setup, \Sign, \Verify)$ is an unclonable signature of knowledge of a witness with respect to $\cL$ and $\cM$ if it has the following properties:    
    \begin{itemize}
        \item $(\Setup, \Sign, \Verify)$ is a quantum Sim-Ext signature of knowledge (\cref{def:sig}).

        \item {\bf $(k-1) \text{-to-} k$-Unclonable with Extraction}: There exists an oracle-aided polynomial-size quantum circuit $\cE$ such that for every polynomial-size quantum circuit $\cA$, for every tuple of $k-1$ instance-witness pairs $(x_1, \omega_1), \ldots, (x_{k-1}, \omega_{k-1}) \in \cR$, every $\{m_\iota \in \cM_\lambda\}_{\iota \in [k-1]}$, 
        for every $(x, m)$, if there is a polynomial $p(\cdot)$ where
        \begin{equation*}
            \Pr_{\substack{(\crs, \td) \gets \Setup(1^\lambda) \\ 
            \forall \iota \in [k-1], \: \sigma_\iota \gets \Sign(\crs, x_\iota, \omega_\iota, m_\iota) \\ 
            \{\widetilde{\sigma_\iota}\}_{\iota \in [k]} \gets \cA(\crs, \{x_\iota, m_\iota, \sigma_\iota\}_{\iota \in [k-1]})
            }}\left[
            \begin{array}{cc}
                & \exists~ \cJ \subseteq \{j:(\widetilde{x}_j, \widetilde{m}_j) = (x, m)\} \\
                & \text{ s.t. } |\cJ| > |\{i:(x_i, m_i) = (x, m)\}| \\
                & \text{ and } {\forall \iota \in \cJ}, \Verify(\crs, x, m, \widetilde{\sigma_\iota}) = 1
            \end{array}\right]
            \geq \frac{1}{p(\lambda)},
        \end{equation*}
        then there is also a polynomial $q(\cdot)$ such that
        \begin{equation*}
            \Pr_{w \gets \cE^\cA(\{x_\iota, m_\iota\}_{\iota \in [k-1]}, x, m)}\left[(x, w) \in \cR \right] \geq \frac{1}{q(\lambda)}.
        \end{equation*}
    \end{itemize}
\end{definition}

\begin{figure}[!ht]
\begin{framed}
\centering
\begin{minipage}{1.0\textwidth}
\begin{center}
    \underline{Unclonable Signature of Knowledge with CRS}
\end{center}

\vspace{-0.2cm}

Let $(\Setup, \sP, \sV)$ be non-interactive simulation-extractable, adaptive multi-theorem computational zero-knowledge, unclonable-extractable protocol for $\NP$.
Let $\cR$ be the relation with respect to $\cL \in \NP$.

\vspace{1mm}
\noindent {\underline{\textsc{Setup}}$(1^\lambda)$}: $(\crs, \td) \gets \Pi.\Setup(1^\lambda)$.

\vspace{1mm}
\noindent {\underline{\textsc{Sign}}$(\crs, x, w, m)$}:
\vspace{-0.3cm}
\begin{itemize}
    \item Let $x_\Pi = (x, m)$ be an instance and $w_\Pi = w$ be its corresponding witness for the following language $\cL_\Pi$:
    \begin{equation*}
        \{(x, m) \st \exists w \:\st\: (x, w) \in \cR\}.
    \end{equation*}
    \item Compute $\pi_\Pi \gets \Pi.\sP(\crs, x_\Pi, w_\Pi)$.
    \item Output $\sigma = \pi_\Pi$.
\end{itemize}

\vspace{-0.2cm}
\noindent {\underline{\textsc{Verify}}$(\crs, x, m, \sigma)$}: Output $\Pi.\sV(\crs, (x, m), \pi_\Pi)$.

\end{minipage}
\end{framed}
\caption{Unclonable Signature of Knowledge in CRS model}
\label{fig:usign}
\end{figure}

\begin{theorem}
    \label{thm:usig}
    Let $\Pi = (\Setup, \sP, \sV)$ be a non-interactive simulation-extractable, adaptive multi-theorem computational zero-knowledge, unclonable-extractable protocol for $\NP$ (\cref{def:unizk-crs}).
    
    $(\Setup, \Sign, \Verify)$ in \cref{fig:usign} is an unclonable-extractable SimExt-secure signature of knowledge (\cref{def:usign}).
\end{theorem}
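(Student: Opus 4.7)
The plan is to show that each property required of the signature scheme follows by a direct reduction to the corresponding property of the underlying NIZK $\Pi$, exploiting the fact that the signature is literally a NIZK proof with the statement augmented to $x_\Pi = (x, m)$ (and the relation $\cR_\Pi$ treats any witness $w$ for $x$ under $\cR$ as a witness for $(x, m)$ under $\cR_\Pi$). Correctness is immediate from the perfect completeness of $\Pi$. For the simulation property of \cref{def:sig}, I would define $\Sim_0$ to run $\Pi.\Sim_0$, and $\Sim_1$ on input $(\crs, \td, x, m)$ to invoke $\Pi.\Sim_1(\crs, \td, (x, m))$ and return its output as the signature; any signature-game distinguisher then becomes a zero-knowledge distinguisher for $\Pi$ with the same advantage, since a query $(x, w, m)$ to the signing oracle is answered by exactly the same distribution that $\Pi.\sP(\crs, (x, m), w)$ or $\Pi.\Sim_1(\crs, \td, (x, m))$ would produce.

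For the extraction property of \cref{def:sig}, the signature extractor $\Ext_{\text{sig}}(\crs, \td, x, m, \sigma)$ will simply run the NIZK simulation-extractor $\Pi.\Ext(\crs, \td, (x, m), \sigma)$ and output its witness. If $\cA$ ever outputs $(x, m, \sigma)$ that verifies, is not on the queried list $Q$, and yet extraction fails to yield $w$ with $(x, w) \in \cR$, then the pair $((x, m), \sigma)$ is an accepting proof for a statement $(x, m)$ that was not queried to $\Pi.\Sim_1$ (because if $(x, m) \notin Q$ then in particular no query to the simulator produced the statement $(x, m)$), and yet $\Pi.\Ext$ fails to return a witness for $((x, m), w) \in \cR_\Pi$; this contradicts the simulation-extractability of $\Pi$.

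The main content is unclonable extractability. Given a signature adversary $\cA$ that on input $(\crs, \{x_\iota, m_\iota, \sigma_\iota\}_{\iota \in [k-1]})$ produces $k$ accepting signatures with some subset $\cJ$ duplicating on target pair $(x, m)$, I will build a NIZK adversary $\cA_\Pi$ that expects $(\crs, \{x_{\Pi,\iota}, \pi_\iota\}_{\iota \in [k-1]})$ with $x_{\Pi,\iota} = (x_\iota, m_\iota)$, forwards everything to $\cA$, and outputs $\{\widetilde{x_{\Pi,\iota}} = (\widetilde{x_\iota}, \widetilde{m_\iota}), \widetilde{\pi_\iota} = \widetilde{\sigma_\iota}\}_{\iota \in [k]}$. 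An event witnessing a signature-duplicate on $(x,m)$ translates verbatim into a NIZK-duplicate event on the target statement $x_\Pi = (x, m)$, since signature verification on $(x, m, \widetilde{\sigma_\iota})$ is defined as NIZK verification on $((x, m), \widetilde{\pi_\iota})$. I then invoke the NIZK's unclonable extractor $\Pi.\cE$ on $(x_{\Pi,1}, \ldots, x_{\Pi,k-1}, x_\Pi)$ with oracle access to $\cA_\Pi$; by \cref{def:unizk-crs} applied to $\Pi$, this outputs $w_\Pi$ with $((x,m), w_\Pi) \in \cR_\Pi$ with inverse-polynomial probability, and by the definition of $\cR_\Pi$ we have $(x, w_\Pi) \in \cR$, giving the required signature-level witness.

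The step I expect to require the most care is the bookkeeping in the unclonable-extractability reduction rather than any deep technical obstacle: I must ensure that the $k-1$ input instance-witness pairs $(x_{\Pi,\iota}, w_\iota) \in \cR_\Pi$ handed to the NIZK challenger are valid (which follows since $(x_\iota, w_\iota) \in \cR$ by assumption), and that the target statement $x_\Pi = (x, m)$ together with the definitions of ``duplicate'' align on both sides—specifically, that the condition $|\cJ| > |\{i:(x_i, m_i) = (x, m)\}|$ on the signature side exactly matches the condition $|\cJ| > |\{i:x_{\Pi,i} = x_\Pi\}|$ on the NIZK side, which holds by our choice $x_{\Pi,i} = (x_i, m_i)$. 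Once this correspondence is established, the whole theorem reduces cleanly to the assumed properties of $\Pi$, and the extractor $\cE$ for the signature scheme is obtained by composing $\cA_\Pi$ with $\Pi.\cE$.
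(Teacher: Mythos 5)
Your proposal is correct and takes essentially the same route as the paper: correctness from completeness of $\Pi$, simulation and extraction by directly wrapping $\Pi.\Sim$ and $\Pi.\Ext$ respectively with the statement $x_\Pi = (x,m)$, and unclonable extractability by building a wrapper adversary that translates the signature-duplicate event into a NIZK-duplicate event on $x_\Pi$ and then invoking $\Pi.\cE$. Your bookkeeping observation (that the duplicate-counting condition aligns under the map $(x_i, m_i) \mapsto x_{\Pi,i}$, and that $((x,m), w) \in \cR_\Pi$ iff $(x,w) \in \cR$) is exactly the crux, and your phrasing of the extraction step is in fact slightly cleaner than the paper's, which (harmlessly but confusingly) momentarily claims $x_\Pi \notin \cL_\Pi$ when it only needs $(x_\Pi, w_\Pi) \notin \cR_\Pi$.
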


\begin{proof}[Proof of \cref{thm:usig}]
Correctness follows naturally. It remains to prove simulateability, extractability, and unclonable extractability.\\

\noindent \textbf{Simulateable.}
Let $\Pi.\Sim = (\Pi.\Sim_0, \Pi.\Sim_1)$ be the adaptive multi-theorem computationally zero-knowledge simulator of $\Pi$. We define $\Sim_0$ with oracle access to $\Pi.\Sim_0$ as follows:
\begin{addmargin}[2em]{2em} 
    \noindent \emph{Input}: $1^\lambda$.
    
    \noindent \textbf{(1)} Send $1^\lambda$ to $\Pi.\Sim_0$. Receive $(\crs, \td)$ from $\Pi.\Sim_0$.

    \noindent \textbf{(2)} Output $\crs$ and $\td$.
\end{addmargin}
We define $\Sim_1$ with oracle access to $\Pi.\Sim_1$ as follows:
\begin{addmargin}[2em]{2em} 
    \noindent \emph{Input}: $\crs$, $\td$, $x$, $m$.

    \noindent \textbf{(1)} Define $x_\Pi = (x, m)$. Send $(\crs, \td, x_\Pi)$ to $\Pi.\Sim_1$. Receive $\pi$ from $\Pi.\Sim_1$.

    \noindent \textbf{(2)} Output $\sigma = \pi$.
\end{addmargin}

Let a polynomial $p(\cdot)$ and an oracle-aided polynomial-size quantum circuit $\cA$ be given such that
\begin{equation*}
    \left\vert \Pr_{(\crs, \td) \gets \Sim_0(1^\lambda)}[\cA^{\Sim_1(\crs, \td, \cdot, \cdot)}(\crs) = 1] - \Pr_{\substack{(\crs, \td) \gets \Setup(1^\lambda)}}[\cA^{\Sign(\crs, \cdot, \cdot, \cdot)}(\crs) = 1] \right\vert \ge \frac{1}{p(\lambda)}.
\end{equation*}
We define a reduction to the multi-theorem zero-knowledge property of $\Pi$ as follows:
\begin{addmargin}[2em]{2em} 
    \noindent \underline{Reduction}: to zero-knowledge of $\Pi$ given oracle access to $\cA$.

    \noindent \textbf{(1)} Receive (real or simulated) $\crs$ from the challenger.

    \noindent \textbf{(2)} Send $\crs$ to $\cA$. 
    
    \noindent \textbf{(3)} On query $(x, w, m)$ from $\cA$: 
    send $x_\Pi = (x, m)$ to the challenger, receives (real or simulated) $\pi$ from the challenger, send $\sigma = \pi$ to $\cA$.

    \noindent \textbf{(4)} Output the result of $\cA$.
\end{addmargin}

The view of $\cA$ matches that of our protocol in \cref{fig:usign} or $\Sim_0$ and $\Sim_1$. As such, this reduction should have the same advantage at breaking the adaptive multi-theorem computational zero-knowledge property of $\Pi$. We reach a contradiction, hence our protocol must be simulateable.\\

\noindent \textbf{Extractable.}
Let $\Pi.\Sim = (\Pi.\Sim_0, \Pi.\Sim_1)$ be the adaptive multi-theorem computationally zero-knowledge simulator of $\Pi$.
Let $\Pi.\Ext$ be the simulation extractable extractor of $\Pi$ defined relative to $\Pi.\Sim$.
Let $\Sim = (\Sim_0, \Sim_1)$ be the simulator given by the simulation property which uses $\Pi.\Sim$.
We define $\Ext$ with oracle access to $\Pi.\Ext$ as follows:
\begin{addmargin}[2em]{2em} 
    \noindent \emph{Input}: $\crs$, $\td$, $x$, $m$, $\sigma=\pi$.

    \noindent \textbf{(1)} Define $x_\Pi = (x, m)$.
    
    \noindent \textbf{(2)} Send $(\crs, \td, x_\Pi, \pi)$ to $\Pi.\Ext$. Receive $w_\Pi = w$ from $\Pi.\Ext$.

    \noindent \textbf{(3)} Output $w$.
\end{addmargin}

Let a polynomial $p(\cdot)$ and an oracle-aided polynomial-size quantum circuit $\cA$ be given such that
\begin{align*}
    \Pr_{\substack{(\crs, \td) \gets \Sim_0(1^\lambda) \\ (x, m, \sigma) \gets \cA^{\Sim_1(\crs, \td, \cdot, \cdot)}(\crs) \\ w \gets \Ext(\crs, \td, x, m, \sigma)}}\left[\Verify(\crs, x, m, \sigma) = 1 \wedge (x, m) \not\in Q \wedge (x, w) \not\in \cR_\lambda \right] \ge \frac{1}{p(\lambda)}
\end{align*}
where $Q$ is the list of queries from $\cA$ to $\Sim_1$.
If $\Verify$ accepts the output of $\cA$, then $\Pi.\sV$ must accept $(\crs, x_\Pi, \pi)$. If $(x, m) \not\in Q$, then since $x_\Pi$ contains $x, m$, $x_\Pi$ must not be in the queries asked to $\Pi.\Sim_1$. Since $(x, w) \not\in \cR$, then $x_\Pi \not\in \cL_\Pi$ by the definition of $\cL_\Pi$. As such, it must necessarily be the case that $(x_\Pi, w_\Pi) \not\in \cR_\Pi$.
Hence, we have that 
\begin{align*}
    \Pr_{\substack{(\crs, \td) \gets \Sim_0(1^\lambda) \\ (x, m, \sigma) \gets \cA^{\Sim_1(\crs, \td, \cdot, \cdot)}(\crs) \\ w \gets \Ext(\crs, \td, x, m, \sigma)}}\left[\Pi.\sV(\crs, x_\Pi, \pi) = 1 \wedge x_\Pi \not\in Q_\Pi \wedge (x_\Pi, w_\Pi) \not\in \cR_\Pi \right] \ge \frac{1}{p(\lambda)}
\end{align*}
where $Q_\Pi$ is the list of queries, originating from $\cA$, that $\Sim_1$ makes to $\Pi.\Sim_1$.
We define a reduction to the simulation extraction property of $\Pi$ as follows:
\begin{addmargin}[2em]{2em} 
    \noindent \underline{Reduction}: to simulation extraction of $\Pi$ given oracle access to $\cA$.

    \noindent \textbf{(1)} Receive $\crs$ from the challenger.

    \noindent \textbf{(2)} Send $\crs$ to $\cA$. 
    
    \noindent \textbf{(3)} On query $(x, w, m)$ from $\cA$: 
    send $x_\Pi = (x, m)$ to the challenger, receives $\pi$ from the challenger, send $\sigma = \pi$ to $\cA$.

    \noindent \textbf{(4)} Receive $(x, m, \sigma = \pi)$ from $\cA$. Define $x_\Pi = (x, m)$.

    \noindent \textbf{(5)} Output $(x_\Pi, \pi)$.
\end{addmargin}

The view of $\cA$ matches that of $\Sim_0$ and $\Sim_1$. As such, this reduction should have the same advantage at breaking the extraction property of $\Pi$. We reach a contradiction, hence our protocol must be extractable.\\

\noindent \textbf{Unclonable Extractability.}
Let $\Pi.\Sim$ be the adaptive multi-theorem computationally zero-knowledge simulator of $\Pi$.
Let $\Pi.\Ext$ be the simulation extractable extractor of $\Pi$ defined relative to $\Pi.\Sim$.
Let $\Pi.\cE$ be the unclonable extractor of $\Pi$.
We define $\cE$ with oracle-access to $\Pi.\cE$,
and some $\cA$ as follows:
\begin{addmargin}[2em]{2em} 
    \noindent \emph{Input}: $\{x_\iota, m_\iota\}_{\iota \in [k-1]}$, $x$, $m$

    \noindent \textbf{(1)} 
    Define $x_{\Pi, \iota} = (x_\iota, m_\iota)$ for $\iota \in [k-1]$.



    
    \noindent \textbf{(2)} Send $(\{x_{\Pi, \iota}\}_{\iota \in [k-1]}, (x, m))$ to $\Pi.\cE$. Receive $(\crs, \{\pi_{\Pi, \iota} = \sigma_{\iota}\}_{\iota \in [k-1]})$ from $\Pi.\cE$.

    \noindent \textbf{(3)} Send $(\crs, \{x_\iota, m_\iota, \sigma_{\iota}\}_{\iota \in [k-1]})$ to $\cA$. Receive $\{(\widetilde{x_\iota}, \widetilde{m_\iota}) = \widetilde{x_{\Pi, \iota}}, \widetilde{\sigma_\iota} = \widetilde{\pi_{\Pi,\iota}}\}_{\iota \in [k]}$ from $\cA$. 

    \noindent \textbf{(4)} Send $\{\widetilde{x_{\Pi, \iota}}, \widetilde{\pi_{\Pi, \iota}}\}_{\iota \in [k]}$ to $\Pi.\cE$. Receive $w_\Pi = w$ from $\Pi.\cE$.

    \noindent \textbf{(5)} Output $w$.



    

    
\end{addmargin}

Let $\cA$, $(x_1, w_1), \ldots, (x_{k-1}, w_{k-1}) \in \cR$, $\{m_\iota \in \cM_\lambda\}_{\iota \in [k-1]}$, $x$, $m$, polynomial $p(\cdot)$, and negligible function $\negl(\cdot)$ be given such $\cA$ outputs more accepting signatures for $(x, m)$ than $\cA$ received, and the extractor $\cE$ is unable to extract a valid witness. Formally, that is that
\begin{align}
    &\Pr_{\substack{(\crs, \td) \gets \Setup(1^\lambda) \\ 
    \forall \iota \in [k-1], \: \sigma_\iota \gets \Sign(\crs, x_\iota, \omega_\iota, m_\iota) \\ 
    \{\widetilde{\sigma_\iota}\}_{\iota \in [k]} \gets \cA(\crs, \{x_\iota, m_\iota, \sigma_\iota\}_{\iota \in [k-1]})
    }}\left[
    \begin{array}{cc}
        & \exists~ \cJ \subseteq \{j:(\widetilde{x}_j, \widetilde{m}_j) = (x, m)\} \\
        & \text{ s.t. } |\cJ| > |\{i:(x_i, m_i) = (x, m)\}| \\
        & \text{ and } {\forall \iota \in \cJ}, \Verify(\crs, x, m, \widetilde{\sigma_\iota}) = 1
    \end{array}\right]
    \geq \frac{1}{p(\lambda)}, \label{eq:usig-ver}
\end{align}
and for all polynomials $p'(\cdot)$ (there are infinitely many $\lambda$) such that
\begin{align}
    &\Pr_{w \gets \cE^\cA(\{x_\iota, m_\iota\}_{\iota \in [k-1]}, x, m)}\left[(x, w) \in \cR_\lambda \right] \le \frac{1}{p'(\lambda)}.\label{eq:usig-ext}
\end{align}

We will now show that we can construct a reduction that breaks the unclonability of the NIZK protocol $\Pi$.
We define a reduction to the unclonability of $\Pi$ as follows:
\begin{addmargin}[2em]{2em} 
    \noindent \underline{Reduction}: to unclonability of $\Pi$ given oracle access to $\cA$.
    
    \noindent \emph{Hardwired with}: $\{x_\iota, m_\iota\}_{\iota \in [k-1]}$, $x$, $m$


    \noindent \textbf{(1)} 
    Define $x_\Pi = (x, m)$ and $x_{\Pi, \iota} = (x_\iota, m_\iota)$ for $\iota \in [k-1]$.
    
    \noindent \textbf{(2)} Send $(\{x_{\Pi, \iota}\}_{\iota \in [k-1]}, x_\Pi)$ to the challenger. 
    
    \noindent We note that the following code is re-windable, as necessary:
    
    \noindent \textbf{(3)} Receive $(\crs, \{\pi_{\iota}\}_{\iota \in [k-1]})$ from the challenger. Define $\sigma_{\iota} = \pi_{\Pi, \iota}$ for $\iota \in [k-1]$.

    \noindent \textbf{(4)} Send $(\crs, \{x_\iota, m_\iota, \sigma_{\iota}\}_{\iota \in [k-1]})$ to $\cA$.

    \noindent \textbf{(5)} Receive $\{\widetilde{x_\iota}, \widetilde{m_\iota}, \widetilde{\sigma_\iota}\}_{\iota \in [k]}$ from $\cA$. Define $\widetilde{\pi_{\Pi,\iota}} = \widetilde{\sigma_\iota}$ and $\widetilde{x_{\Pi, \iota}} = (\widetilde{x_\iota}, \widetilde{m_\iota})$ for $\iota \in [k]$.
    
    \noindent \textbf{(6)} Output $\{\widetilde{x_{\Pi, \iota}}, \widetilde{\pi_\iota}\}_{\iota \in [k]}$.
\end{addmargin}
We note that the reduction does not change the view of $\cA$ from the honest execution.

If the challenger runs $\Pi.\Setup$ and $\Pi.\Prove$, then the reduction clones proofs of $\Pi$ with noticeable probability. This can be seen from \cref{eq:usig-ver} because the reduction should have the same advantage at cloning more accepting proofs for the statement $(x, m)$ as $\cA$ does at cloning signatures for instance and message pair $(x, m)$. 

If the challenger runs $\Pi.\cE$ given oracle access to our reduction, the challenger will succeed in extraction with only negligible probability. 
This can be seen from \cref{eq:usig-ext} because $\cE$ directly uses $\Pi.\cE$ to extract any witness, and the functionality that $\cE$ provides to $\Pi.\cE$ is the same as the functionality that our reduction provides to the challenger.
Thus, the above reduction leads to a contradiction with the unclonability of the NIZK protocol $\Pi$.

\end{proof}

\begin{corollary}
    \label{cor:usig}
    Assuming the polynomial quantum hardness of LWE, injective one-way functions exist, post-quantum iO exists, there exists an unclonable SimExt-secure signature of knowledge (\cref{def:usign}).
\end{corollary}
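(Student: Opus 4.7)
The plan is to combine two results already established in the paper. First, I would invoke \cref{cor:main-crs}, which (under the polynomial quantum hardness of LWE, injective one-way functions, and post-quantum iO) yields a non-interactive protocol $\Pi = (\Setup, \sP, \sV)$ for $\NP$ in the CRS model that is simultaneously simulation-extractable, adaptive multi-theorem computationally zero-knowledge, and $(k-1)$-to-$k$-unclonable with extraction, thereby satisfying the full list of hypotheses required by \cref{thm:usig}.

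Next, I would apply \cref{thm:usig} directly to the protocol $\Pi$ obtained above. That theorem takes as input exactly such a NIZK and produces the signature-of-knowledge construction of \cref{fig:usign}, whose correctness, simulatability, extractability, and $(k-1)$-to-$k$-unclonable extractability are already proven there. The output is therefore an unclonable SimExt-secure signature of knowledge in the sense of \cref{def:usign}, which is what the corollary claims.

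There is no real obstacle: the proof is purely a chain of references. The only care needed is to verify that the assumptions listed in the corollary statement (polynomial quantum hardness of LWE, injective one-way functions, and post-quantum iO) suffice to instantiate every primitive used by \cref{thm:main-crs} — namely post-quantum perfectly binding, computationally hiding commitments (from LWE via \cref{thm:com}), post-quantum simulation-extractable adaptive multi-theorem zero-knowledge NIZKs for $\NP$ (from LWE via \cref{cor:simext-nizk-crs}), and a public-key quantum money mini-scheme (from injective OWFs and post-quantum iO via \cref{thm:qmoney}). Once these instantiations are noted, the conclusion follows immediately.
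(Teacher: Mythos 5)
Your proposal is correct and matches the paper's own proof, which simply cites \cref{cor:main-crs} together with \cref{thm:usig}. The extra care you take in tracing the assumption chain down through \cref{thm:com}, \cref{cor:simext-nizk-crs}, and \cref{thm:qmoney} is sound and just makes explicit what the paper leaves implicit.
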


\begin{proof}
    This follows from \cref{cor:main-crs} and \cref{thm:usig}.
\end{proof}

\subsection{Revocable Anonymous Credentials}
\label{sec:rac}

In this section, we will see how to use unclonable signatures of knowledge to construct an anonymous credentials scheme which has a natural revocation property.

\begin{definition}[Revocable Anonymous Credentials]
\label{def:rac}

$(\IGen, \Issue, \VerCred, \Revoke, \Prove,\\ \VerRevoke)$ is a revocable anonymous credentials scheme with respect to some set of accesses $\{\cS_\lambda\}_{\lambda \in \bbN}$ if it has the following syntax and properties.\\

\noindent \textbf{Syntax.}
The input $1^\lambda$ is left out when it is clear from context.
\begin{itemize}
    \item $(\nym, \sk) \gets \IGen(1^\lambda)$: The probabilistic polynomial-time algorithm $\IGen$ is run by the issuer of the credentials. It takes input $1^\lambda$; and outputs a pseudonym $\nym$ with a secret key $\sk$.
    
    \item $\cred \gets \Issue(1^\lambda, \nym, \sk, \access)$: The polynomial-time quantum algorithm $\Issue$ is run by the issuer of the credentials. It takes input the issuer's keys $\nym$ and $\sk$ as well as the requested access $\access \in \cS_\lambda$; and outputs a quantum credential $\cred$ along with a classical identifier $\id$.
    
    \item $\VerCred(1^\lambda, \nym, \access, \cred) \in \zo$: The polynomial-time quantum algorithm $\VerCred$ is run by a verifier of the user's credentials. It takes input the issuer's pseudonym $\nym$, the requested access $\access \in \cS_\lambda$, and a credential $\cred$; and outputs $1$ iff $\cred$ is a valid credential for access $\access$ with respect to $\nym$.

    \item $\revnotice \gets \Revoke(1^\lambda, \nym, \sk, \access)$: The polynomial-time quantum algorithm $\Revoke$ is run by the issuer of the credentials. It takes input the issuer's keys $\nym$ and $\sk$, and the access $\access$ being revoked; and outputs a notice of revocation $\revnotice$.

    \item $\pi \gets \Prove(1^\lambda, \nym, \revnotice, \cred)$: The polynomial-time quantum algorithm $\Prove$ is run by the user of the credentials. It takes input the issuer's pseudonym $\nym$, and a revocation notice $\revnotice$, and the credential to be revoked $\cred$ which is identified by $\revnotice$; and outputs a proof of revocation $\pi$.

    \item $\VerRevoke(1^\lambda, \nym, \sk, \access, \revnotice, \pi) \in \zo$: The polynomial-time quantum algorithm $\VerRevoke$ is run by the issuer of the credentials. It takes input the issuer's keys $\nym$ and $\sk$, the access $\access$  being revoked, the revocation notice $\revnotice$, and a proof of revocation $\pi$; and outputs $1$ iff $\pi$ is a valid proof that the user's access to the credential identified by $\id$ has been revoked.
\end{itemize}

\noindent \textbf{Properties.}
\begin{itemize}
    \item {\bf Correctness}: For every sufficiently large $\lambda \in \bbN$, and every $\access \in \cS_\lambda$,
    \begin{equation*}
        \Pr_{\substack{(\nym, \sk) \gets \IGen(1^\lambda) \\ \cred \gets \Issue(1^\lambda, \nym, \sk, \access)}}[\VerCred(1^\lambda, \nym, \access, \cred) = 1] = 1
    \end{equation*}
    and
    \begin{equation*}
        \Pr_{\substack{(\nym, \sk) \gets \IGen(1^\lambda) \\ \cred \gets \Issue(1^\lambda, \nym, \sk, \access) \\ \revnotice \gets \Revoke(1^\lambda, \nym, \sk, \access) \\ \pi \gets \Prove(1^\lambda, \nym, \revnotice, \cred)}}[\VerRevoke(1^\lambda, \nym, \sk, \access, \revnotice, \pi) = 1] = 1.
    \end{equation*}

    \item {\bf Revocation}: For every polynomial-size quantum circuit $\cA$, there exists a negligible function $\negl(\cdot)$ such that for sufficiently large $\lambda \in \bbN$, and every $\access \in \cM_\lambda$
    \begin{equation*}
        \Pr_{\substack{(\nym, \sk) \gets \IGen(1^\lambda) \\ \cred \gets \Issue(1^\lambda, \nym, \sk, \access) \\ \revnotice \gets \Revoke(1^\lambda, \nym, \sk, \access) \\ \pi, \cred' \gets \cA(1^\lambda, \nym, \revnotice, \cred)}}
        \Bigg[\substack{\VerRevoke(1^\lambda, \nym, \sk, \access, \revnotice, \pi) = 1 \\ 
        \bigwedge \VerCred(1^\lambda, \nym, \access, \cred') = 1}\Bigg] \le \negl(\lambda).
    \end{equation*}
\end{itemize}
\end{definition}

\begin{remark}
    Unlike previous literature, the users that get issued credentials do not have their own identity. We also define algorithms for a three-message revocation process as opposed to the polynomial-message revocation process defined in the literature. 
\end{remark}

We now introduce a construction based on unclonable signatures of knowledge.

\begin{figure}[!ht]
\begin{framed}
\centering
\begin{minipage}{1.0\textwidth}
\begin{center}
    \underline{Revocable Anonymous Credentials}
\end{center}

\vspace{2mm}

Let $(\cX, \cW)$ be a hard-distribution of instance and witness pairs for some $\NP$ relation.
Let $\{\cS_\lambda\}_{\lambda \in \bbN}$ be some set of accesses.
Let $(\Setup, \Sign, \Verify)$ be an unclonable-extractable SimExt-secure signature of knowledge for message space $\{\cS_\lambda\}_{\lambda \in \bbN}$ (\cref{def:usign}).

\vspace{1mm}
\noindent {\underline{\textsc{IssuerKeyGen}}$(1^\lambda)$}:
\begin{itemize}
    \item $(\crs, \td) \gets \Setup(1^\lambda)$.
    \item $(x, w) \gets (\cX, \cW)$.
    \item Output $\nym = (\crs, x)$ and $\sk = (\td, w)$.
\end{itemize}

\vspace{1mm}
\noindent {\underline{\textsc{Issue}}$(\nym, \sk, \access)$}:
\begin{itemize}
    \item $\sigma \gets \Sign(\crs, x, w, \access)$.
    \item Output $\cred = \sigma$.
\end{itemize}

\vspace{1mm}
\noindent {\underline{\textsc{VerifyCred}}$(\nym, \access, \cred)$}:
\begin{itemize}
    \item Output $\Verify(\crs, x, \access, \cred)$.
\end{itemize}

\vspace{1mm}
\noindent {\underline{\textsc{Revoke}}$(\nym, \sk, \access)$}:
\begin{itemize}
    \item Output $\revnotice = \access$.
\end{itemize}

\vspace{1mm}
\noindent {\underline{\textsc{Prove}}$(\nym, \revnotice, \cred)$}:
\begin{itemize}
    \item Output $\pi = \cred$.
\end{itemize}

\vspace{1mm}
\noindent {\underline{\textsc{VerifyRevoke}}$(\nym, \sk, \access, \revnotice, \pi)$}:
\begin{itemize}
    \item Output $\VerCred(\nym, \access, \pi)$.
\end{itemize}

\end{minipage}
\end{framed}
\caption{Revocable Anonymous Credentials}
\label{fig:rac}
\end{figure}

\begin{theorem}
\label{thm:rac}
Let $(\cX, \cW)$ be a hard-distribution of instance and witness pairs for some $\NP$ relation.
Let $\{\cS_\lambda\}_{\lambda \in \bbN}$ be some set of accesses.
Let $(\Setup, \Sign, \Verify)$ be an unclonable-extractable SimExt-secure signature of knowledge for message space $\{\cS_\lambda\}_{\lambda \in \bbN}$ (\cref{def:usign}). 

$(\IGen, \Issue, \VerCred, \Revoke, \Prove, \VerRevoke)$ defined in \cref{fig:rac} is a revocable anonymous credentials scheme (\cref{def:rac}).
\end{theorem}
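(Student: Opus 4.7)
Correctness of both $\VerCred$ and $\VerRevoke$ follows immediately from the correctness of the underlying signature of knowledge: the credential $\cred$ is produced by $\Sign(\crs, x, w, \access)$, so $\Verify$ accepts with probability $1$; the revocation proof $\pi$ is just a copy of $\cred$, so $\VerRevoke$ --- which is defined as $\VerCred$ on $\pi$ --- also accepts.

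For the revocation property, my plan is to reduce to the $1$-to-$2$ unclonable-extractability of the signature scheme, which will in turn contradict the hardness of the distribution $(\cX, \cW)$. Suppose toward contradiction that there is a polynomial-size quantum adversary $A$, an access $\access$, and a polynomial $p(\cdot)$ such that $A$, on input $(\nym, \revnotice, \cred)$ generated honestly, produces $(\pi, \cred')$ with $\VerRevoke(\nym, \sk, \access, \revnotice, \pi) = 1 \wedge \VerCred(\nym, \access, \cred') = 1$ with probability at least $1/p(\lambda)$. By the construction of $\VerRevoke$, this means both $\pi$ and $\cred'$ are accepting signatures on message $\access$ for the instance $x$, so $A$ has turned the single received signature $\sigma$ into two valid signatures for the same pair $(x, \access)$. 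I would then wrap $A$ into an adversary $\cA$ against unclonable-extractability as follows: on input $(\crs, \{x_1, m_1, \sigma_1\})$ with $(x_1, m_1) = (x, \access)$, $\cA$ sets $\nym = (\crs, x)$, $\revnotice = \access$, $\cred = \sigma_1$, runs $A$ to obtain $(\pi, \cred')$, and outputs $\{(x, \access, \pi), (x, \access, \cred')\}$. The view $\cA$ provides to $A$ is identical to the revocation experiment, so whenever $A$ succeeds, $\cA$ outputs two accepting signatures for the target $(x, m) = (x, \access)$, which gives $|\cJ| = 2 > 1 = |\{i : (x_i, m_i) = (x, m)\}|$. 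Invoking the unclonable extractor $\cE$ from \cref{def:usign} then yields, with non-negligible probability, a witness $w'$ with $(x, w') \in \cR$. Since $(x, w) \leftarrow (\cX_\lambda, \cW_\lambda)$ is freshly sampled at the outset by $\IGen$, this produces a polynomial-time algorithm that finds a witness for a random hard instance $x$, contradicting the hardness of $(\cX, \cW)$.

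The main subtlety will be that the extractor $\cE$ receives only the instance-message pairs (no witnesses), so it must internally simulate both the CRS and the input signature $\sigma_1$ before running $\cA$. Fortunately this is exactly what the simulator $\Sim = (\Sim_0, \Sim_1)$ that comes with any SimExt-secure signature of knowledge is for: by the simulation property of the signature, the simulated view is computationally indistinguishable from the view $A$ expects in the honest revocation experiment, so $A$ (and therefore $\cA$) still succeeds with noticeable probability inside $\cE$. Verifying this indistinguishability --- and thus that $\cE$'s success probability is preserved --- is the only step that requires genuine cryptographic reasoning; the rest of the argument is syntactic.
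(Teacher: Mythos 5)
Your proposal is correct and takes essentially the same approach as the paper: correctness follows directly from the correctness of the underlying signature of knowledge, and revocation is reduced to the ($1$-to-$2$) unclonable-extractability of the signature scheme, which then contradicts the hardness of $(\cX, \cW)$. The paper only gives a proof sketch, and your write-up fleshes out the same reduction (including the observation, which the paper's sketch leaves implicit, that $\cE$ must internally simulate the CRS and input signature since it receives no witnesses), but the decomposition and key steps match.
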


\begin{proof}[Proof Sketch]
The correctness of this revocable anonymous credentials scheme follows from the correctness of the unclonable signature of knowledge scheme.

We will now sketch the proof of revocation. Say that there exists an adversary $\cA$, access $\access$, and polynomial $p(\cdot)$ such that, with probability at least $1/p(\lambda)$: (1) $\pi$ passes the revocation check, and (2) $\cred'$ passes the credential check. This means that both $\pi$ and $\cred'$ are valid signatures with respect to the same $\crs$, $x$, and $\access$ that the signature $\cred$ was issued under. This satisfies the ``if'' condition of the unclonability property of the unclonable signature of knowledge. As such, there exists a polynomial $q(\cdot)$ such that the unclonable signature of knowledge's extractor can produce a witness $w$ for $x$ with probability at least $1/q(\lambda)$. However, this contradicts the hardness of the distribution $(\cX, \cW)$. Hence, our protocol must have the revocation property.
\end{proof}

\begin{corollary}
    Assuming the polynomial quantum hardness of LWE, injective one-way functions exist, post-quantum iO exists, and the hardness of $\NP$, there exists a revocable anonymous credentials scheme (\cref{def:rac}).
\end{corollary}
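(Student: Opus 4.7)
The plan is to verify the two properties that define a revocable anonymous credentials scheme: correctness and revocation. Correctness will be essentially immediate, since $\VerCred$ is a direct call to $\Verify$ on an honestly generated signature, and $\VerRevoke$ simply invokes $\VerCred$ with $\pi = \cred$; both accept with probability $1$ by the (perfect) correctness of the underlying unclonable signature of knowledge.

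The substantive work is establishing revocation, which I plan to prove by contradiction. Suppose a polynomial-size quantum adversary $\cA$, an access $\access \in \cS_\lambda$, and a polynomial $p(\cdot)$ are given such that $\cA$, on input an honestly generated $(\nym = (\crs, x), \revnotice = \access, \cred)$, outputs $(\pi, \cred')$ satisfying $\VerRevoke(\nym, \sk, \access, \revnotice, \pi) = 1$ and $\VerCred(\nym, \access, \cred') = 1$ with probability at least $1/p(\lambda)$. Unfolding the construction, this event is simply that both $\Verify(\crs, x, \access, \pi) = 1$ and $\Verify(\crs, x, \access, \cred') = 1$ hold; that is, $\cA$ has produced two accepting signatures on the pair $(x, \access)$ having only received one honest signature for this pair. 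This is exactly the event that activates the ``if'' condition of the $1$-to-$2$ unclonable extractability guarantee of \cref{def:usign}, instantiated with $k = 2$, $(x_1, m_1) = (x, \access)$, and target $(x, m) = (x, \access)$, since the set $\cJ = \{1, 2\}$ satisfies $|\cJ| = 2 > 1 = |\{i : (x_i, m_i) = (x, \access)\}|$.

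With this observation in place, the reduction to the hardness of $(\cX, \cW)$ proceeds as follows. Upon receiving $x \gets \cX_\lambda$ from the distribution challenger, I will construct a wrapper $\cA'$ that, on input $(\crs, x_1, m_1, \sigma_1)$, runs $\cA$ on $(\nym = (\crs, x_1), \revnotice = m_1, \cred = \sigma_1)$ and repackages $\cA$'s output $(\pi, \cred')$ into the two-signature tuple $\{(\widetilde{x_1}, \widetilde{m_1}, \widetilde{\sigma_1}), (\widetilde{x_2}, \widetilde{m_2}, \widetilde{\sigma_2})\} = \{(x_1, m_1, \pi), (x_1, m_1, \cred')\}$. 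Invoking the signature scheme's unclonable extractor $\cE$ on $\cA'$ with inputs $(x_1 = x, m_1 = \access, x, \access)$, \cref{def:usign} guarantees a polynomial $q(\cdot)$ such that $\cE^{\cA'}$ outputs a witness $w$ with $(x, w) \in \cR$ with probability at least $1/q(\lambda)$. Returning this $w$ contradicts the hardness of $(\cX, \cW)$ per \cref{def:hard-dist}.

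The main subtlety I anticipate is that $\cE$ internally samples a simulated $\crs$ and simulated signature to feed into $\cA'$, rather than using the real $\Setup$ and $\Sign$ distributions that $\cA$ expects in the revocation game; a priori, $\cA$'s success probability could differ between these two views. However, this switch is already absorbed into the statement of unclonable extractability for the signature scheme (the proof of \cref{thm:usig} performs the corresponding reduction to adaptive multi-theorem zero-knowledge internally), so the reduction only needs to faithfully relay messages between $\cE$ and $\cA'$; no additional hybrid argument is required at this outer level.
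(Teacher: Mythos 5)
The proposal is correct and takes essentially the same approach as the paper: the paper's proof of this corollary is a one-line citation to \cref{cor:usig} (existence of the unclonable SimExt-secure signature of knowledge under the stated assumptions) and \cref{thm:rac} (whose proof sketch you essentially re-derive here, including the contradiction argument via the $1$-to-$2$ instantiation of \cref{def:usign} and the reduction to hardness of $(\cX,\cW)$). Your added discussion of the $\crs$/signature simulation subtlety and the implicit averaging over $(x,w) \leftarrow (\cX,\cW)$ both operate at the same level of rigor as the paper's own proof sketch of \cref{thm:rac}.
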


\begin{proof}
    This follows from \cref{cor:usig} and \cref{thm:rac}.
\end{proof}

\subsection{Unclonable Anonymous Credentials}

We will show that our revocable anonymous credentials construction in \cref{fig:rac} also satisfies a definition of unclonable anonymous credentials.

\begin{definition}[Unclonable Anonymous Credentials]
\label{def:uac}

$(\IGen, \Issue, \VerCred)$ is an unclonable anonymous credentials scheme with respect to some set of accesses $\{\cS_\lambda\}_{\lambda \in \bbN}$ if it has the following syntax and properties.\\

\noindent \textbf{Syntax.}
The input $1^\lambda$ is left out when it is clear from context.
\begin{itemize}
    \item $(\nym, \sk) \gets \IGen(1^\lambda)$: The probabilistic polynomial-time algorithm $\IGen$ is run by the issuer of the credentials. It takes input $1^\lambda$; and outputs a pseudonym $\nym$ with a secret key $\sk$.
    
    \item $\cred \gets \Issue(1^\lambda, \nym, \sk, \access)$: The polynomial-time quantum algorithm $\Issue$ is run by the issuer of the credentials. It takes input the issuer's keys $\nym$ and $\sk$ as well as the requested access $\access \in \cS_\lambda$; and outputs a quantum credential $\cred$ along with a classical identifier $\id$.
    
    \item $\VerCred(1^\lambda, \nym, \access, \cred) \in \zo$: The polynomial-time quantum algorithm $\VerCred$ is run by a verifier of the user's credentials. It takes input the issuer's pseudonym $\nym$, the requested access $\access \in \cS_\lambda$, and a credential $\cred$; and outputs $1$ iff $\cred$ is a valid credential for access $\access$ with respect to $\nym$.
\end{itemize}

\noindent \textbf{Properties.}
\begin{itemize}
    \item {\bf Correctness}: For every sufficiently large $\lambda \in \bbN$, and every $\access \in \cS_\lambda$,
    \begin{equation*}
        \Pr_{\substack{(\nym, \sk) \gets \IGen(1^\lambda) \\ \cred \gets \Issue(1^\lambda, \nym, \sk, \access)}}[\VerCred(1^\lambda, \nym, \access, \cred) = 1] = 1.
    \end{equation*}

    \item {\bf Unclonable}: For every polynomial-size quantum circuit $\cA$, there exists a negligible function $\negl(\cdot)$ such that for sufficiently large $\lambda \in \bbN$, and every $\access \in \cM_\lambda$
    \begin{equation*}
        \Pr_{\substack{(\nym, \sk) \gets \IGen(1^\lambda) \\ \cred \gets \Issue(1^\lambda, \nym, \sk, \access) \\ \cred_0, \cred_1 \gets \cA(1^\lambda, \nym, \cred)}}
        \Bigg[\substack{\VerCred(1^\lambda, \nym, \access, \cred_0) = 1 \\ 
        \bigwedge \VerCred(1^\lambda, \nym, \access, \cred_1) = 1}\Bigg] \le \negl(\lambda).
    \end{equation*}
\end{itemize}
\end{definition}

\begin{theorem}
\label{thm:uac}
Let $(\cX, \cW)$ be a hard-distribution of instance and witness pairs for some $\NP$ relation.
Let $\{\cS_\lambda\}_{\lambda \in \bbN}$ be some set of accesses.
Let $(\Setup, \Sign, \Verify)$ be an unclonable-extractable SimExt-secure signature of knowledge for message space $\{\cS_\lambda\}_{\lambda \in \bbN}$ (\cref{def:usign}). 

$(\IGen, \Issue, \VerCred)$ defined in \cref{fig:rac} is an unclonable anonymous credentials scheme (\cref{def:uac}).
\end{theorem}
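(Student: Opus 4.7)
The plan is to reduce the unclonability of the credentials scheme directly to the $1$-to-$2$ unclonable-extractability property of the underlying signature of knowledge $(\Setup,\Sign,\Verify)$, exactly as was done in the proof of \cref{thm:rac}. Correctness is inherited from the correctness of the signature of knowledge, so the only substantive property to verify is unclonability.

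Suppose for contradiction that there is a polynomial-size quantum adversary $\cA$, an access level $\access \in \cS_\lambda$, and a polynomial $p(\cdot)$ such that with probability at least $1/p(\lambda)$, $\cA$ on input $(\nym = (\crs,x),\cred)$ outputs two credentials $\cred_0,\cred_1$ both of which $\VerCred$ accepts with respect to $(\nym,\access)$. By construction, $\cred_b = \sigma_b$ is a signature that verifies under $\Verify(\crs,x,\access,\sigma_b)=1$ for $b \in \{0,1\}$, while the input $\cred$ is an honestly generated $\Sign(\crs,x,w,\access)$. First I would observe that this is exactly the ``cloning'' event of the $(k-1)$-to-$k$-unclonable-extractability definition of \cref{def:usign} with $k=2$, instantiated on the single instance-message pair $(x_1,m_1) = (x,\access)$ and target pair $(x,m)=(x,\access)$: the adversary is handed one signature and outputs strictly more than one accepting signature for the same $(x,m)$.

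Next I would invoke unclonable-extractability of $(\Setup,\Sign,\Verify)$: by \cref{def:usign}, there exists an extractor $\cE$ and a polynomial $q(\cdot)$ such that
\begin{equation*}
    \Pr_{w' \gets \cE^{\cA}(x,\access,x,\access)}\left[(x,w') \in \cR\right] \geq \frac{1}{q(\lambda)}.
\end{equation*}
I would then package $\cE$ together with the credential-issuance simulation into a reduction $\cB$ that breaks the hardness of $(\cX,\cW)$: on input an instance $x$ sampled from $\cX$, $\cB$ runs $\Setup$ to obtain $(\crs,\td)$, samples a second pair $(x_1,w_1)\gets(\cX,\cW)$ internally to feed $\cE$'s syntactic interface (or, more directly, sets $x_1 = x$ and uses $(x,w)$ from the credential game when simulating), and uses $\cE$'s oracle access to $\cA$ to produce $w'$ with probability $\geq 1/q(\lambda)$, contradicting the hardness of $(\cX,\cW)$ from \cref{def:hard-dist}.

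The main subtlety I anticipate is a syntactic mismatch between the unclonability game of \cref{def:uac}, where $\cA$ receives only $(\nym,\cred)$ with $\nym=(\crs,x)$ and a single honestly issued signature, and the $1$-to-$2$ unclonable-extractability game of \cref{def:usign}, which asks the reduction to supply $(\crs,\{x_1,m_1,\sigma_1\}_{\iota\in[1]})$ with a freshly generated $\crs$. Since the credential scheme's $\IGen$ already outputs $\crs$ via $\Setup$ and $\Issue$ invokes $\Sign(\crs,x,w,\access)$, these distributions coincide, so the reduction is a straightforward syntactic rewrapping. The only real content of the proof is the two-step composition: cloning of credentials $\Rightarrow$ cloning of signatures of knowledge $\Rightarrow$ witness extraction $\Rightarrow$ violation of hardness of $(\cX,\cW)$. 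I would therefore finish the proof sketch by explicitly noting, as in \cref{thm:rac}, that this chain delivers a contradiction and hence no such $\cA$ can exist.
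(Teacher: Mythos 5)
Your proposal is correct and follows essentially the same route as the paper's (sketch) proof: inheriting correctness from the signature of knowledge, observing that two accepting credentials for the same $(\crs,x,\access)$ trigger the ``if'' condition of the $1$-to-$2$ unclonable-extractability of \cref{def:usign}, invoking the extractor, and contradicting the hardness of $(\cX,\cW)$. The only cosmetic difference is that you spell out the reduction wrapper explicitly (and note the extractor needs no witnesses, so your parenthetical about sampling a second pair is unnecessary); like the paper's sketch, you elide the averaging step over the hard distribution that is made formal in \cref{app:defs-reduct}, which is acceptable at this level of detail.
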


\begin{proof}[Proof Sketch]
The correctness of this unclonable anonymous credentials scheme follows from the correctness of the unclonable signature of knowledge scheme.

We will now sketch the proof of unclonability. Say that there exists an adversary $\cA$, access $\access$, and polynomial $p(\cdot)$ such that, with probability at least $1/p(\lambda)$: (1) $\cred_0$ passes the credential check, and (2) $\cred_1$ passes the credential check. This means that both $\cred_0$ and $\cred_1$ are valid signatures with respect to the same $\crs$, $x$, and $\access$ that the signature $\cred$ was issued under. This satisfies the ``if'' condition of the unclonability property of the unclonable signature of knowledge. As such, there exists a polynomial $q(\cdot)$ such that the unclonable signature of knowledge's extractor can produce a witness $w$ for $x$ with probability at least $1/q(\lambda)$. However, this contradicts the hardness of the distribution $(\cX, \cW)$. Hence, our protocol must have the revocation property.
\end{proof}

\begin{corollary}
    Assuming the polynomial quantum hardness of LWE, injective one-way functions exist, post-quantum iO exists, and the hardness of $\NP$, there exists an unclonable anonymous credentials scheme (\cref{def:uac}).
\end{corollary}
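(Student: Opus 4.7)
The plan is to prove correctness directly from the underlying signature scheme, and to reduce unclonability to the unclonable extractability of the signature of knowledge, ultimately bottoming out at the hardness of the distribution $(\cX, \cW)$. Correctness is immediate: for any honestly generated $(\nym, \sk) = ((\crs, x), (\td, w))$ and any $\access \in \cS_\lambda$, the credential $\cred = \sigma \gets \Sign(\crs, x, w, \access)$ satisfies $\Verify(\crs, x, \access, \sigma) = 1$ with probability $1$ by the correctness property of the unclonable SimExt-secure signature of knowledge (\cref{def:sig}).

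For unclonability, I would argue by contradiction: suppose there exists a polynomial-size quantum adversary $\cA$, access $\access \in \cS_\lambda$, and polynomial $p(\cdot)$ such that, with probability at least $1/p(\lambda)$, $\cA$ takes $(\nym, \cred)$ and outputs $(\cred_0, \cred_1)$ both of which verify under $\VerCred(\nym, \access, \cdot)$. Note that by construction a successful $\cA$ produces two signatures $\sigma_0 = \cred_0$ and $\sigma_1 = \cred_1$ such that $\Verify(\crs, x, \access, \sigma_b) = 1$ for $b \in \{0,1\}$, while having been given only one honestly generated signature $\sigma = \cred$ on the pair $(x, \access)$. This precisely matches the precondition of the $(k-1)\text{-to-}k$ unclonable extractability property of the signature of knowledge (\cref{def:usign}) with $k = 2$, with the singleton query list $\{(x_1, m_1)\} = \{(x, \access)\}$ and target $(x, m) = (x, \access)$.

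The next step is to invoke the unclonable extractor $\cE$ guaranteed by \cref{def:usign}: there must exist a polynomial $q(\cdot)$ such that $\cE^{\cA}(x, \access, x, \access)$ outputs a witness $w^*$ with $(x, w^*) \in \cR$ with probability at least $1/q(\lambda)$. Since $\cA$ runs in polynomial time and this reduction is efficient, I can package the composition of the credential-cloning game and $\cE$ into a single polynomial-size quantum circuit $\cB_\lambda$ that, on input $x \gets \cX_\lambda$, internally samples the crs and the honest signature, simulates $\cA$, and outputs $\cE$'s guess for $w$. This $\cB_\lambda$ then finds a witness for a random $x \gets \cX_\lambda$ with inverse-polynomial probability, contradicting the hardness property of $(\cX, \cW)$ (\cref{def:hard-dist}).

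The main subtlety is making sure the simulated view given to $\cA$ inside the reduction is exactly distributed as the unclonability game of \cref{def:usign}, i.e., that the reduction can honestly run $\Setup$ and $\Sign$ using the witness $w$ sampled jointly with $x$. This is unproblematic here because the reduction $\cB_\lambda$ is given $(x, w)$ from $(\cX_\lambda, \cW_\lambda)$ and so can itself play the role of both the issuer and the unclonability challenger. A minor but worth-checking point is that the credentials verifier $\VerCred$ on $\nym = (\crs, x)$ coincides exactly with the signature verifier $\Verify(\crs, x, \access, \cdot)$, so the event that $\cA$ wins in the credentials game is literally the event required to invoke the signature-level extractor; no additional hybrid or simulation step is needed. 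The only conceptual obstacle is the quantitative bookkeeping between the three success probabilities $1/p(\lambda)$, $1/q(\lambda)$, and the negligibility of winning against $(\cX, \cW)$, but these chain together straightforwardly because all reductions are uniform in the polynomial $p$.
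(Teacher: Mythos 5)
Your proposal follows the same high-level route as the paper: correctness is inherited from the signature of knowledge, and unclonability is argued by observing that a credential cloner is literally a signature cloner for the same $(x,\access)$, triggering the $(k-1)$-to-$k$ unclonable extractability of \cref{def:usign} with $k=2$, which then contradicts hardness of $(\cX,\cW)$. The paper's own proof is a one-line citation of \cref{cor:usig} and \cref{thm:uac}, and the content of your argument is essentially the proof sketch of \cref{thm:uac}.

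However, there is a genuine error in your description of the final reduction. You state that the reduction circuit $\cB_\lambda$ ``is given $(x,w)$ from $(\cX_\lambda,\cW_\lambda)$'' and therefore ``can honestly run $\Setup$ and $\Sign$ using the witness $w$ sampled jointly with $x$.'' But the hardness property in \cref{def:hard-dist} quantifies over circuits $\cA_\lambda$ receiving only $x$ as input; the probability is over $(x,w)\gets(\cX_\lambda,\cW_\lambda)$, yet the adversary never sees $w$. A reduction that needs $w$ to prepare $\cA$'s view is not a valid hardness adversary, and the contradiction does not go through as you have written it. The correct observation is that the extractor $\cE$ from \cref{def:usign} requires \emph{no} witnesses: its inputs are $(\{x_\iota,m_\iota\}_{\iota\in[k-1]},x,m)$ plus oracle access to $\cA$, and it internally prepares $\cA$'s view by simulation (ultimately via the signature scheme's $\Sim$), never touching $w$. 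So $\cB_\lambda(x)$ should simply call $\cE^{\cA}(x,\access,x,\access)$ and output the returned candidate witness; this is a polynomial-size circuit that only depends on $x$, and---together with the Markov/averaging step you gesture at, since \cref{def:usign}'s precondition is for a fixed $(x,w)$ while \cref{def:uac}'s probability is averaged over $(\cX,\cW)$---it yields the desired contradiction. You should also make explicit that the existence of the unclonable signature of knowledge you are feeding into \cref{fig:rac} is supplied by \cref{cor:usig} from the stated assumptions, and that ``hardness of $\NP$'' is what supplies the hard distribution $(\cX,\cW)$ used to instantiate \cref{fig:rac}.
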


\begin{proof}
    This follows from \cref{cor:usig} and \cref{thm:uac}.
\end{proof}

\section{Acknowledgments}
The authors were supported in part by DARPA SIEVE, NSF QIS-2112890, NSF CAREER CNS-2238718, and NSF CNS-2247727.
This material is based on work supported by DARPA under Contract No. 
HR001120C0024. Any opinions, findings and conclusions or recommendations expressed in
this material are those of the author(s) and do not necessarily reflect the views of the United States
Government or DARPA.

\addcontentsline{toc}{section}{References}
\bibliographystyle{alpha}

\bibliography{refs}

\appendix
\section{A Reduction Between  Unclonability Definitions}
\label{app:defs-reduct}

\subsection{In the CRS model}

For completeness, here we repeat the definitions of unclonability.

\begin{definition} (Unclonable Security for Hard Instances).
\label{def:uncnizk-app}
A proof  $(\Setup,\mathsf{Prove},\mathsf{Verify})$ satisfies unclonable security if 
for every language $\mathcal{L}$ with corresponding relation $\cR_\cL$, 
for every polynomial-sized quantum circuit family $\{C_\lambda\}_{\lambda \in \mathbb{N}}$,
and for every hard distribution $\{\mathcal{X}_\lambda,\mathcal{W}_\lambda\}_{\lambda \in \bbN}$ over $\cR_\cL$,
there exists a negligible function $\negl(\cdot)$ such that
for every $\lambda \in \bbN$,
\begin{equation*}
    \Pr_{(x,w) \leftarrow (\mathcal{X}_\lambda,\mathcal{W}_\lambda)}\Bigg[
    \mathsf{Verify}(\crs,x,\pi_1) = 1 \bigwedge 
    \mathsf{Verify}(\crs,x,\pi_2) = 1
    \Bigg|
    \substack{(\crs, \td) \leftarrow \Setup(1^\lambda)\\
    \pi \leftarrow \mathsf{Prove}(\crs,x,w)\\
    \pi_1, \pi_2 \leftarrow C_\lambda(x, \pi)
    }
    \Bigg]
    \leq \negl(\lambda).
\end{equation*}
\end{definition}

\begin{definition}($1$-to-$2$ Unclonable Extractability)
    \label{def:right-app}
    A proof $(\Setup,\mathsf{Prove},\mathsf{Verify})$ satisfies unclonable security 
    there exists a QPT extractor $\mathcal{E}$ which is an oracle-aided circuit such that 
    for every language $\cL$ with corresponding relation $\cR_\cL$ and 
    for every non-uniform polynomial-time quantum adversary $\mathcal{A}$,
    for every instance-witness pair $(x,w) \in \cR_\cL$ 
    and $\lambda = \lambda(|x|)$,
    such that there is a polynomial $p(\cdot)$ satisfying:
    \begin{equation}
    \label{eq:ag}
    \Pr\Bigg[
    \mathsf{Verify}(\crs,x,\pi_1) = 1 \bigwedge 
    \mathsf{Verify}(\crs,x,\pi_2) = 1
    \Bigg|
    \substack{(\crs, \td) \leftarrow \Setup(1^\lambda)\\
    \pi \leftarrow \mathsf{Prove}(\crs,x,w)\\
    \pi_1, \pi_2 \leftarrow \cA_\lambda(\crs, x, \pi, z)
    }
    \Bigg]
    \geq \frac{1}{p(\lambda)},
    \end{equation} 
    there is also a polynomial $q(\cdot)$ such that
    \begin{equation}
        \Pr[(x,w_\cA) \in \cR_\cL| w_\cA \leftarrow \mathcal{E}^{\cA}(x)] \geq \frac{1}{q(\lambda)}.
    \end{equation}
\end{definition}

\begin{claim}
    Any protocol satisfying \cref{def:right-app} also satisfies \cref{def:uncnizk-app}.
\end{claim}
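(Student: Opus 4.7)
The plan is to prove the contrapositive: I will assume that unclonable security (\cref{def:uncnizk-app}) fails for some hard distribution, and then use the extractor promised by \cref{def:right-app} to construct an efficient attacker against the hardness of that distribution, giving a contradiction.

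First, I would suppose there is a language $\cL$ with relation $\cR_\cL$, a hard distribution $(\cX_\lambda,\cW_\lambda)$ over $\cR_\cL$, a poly-sized quantum circuit family $\{C_\lambda\}$, and a polynomial $p(\cdot)$ such that, for infinitely many $\lambda$,
\[
\Pr_{(x,w)\gets(\cX_\lambda,\cW_\lambda)}\!\left[\mathsf{Verify}(\crs,x,\pi_1)=\mathsf{Verify}(\crs,x,\pi_2)=1 \,\Big|\, \substack{(\crs,\td)\gets\Setup(1^\lambda)\\ \pi\gets\mathsf{Prove}(\crs,x,w)\\ (\pi_1,\pi_2)\gets C_\lambda(\crs,x,\pi)}\right] \geq \frac{1}{p(\lambda)}.
\]
By Markov's inequality, at least a $\frac{1}{2p(\lambda)}$ fraction of sampled pairs $(x,w)$ are ``good,'' in the sense that the above event still occurs with probability at least $\frac{1}{2p(\lambda)}$ once $(x,w)$ is fixed.

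Next, for each good pair $(x,w)$ I would apply \cref{def:right-app} with $\cA_\lambda$ instantiated as $C_\lambda$ (and auxiliary input $z=\bot$): the precondition $(\ref{eq:ag})$ is met with polynomial $2p$, so the guaranteed extractor $\cE$ outputs some $w_\cA$ with $(x,w_\cA)\in\cR_\cL$ with probability at least $\frac{1}{q(\lambda)}$ for a polynomial $q$. This naturally suggests the hardness-breaker $\cB_\lambda$: on input $x\sim\cX_\lambda$, sample $(\crs,\td)\gets\Setup(1^\lambda)$, obtain a proof $\pi$ without using $w$, and then return $w_\cA\gets\cE^{C_\lambda(\crs,x,\pi,\cdot)}(x)$. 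Averaging over $(x,w)$, $\cB_\lambda$ outputs a valid witness with probability at least $\frac{1}{2p(\lambda)\cdot q(\lambda)}$, which is non-negligible and contradicts \cref{def:hard-dist}.

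The main obstacle is that $\cB$ must feed $C_\lambda$ an input $(\crs,x,\pi,z)$ drawn from the honest distribution so that the hypothesis of \cref{def:right-app} still applies, yet $\cB$ does not possess the witness $w$ needed to honestly compute $\pi\gets\mathsf{Prove}(\crs,x,w)$. The standard resolution is to invoke the zero-knowledge property of the NIZK: $\cB$ runs the ZK simulator on $x$ to obtain a $\pi$ that is computationally indistinguishable from an honest proof, so $C_\lambda$'s two-accepting-proof probability is preserved up to a negligible loss, and hence the extractor $\cE$ still succeeds with inverse-polynomial probability. The delicate bookkeeping is to check that the union of the Markov loss, the ZK-indistinguishability loss, and the extractor's own loss remains inverse polynomial; since each factor is at most polynomial, the combined success of $\cB_\lambda$ remains $1/\poly(\lambda)$, closing the reduction.
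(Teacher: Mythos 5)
Your high-level plan — argue the contrapositive, apply a Markov/averaging argument to isolate a noticeable-probability ``good'' set of instance--witness pairs, and then use the guaranteed extractor to break hardness — matches the paper's proof. However, the ``main obstacle'' you identify, and the work-around you propose, stem from a misreading of the interface of the extractor in \cref{def:right-app}, and the resulting reduction as written is not correct.

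By the syntax of \cref{def:right-app}, the extractor is invoked as $\cE^{\cA}(x)$: it receives only the instance $x$ and \emph{oracle access to the bare adversary algorithm} $\cA$, which itself takes $(\crs,x,\pi,z)$ as input. The extractor is never handed a CRS, a proof $\pi$, a trapdoor, or a witness; whatever it needs in order to run $\cA$ it must produce on its own, and the definition guarantees that such an $\cE$ \emph{exists}. Consequently, the hardness-breaker is simply ``run $\cE^{C_\lambda}(x)$'' (passing along $C_\lambda$'s auxiliary input as in the paper's $\cE^{\widehat{\cA}}(x,\widehat z)$). There is no need for the reduction to sample $\crs$, manufacture a proof $\pi$, or invoke the zero-knowledge simulator — that is precisely what the extractor is contractually permitted (and, in the actual construction, designed) to do internally. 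Your $\cB$, which samples $(\crs,\td)$ and a simulated $\pi$ up front and then calls $\cE$ with the partially-applied oracle $C_\lambda(\crs,x,\pi,\cdot)$, hands $\cE$ an oracle with the wrong signature (it no longer takes $(\crs,x,\pi,z)$) and freezes exactly the inputs that $\cE$ may need to vary or rewind over. You would also be silently conflating the advantage in \cref{eq:ag} (which is over fresh honest $(\crs,\pi)$) with the advantage of the adversary when these are fixed. None of these issues arise in the paper's argument, which treats $\cE$ as a black box: for any $(x,w)$ in the ``good'' set $S$, the hypothesis of \cref{def:right-app} holds with polynomial $2p$, so $\cE^{C_\lambda}(x)$ outputs a valid witness with probability at least $1/q(\lambda)$; combining with $\Pr[(x,w)\in S]\ge 1/(2p(\lambda))$ gives overall success $\ge 1/(2p(\lambda)q(\lambda))$, contradicting hardness. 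In short: you do not need (and should not introduce) a ZK-simulation hybrid here — the definition's interface already hides the witness from the extractor, and the clean argument is to call $\cE$ directly.
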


\begin{proof}
Suppose there exists a protocol $\Pi = (\Setup, \mathsf{Prove},\mathsf{Verify})$ satisfying \cref{def:right-app}.

Suppose towards a contradiction that $\Pi$ does not satisfy \cref{def:uncnizk-app}. This implies that there is a QPT adversary $\widehat{\mathcal{A}}$, auxiliary input $\widehat{z} = \{\widehat{z}_\lambda\}_{\lambda \in \mathbb{N}}$, a hard distribution $(\cX, \cW)$ over $\cR_\cL$, and a polynomial $p(\cdot)$ such that 
\begin{equation}
\label{eq:hg}
\Pr_{(x,w) \leftarrow (\mathcal{X},\mathcal{W})}\Bigg[
\mathsf{Verify}(\crs,x,\pi_1) = 1 \bigwedge 
\mathsf{Verify}(\crs,x,\pi_2) = 1
\Bigg|
\substack{(\crs, \td) \leftarrow \Setup(1^\lambda)\\
\pi \leftarrow \mathsf{Prove}(\crs,x,w)\\
\pi_1, \pi_2 \leftarrow \widehat{\cA}_\lambda(\crs, x, \pi, \widehat{z})
}
\Bigg]
\geq \frac{1}{p(\lambda)}.
\end{equation}
Let $S$ denote the set of instance-witness pairs $\{(x, w) \in (\cX, \cW)\}$ that satisfy \begin{equation}
\label{eq:ig}
\Pr \Bigg[
\mathsf{Verify}(\crs,x,\pi_1) = 1 \bigwedge 
\mathsf{Verify}(\crs,x,\pi_2) = 1
\Bigg|
\substack{(\crs, \td) \leftarrow \Setup(1^\lambda)\\
\pi \leftarrow \mathsf{Prove}(\crs,x,w)\\
\pi_1, \pi_2 \leftarrow \widehat{\cA}_\lambda(\crs, x, \pi, \widehat{z})
}
\Bigg]
\geq \frac{1}{2p(\lambda)}.
\end{equation}
First, we claim that
\begin{equation}
\label{eq:new}
\Pr_{(x,w) \leftarrow (\mathcal{X},\mathcal{W})} [(x,w) \in S] \geq \frac{1}{2p(\lambda)}
\end{equation}
Suppose not, then by \cref{eq:ig}, 
\begin{equation*}
\Pr_{(x,w) \leftarrow (\mathcal{X},\mathcal{W})}\Bigg[
\mathsf{Verify}(\crs,x,\pi_1) = 1 \bigwedge 
\mathsf{Verify}(\crs,x,\pi_2) = 1
\Bigg|
\substack{(\crs, \td) \leftarrow \Setup(1^\lambda)\\
\pi \leftarrow \mathsf{Prove}(\crs,x,w)\\
\pi_1, \pi_2 \leftarrow \widehat{\cA}_\lambda(\crs, x, \pi, \widehat{z})
}
\Bigg]
< \frac{1}{2p(\lambda)} + \frac{1}{2p(\lambda)}.
\end{equation*}
contradicting \cref{eq:hg}. Thus, \cref{eq:new} must be true.

Consider the extractor $\cE$ guaranteed by \cref{def:right-app}.
Given a sample $(x,w) \leftarrow (\cX,\cW)$, we will show that there is a polynomial $p'(\cdot)$ such that
\begin{equation}
\label{eq:bg}
\Pr_{(x,w) \leftarrow (\cX, \cW)}[\cE^{\widehat{\cA}}(x, \widehat{z}) \in \cR_\cL(x)] \geq \frac{1}{p'(\lambda)}
\end{equation}
which suffices to contradict hardness of the distribution $(\cX,\cW)$, as desired.

Towards showing that \cref{eq:bg} holds,
recall by \cref{def:right-app} that
for every NP instance-witness pair $(x,w)$ such that there is a polynomial $p(\cdot)$ satisfying:
\begin{equation*}
\Pr\Bigg[
\mathsf{Verify}(\crs,x,\pi_1) = 1 \bigwedge 
\mathsf{Verify}(\crs,x,\pi_2) = 1
\Bigg|
\substack{(\crs, \td) \leftarrow \Setup(1^\lambda)\\
\pi \leftarrow \mathsf{Prove}(\crs,x,w)\\
\pi_1, \pi_2 \leftarrow \widehat{A}_\lambda(\crs, x, \pi, \widehat{z})
}
\Bigg]
\geq \frac{1}{p(\lambda)},
\end{equation*} 
there is also a polynomial $q(\cdot)$ such that
\begin{equation*}
    \Pr[R_L(x,w) = 1| w \leftarrow \mathcal{E}^{\widehat{A}}(x, \widehat{z})] \geq \frac{1}{q(\lambda)}
\end{equation*}
This implies that there is a polynomial $q(\cdot)$ such that for every $(x, w) \in S$, 
\begin{equation*}
    \Pr[R_L(x,w) = 1| w \leftarrow \mathcal{E}^{\widehat{A}}(x, \widehat{z})] \geq \frac{1}{q(\lambda)}
\end{equation*}
This, combined with \cref{eq:new} implies that
\begin{equation*}
    \Pr_{(x, w) \leftarrow (\cX, \cW)}[R_L(x,w) = 1| w \leftarrow \mathcal{E}^{\widehat{A}}(x, \widehat{z})] \geq \frac{1}{2p(\lambda)q(\lambda)}
\end{equation*}
which proves \cref{eq:bg} as desired.
\end{proof}

\subsection{In the QRO model}

For completeness, here we repeat the definitions of unclonability.

\begin{definition} (Unclonable Security for Hard Instances).
\label{def:uncnizk-app-rom}
A proof  $(\mathsf{Prove},\mathsf{Verify})$ satisfies unclonable security with respect to a quantum random oracle $\cO$ if 
for every language $\cL$ with corresponding relation $\cR_\cL$, 
for every polynomial-sized quantum oracle-aided circuit family $\{C_\lambda\}_{\lambda \in \mathbb{N}}$,
and for every hard distribution $\{\mathcal{X}_\lambda,\mathcal{W}_\lambda\}_{\lambda \in \bbN}$ over $\cR_\cL$,
there exists a negligible function $\negl(\cdot)$ such that
for every $\lambda \in \bbN$,
\begin{equation*}
    \Pr_{(x,w) \leftarrow (\mathcal{X}_\lambda,\mathcal{W}_\lambda)}\Bigg[
    \mathsf{Verify}^\cO(x,\pi_1) = 1 \bigwedge 
    \mathsf{Verify}^\cO(x,\pi_2) = 1
    \Bigg|
    \substack{
    \pi \leftarrow \mathsf{Prove}^\cO(x,w)\\
    \pi_1, \pi_2 \leftarrow C_\lambda(x, \pi)}
    \Bigg]
    \leq \negl(\lambda).
\end{equation*}
\end{definition}

\begin{definition}($1$-to-$2$ Unclonable Extractability)
    \label{def:right-app-rom}
    A proof $(\mathsf{Prove},\mathsf{Verify})$ satisfies unclonable security with respect to a quantum random oracle $\cO$
    there exists a QPT extractor $\mathcal{E}$ which is an oracle-aided circuit such that 
    for every language $\cL$ with corresponding relation $\cR_\cL$ and 
    for every non-uniform polynomial-time quantum adversary $\mathcal{A}$,
    for every instance-witness pair $(x,w) \in \cR_\cL$ 
    and $\lambda = \lambda(|x|)$,
    such that there is a polynomial $p(\cdot)$ satisfying:
    \begin{equation}
    \label{eq:ag-rom}
    \Pr\Bigg[
    \mathsf{Verify}^\cO(x,\pi_1) = 1 \bigwedge 
    \mathsf{Verify}^\cO(x,\pi_2) = 1
    \Bigg|
    \substack{
    \pi \leftarrow \mathsf{Prove}^\cO(x,w)\\
    \pi_1, \pi_2 \leftarrow \cA_\lambda^\cO(x, \pi, z)
    }
    \Bigg]
    \geq \frac{1}{p(\lambda)},
    \end{equation} 
    there is also a polynomial $q(\cdot)$ such that
    \begin{equation}
        \Pr[(x,w_\cA) \in \cR_\cL| w_\cA \leftarrow \mathcal{E}^{\cA^{\ket{\cO}}}(x)] \geq \frac{1}{q(\lambda)}.
    \end{equation}
\end{definition}

\begin{claim}
    Any protocol satisfying \cref{def:right-app-rom} also satisfies \cref{def:uncnizk-app-rom}.
\end{claim}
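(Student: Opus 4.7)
The plan is to mirror the CRS-model argument essentially verbatim, with the only changes being bookkeeping around the quantum random oracle $\cO$. First I would assume for contradiction that there exists a QPT oracle-aided adversary $\widehat{\cA}$, non-uniform quantum advice $\widehat{z}$, a hard distribution $(\cX,\cW)$ over $\cR_\cL$, and a polynomial $p(\cdot)$ such that
\begin{equation*}
\Pr_{(x,w) \leftarrow (\cX_\lambda,\cW_\lambda)}\Bigg[
\mathsf{Verify}^\cO(x,\pi_1) = 1 \bigwedge
\mathsf{Verify}^\cO(x,\pi_2) = 1
\Bigg|
\substack{
\pi \leftarrow \mathsf{Prove}^\cO(x,w)\\
\pi_1, \pi_2 \leftarrow \widehat{\cA}^\cO_\lambda(x, \pi, \widehat{z})
}
\Bigg]
\geq \frac{1}{p(\lambda)}.
\end{equation*}

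Next, I would define the ``good'' set $S_\lambda$ of instance-witness pairs $(x,w) \in \mathsf{Supp}(\cX_\lambda,\cW_\lambda)$ for which the above conditional cloning probability (with $(x,w)$ fixed) is at least $1/(2p(\lambda))$. A standard averaging argument, identical to the one in the CRS proof, then shows that $\Pr_{(x,w)\gets(\cX_\lambda,\cW_\lambda)}[(x,w)\in S_\lambda] \geq 1/(2p(\lambda))$: otherwise the total cloning probability would be strictly less than $1/(2p(\lambda)) + 1/(2p(\lambda)) = 1/p(\lambda)$, a contradiction.

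Now I would invoke the extractor $\cE$ guaranteed by \cref{def:right-app-rom}. For every $(x,w) \in S_\lambda$, the cloning hypothesis \cref{eq:ag-rom} is met with parameter $2p(\cdot)$, so there is a polynomial $q(\cdot)$ such that $\Pr[(x,w_\cA) \in \cR_\cL \mid w_\cA \leftarrow \cE^{\widehat{\cA}^{\ket{\cO}}}(x,\widehat{z})] \geq 1/q(\lambda)$. Combining with the lower bound on $\Pr[(x,w) \in S_\lambda]$ yields
\begin{equation*}
    \Pr_{(x,w)\gets(\cX_\lambda,\cW_\lambda)}\left[(x,w_\cA) \in \cR_\cL \;\middle|\; w_\cA \leftarrow \cE^{\widehat{\cA}^{\ket{\cO}}}(x,\widehat{z})\right] \geq \frac{1}{2p(\lambda)q(\lambda)}.
\end{equation*}
The oracle-aided QPT machine $x \mapsto \cE^{\widehat{\cA}^{\ket{\cO}}}(x,\widehat{z})$, with the random oracle simulated internally (e.g., via a $2t$-wise independent function where $t$ bounds the total query count, as is standard in the QROM), is then a polynomial-size quantum circuit that, on input $x \leftarrow \cX_\lambda$, produces a valid witness with inverse-polynomial probability. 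This directly contradicts the hardness of $(\cX,\cW)$ per \cref{def:hard-dist}.

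The main obstacle, and essentially the only place this differs from the CRS proof, is making sure the ``reduction'' to the hardness of $(\cX,\cW)$ is indeed a stand-alone polynomial-size quantum circuit, since \cref{def:hard-dist} does not give the adversary oracle access to $\cO$. This is handled by the standard observation that any QPT algorithm making at most $t$ quantum queries to a truly random oracle is perfectly indistinguishable from one making queries to a $2t$-wise independent function (Zhandry), which can be sampled and evaluated in polynomial size; we simply hardcode such a function (together with $\widehat{z}$) into the non-uniform circuit family, so the extraction procedure becomes purely internal and the resulting circuit violates hardness of the distribution as required.
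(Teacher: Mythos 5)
Your proof is correct and follows essentially the same route as the paper's: averaging to isolate a noticeable-measure set $S_\lambda$ of good instance--witness pairs, invoking the per-instance extraction guarantee of \cref{def:right-app-rom} on that set, and combining the two inverse-polynomial bounds to violate hardness of $(\cX,\cW)$. Your final paragraph adds a technical observation the paper's proof leaves implicit: since \cref{def:hard-dist} quantifies over plain polynomial-size quantum circuits with no oracle, the oracle $\cO$ must be replaced by something the reduction can sample and run itself (a $2t$-wise independent function, or equivalently the compressed-oracle simulation the extractor in \cref{def:nizkpok-qro} already performs), which is a worthwhile clarification but not a different approach.
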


\begin{proof}
Suppose there exists a protocol $\Pi = (\mathsf{Prove},\mathsf{Verify})$ satisfying \cref{def:right-app-rom}.

Suppose towards a contradiction that $\Pi$ does not satisfy \cref{def:uncnizk-app-rom}. This implies that there is a QPT adversary $\widehat{\mathcal{A}}$ with oracle access to some quantum random oracle $\cO$, auxiliary input $\widehat{z} = \{\widehat{z}_\lambda\}_{\lambda \in \mathbb{N}}$, a hard distribution $(\cX, \cW)$ over $\cR_\cL$, and a polynomial $p(\cdot)$ such that 
\begin{equation}
\label{eq:hg-rom}
\Pr_{(x,w) \leftarrow (\mathcal{X},\mathcal{W})}\Bigg[
\mathsf{Verify}^\cO(x,\pi_1) = 1 \bigwedge 
\mathsf{Verify}^\cO(x,\pi_2) = 1
\Bigg|
\substack{\pi \leftarrow \mathsf{Prove}^\cO(x,w)\\
\pi_1, \pi_2 \leftarrow \widehat{\cA}_\lambda^\cO(x, \pi, \widehat{z})
}
\Bigg]
\geq \frac{1}{p(\lambda)}.
\end{equation}
Let $S$ denote the set of instance-witness pairs $\{(x, w) \in (\cX, \cW)\}$ that satisfy \begin{equation}
\label{eq:ig-rom}
\Pr \Bigg[
\mathsf{Verify}^\cO(x,\pi_1) = 1 \bigwedge 
\mathsf{Verify}^\cO(x,\pi_2) = 1
\Bigg|
\substack{
\pi \leftarrow \mathsf{Prove}^\cO(x,w)\\
\pi_1, \pi_2 \leftarrow \widehat{\cA}_\lambda^\cO( x, \pi, \widehat{z})
}
\Bigg]
\geq \frac{1}{2p(\lambda)}.
\end{equation}
First, we claim that
\begin{equation}
\label{eq:new-rom}
\Pr_{(x,w) \leftarrow (\mathcal{X},\mathcal{W})} [(x,w) \in S] \geq \frac{1}{2p(\lambda)}
\end{equation}
Suppose not, then by \cref{eq:ig-rom}, 
\begin{equation*}
\Pr_{(x,w) \leftarrow (\mathcal{X},\mathcal{W})}\Bigg[
\mathsf{Verify}^\cO(x,\pi_1) = 1 \bigwedge 
\mathsf{Verify}^\cO(x,\pi_2) = 1
\Bigg|
\substack{
\pi \leftarrow \mathsf{Prove}^\cO(x,w)\\
\pi_1, \pi_2 \leftarrow \widehat{\cA}_\lambda^\cO( x, \pi, \widehat{z})
}
\Bigg]
< \frac{1}{2p(\lambda)} + \frac{1}{2p(\lambda)}.
\end{equation*}
contradicting \cref{eq:hg-rom}. Thus, \cref{eq:new-rom} must be true.

Consider the extractor $\cE$ guaranteed by \cref{def:right-app-rom}.
Given a sample $(x,w) \leftarrow (\cX,\cW)$, we will show that there is a polynomial $p'(\cdot)$ such that
\begin{equation}
\label{eq:bg-rom}
\Pr_{(x,w) \leftarrow (\cX, \cW)}[\cE^{\widehat{\cA}}(x, \widehat{z}) \in \cR_\cL(x)] \geq \frac{1}{p'(\lambda)}
\end{equation}
which suffices to contradict hardness of the distribution $(\cX,\cW)$, as desired.

Towards showing that \cref{eq:bg-rom} holds,
recall by \cref{def:right-app-rom} that
for every NP instance-witness pair $(x,w)$ such that there is a polynomial $p(\cdot)$ satisfying:
\begin{equation*}
\Pr\Bigg[
\mathsf{Verify}^\cO(x,\pi_1) = 1 \bigwedge 
\mathsf{Verify}^\cO(x,\pi_2) = 1
\Bigg|
\substack{
\pi \leftarrow \mathsf{Prove}^\cO(x,w)\\
\pi_1, \pi_2 \leftarrow \widehat{A}_\lambda^{\ket{\cO}}(\crs, x, \pi, \widehat{z})
}
\Bigg]
\geq \frac{1}{p(\lambda)},
\end{equation*} 
there is also a polynomial $q(\cdot)$ such that
\begin{equation*}
    \Pr[R_L(x,w) = 1| w \leftarrow \mathcal{E}^{\widehat{A}}(x, \widehat{z})] \geq \frac{1}{q(\lambda)}
\end{equation*}
This along with \cref{eq:hg-rom} implies that there is a polynomial $q(\cdot)$ such that for every $(x, w) \in S$, 
\begin{equation*}
    \Pr[R_L(x,w) = 1| w \leftarrow \mathcal{E}^{\widehat{A}}(x, \widehat{z})] \geq \frac{1}{q(\lambda)}
\end{equation*}
This, combined with \cref{eq:new-rom} implies that
\begin{equation*}
    \Pr_{(x, w) \leftarrow (\cX, \cW)}[R_L(x,w) = 1| w \leftarrow \mathcal{E}^{\widehat{A}}(x, \widehat{z})] \geq \frac{1}{2p(\lambda)q(\lambda)}
\end{equation*}
which proves \cref{eq:bg-rom} as desired.
\end{proof}

\end{document}